\newtheorem{theorem}{Theorem}
\newtheorem{property}{Property}
\newtheorem{corollary}{Corollary}
\newtheorem{lemma}{Lemma}
\newtheorem{definition}{Definition}
\newcommand{\blackslug}{\mbox{\hskip 1pt \vrule width 4pt height 8pt 
depth 1.5pt \hskip 1pt}}
\newcommand{\qed}{\quad\blackslug\lower 8.5pt\null\par\noindent}
\newenvironment{proof}{\par\noindent{\bf Proof:}}{\qed \par}
\newcommand{\cH}{\mbox{${\cal H}$}}
\newcommand{\cR}{\mbox{${\cal R}$}}
\title{Similarity-Projection structures: 
the logical geometry of Quantum Physics 
\thanks{This work was partially supported 
by the Jean and Helene Alfassa fund for 
research in Artificial Intelligence}
}
\author{Daniel Lehmann\\School of Engineering 
and Center for the Study of Rationality  
\\Hebrew University, \\Jerusalem 91904, Israel 
}
\date{May 2008}
\begin{document}
\maketitle
\begin{abstract}
Similarity-Projection structures abstract the numerical properties of 
real scalar product of rays and projections in Hilbert spaces 
to provide a more general framework for Quantum Physics. They are 
characterized by properties that possess direct physical meaning.
They provide a formal framework that subsumes both classical boolean logic 
concerned with sets and subsets and quantum logic concerned 
with Hilbert space, closed subspaces and projections.
They shed light on the role of the phase factors that are central
to Quantum Physics. The generalization of the notion of a self-adjoint 
operator to SP-structures provides a novel notion that is free of linear
algebra.
Keywords: Similarity-Projection structures, Measurement algebras, 
Quantum Logic.
PACS:  02.10.-v.
\end{abstract}

\section{Introduction} \label{sec:intro}
In~\cite{Whitney:matroids}, H. Whitney abstracted the properties of linear
dependence from the setting of vector spaces. This paper represents a 
similar endeavor to abstract the properties of both linear dependence and
projections on closed subspaces from the vector space structure 
of Hilbert spaces.

A system of Quantum Physics is described by a set $\Omega$ of pure states.
In traditional presentations those pure states are modeled as rays, i.e., 
one-dimensional subspaces, of a Hilbert space. 
The main structure possessed by $\Omega$ is its real scalar product.
Given any two rays $x$, $y$ their real scalar product $p(x, y)$ is a real
number in the interval $[0 , 1]$, customarily described as a {\em transition
probability}. 
This quantity is physically meaningful and can be measured in experiments. 
It seems to be the only physically meaningful quantity: 
the only physical property that can be directly measured. 
The purpose of this paper is to study the properties of this quantity. 

This paper studies the properties of the real scalar product of rays in Hilbert
spaces. Surprisingly, such a study has not been pursued very actively so far. 
An algebraic characterization of the properties of real scalar product of
rays in Hilbert space would be interesting, but is not the primary
goal we are seeking. Some of those properties are not satisfied by the
spaces in which Quantum Physics is done, which include superselection rules.
For example the following is a property of $p$, the real scalar product of
rays that is satisfied in all Hilbert spaces but not when superselection
rules are introduced: for any distinct rays $x$, $y$, there exists a ray
$z$ such that \mbox{$0 < p(x, z) < 1$}. This paper's goal is to propose
a list, as extensive as possible, of properties of the real scalar product
of rays {\em that are physically meaningful and satisfied in all spaces
used by Quantum Physics, including classical systems and superselection rules}.

Phase factors play a central role in the thinking of Quantum physicists.
We shall examine the nature of those phase factors, ask whether they can
be defined in terms of the real scalar product of rays.
We shall see that the cosine of those phase factors are definable in terms
of the real scalar product of rays. We shall consider whether the phase factors
themselves have physical meaning
or whether only some trigonometric function of those phase factors is 
meaningful.

In a previous paper Lehmann, Engesser and Gabbay in~\cite{LEG:Malg}
proposed a qualitative study of projections in Hilbert spaces and proposed
M-algebras as an abstraction of properties of projections in Hilbert
spaces meaningful for Quantum Physics. The present paper builds on this
first effort and shares its philosophy.
This paper is a direct successor of~\cite{Qsuperp:IJTP} which
is a concrete study of Hilbert spaces, but failed to give a proper analysis
of the phase factors and of~\cite{Lehmann_andthen:JLC} which proposes a numberless
analysis of projections on subspaces.
 
\section{Similarity} \label{sec:sim-function}
The Similarity-Projection structures (from now on, SP-structures) 
will be introduced gently and slowly.
We are defining structures that include a non-empty set (the carrier) 
$\Omega$. The elements of $\Omega$ are to be thought of as pure states. 
Elements of $\Omega$ will indeed be called states.

A characteristics of Quantum Physics is that pure states
have a dual aspect: they are both states and questions (i.e, observables).
A state \mbox{$x \in \Omega$} can be understood both as a state as in 
``the system is in state $x$'' and as a question like ``let us measure whether 
the system is in state $x$ or not''. Given two states $x$ and $y$, 
if a system in state $x$ is asked whether it is in $y$, there is, in
Quantum Physics, a certain ``probability'' that the answer will be positive.
Given two pure states $s_{1}$ and $s_{2}$, one can measure the probability 
that one will obtain $s_{2}$ when measuring, in state $s_{1}$, 
whether $s_{2}$ holds or not. 
Think, for example, about the simplest 
of quantum systems: a particle of spin $1/2$, let $s_{1}$ be the state
$\mid + \rangle$ in which the spin is up in the $z$-direction, and $s_{2}$
be the state in which the spin is up in the $x$-direction. 
We know that the measurement of the spin in the $x$-direction will, 
on a system in state $s_{1}$ give the answer {\em up} with probability $1/2$ 
and the answer {\em down} with the same probability. 

The first structural ingredient in the definition of SP-structures is therefore
a real function \mbox{$p: \Omega \times \Omega \longrightarrow \cR$}. 
If $x$ and $y$ are states, the real number
$p(x, y)$ is to be understood as the {\em similarity} of $x$ to $y$, or, in
the language used by physicists, the {\em transition probability} between $x$
and $y$.

\section{Hilbert and classical SP-structures} \label{sec:HC-SP-structures}
We shall now present two paradigmatical examples 
of such similarity functions $p$. 
The first example covers what we shall call Hilbert SP-structures. 
Assume \cH\ is a Hilbert space and $\Omega$ is the set of unit vectors
of \cH. For any \mbox{$\vec{x}, \vec{y} \in \Omega$} define 
\mbox{$p(\vec{x}, \vec{y})$} to
be the real scalar product of $\vec{x}$ and $\vec{y}$:
\mbox{$p(\vec{x}, \vec{y}) =$} 
\mbox{$\mid \langle \vec{x} , \vec{y} \rangle \mid^{2}$}.
Note that we depart from the presentation that dates back at least to von 
Neumann of taking $\Omega$ to be the set of rays, i.e., 
one-dimensional subspaces of \cH. We consider unit vectors, not rays. This
is, in fact, closer to the every day practice of physicists.

The second example consists of an arbitrary set $\Omega$ and a similarity 
function
defined by: \mbox{$p(x, y) = 1$} if \mbox{$x = y$} and \mbox{$p(x, y) = 0$} 
otherwise. We shall call such structures {\em classical} SP-structures.

\section{Symmetry and Non-negativity} \label{sec:first-ass}
We shall now list a number of properties of the similarity $p$ that we want
to assume in any SP-structure. We shall draw some consequences 
of those assumptions as we proceed.

Since we are dealing with structures of the type 
\mbox{$\langle \Omega , p \rangle$}
it is natural to define as equivalent any two elements of $\Omega$ that behave in
exactly the same way as far as $p$ is concerned.

\begin{definition} \label{def:equivalence}
Any two states \mbox{$x, y \in \Omega$} are said to be {\em equivalent}, and
we write \mbox{$x \sim y$} iff for any \mbox{$z \in \Omega$}, one has:
\mbox{$p(x, z) = p(y, z)$}.
\end{definition}
The relation $\sim$ is obviously an equivalence relation.
In classical SP-structures, one has \mbox{$x \sim y$} iff
\mbox{$x = y$}. In Hilbert SP-structures two unit vectors are equivalent
iff they differ by a phase factor.

\subsection{Symmetry} \label{sec:symmetry}
Our first assumption is a symmetry assumption.
\begin{property}[Symmetry] \label{prop:symmetry}
For any \mbox{$x , y \in \Omega$}, \mbox{$p(y, x) = p(x, y)$}. 
\end{property}
Symmetry is an experimentally verifiable and fundamental
property of Quantum Mechanics, see, 
e.g., the Law of Reciprocity in~\cite{Peres:QuantumTheory}, p. 35.
It is satisfied by scalar product of rays. It is also obviously satisfied
in classical SP-structures.

Nevertheless Symmetry may be telling us more
about our intellectual processes, our logic, than about the structure
of the physical world out there. If we accept the idea that states possess 
the dual aspects of {\em states the world is in} and of 
{\em states we can test for} and that two states are linked by the fundamental
\mbox{$p(x, y)$}, rejecting Symmetry would be akin to rejecting the idea that
those dual aspects of states are aspects of the same entity, and imply we are
dealing with two different types of entities. 
%Symmetry is a good reason to be weary of the interpretation of $p$ as a
%probability. The natural interpretation of $p$ as a probability is to
%interpret \mbox{$p(x, y)$} as the conditional probability \mbox{$p(y \mid x)$}
%of passing the test $y$ once one has passed the test $x$. But symmetry is not
%a property of conditional probabilities and requiring it seems inconsistent 
%with probability theory.

\subsection{Non-negativity} \label{sec:non-negativity}
Our second assumption is that the similarity $p$ is nonnegative: 
\begin{property}[Non-negativity] \label{prop:non-negativity}
For any \mbox{$x , y \in \Omega$}, \mbox{$p(x, y) \geq 0$}. 
\end{property}
This is requested by the interpretation of $p$ as
a ``probability'' and is obviously satisfied by the scalar product of rays
and in classical SP-structures. It seems that Non-negativity does not
tell us anything about the physical world but is a logical
requirement following from the way our experiments are built.

Since $0$ has a special meaning, as the smallest possible value for $p$, it
is natural to pay special attention to those pairs $x, y$ for which
\mbox{$p(x, y) = 0$}. Following common usage we shall say that $x$ and $y$
are orthogonal and write \mbox{$x \perp y$} iff \mbox{$p(x, y) = 0$}.
Note that \mbox{$y \perp x$} iff \mbox{$x \perp y$}, by Symmetry. 
Similarly we shall say
that $x$ is orthogonal to a set $A$ of states and write \mbox{$x \perp A$} iff
\mbox{$p(x, A) = 0$}. Note that for any state $x$, \mbox{$x \perp \emptyset$}.
We shall use the notation \mbox{$B \perp A$} to mean: for every
\mbox{$x \in A, y \in B$}, one has: \mbox{$x \perp y$}.

\begin{definition} \label{def:ortho-set}
A set $A$ of states will be called an {\em ortho-set} iff
any two distinct elements of $A$ are orthogonal: for any \mbox{$x, y \in A$}
such that \mbox{$x \neq y$}, one has \mbox{$x \perp y$}.
\end{definition}
Note that the empty set is an ortho-set 
and so is any singleton set.
Ortho-sets play a central role in our analysis. They represent states
that correspond to different values of an observable physical quantity.

We shall now generalize $p$ to accept not a single state, 
but any ortho-set of states as a second argument. 
If \mbox{$x \in \Omega$} and \mbox{$A \subseteq \Omega$} is an ortho-set, 
we define
\[
p(x, A) \: = \: \sum_{y \in A} p(x, y).
\]
The (finite or infinite) sum above is independent of the order of summation.
Note that $p(x, A)$ is either a nonnegative real number or $+ \infty$.

\section{Boundedness, subspaces} \label{sec:boundedness}
We may now introduce our next requirement. 
\begin{property}[Boundedness] \label{prop:boundedness}
For any state \mbox{$x \in \Omega$} and any ortho-set $A$, 
\mbox{$p(x, A) \leq 1$}.
\end{property}
The last inequality should be understood as: 
$p(x, A)$ is finite and at most one. 
Again this is a fundamental property in Quantum Physics. The elements of 
an ortho-set $A$ represent different possible answers to a unique
test. The sum of the ``probabilities'' of obtaining certain answers cannot be
greater than one. Boundedness is satisfied both in Hilbert and in classical
SP-structures. Again, Boundedness seems to be a logical requirement, following
from our interpretation of orthogonal states as corresponding to different
values and of similarity as a transition probability.

\begin{definition} \label{def:subspace}
If $A$ is an ortho-set, the subspace \mbox{$\bar{A} \subseteq \Omega$} 
generated by $A$ is defined by:
\mbox{$\bar{A} \: = \:$}
\mbox{$\{x \in \Omega \mid p(x, A) = 1 \}$}. 
The ortho-set $A$ is said to be a basis for $\bar{A}$. 
A {\em basis} is a basis for $\Omega$.
A subspace is a set of states \mbox{$X \subseteq \Omega$} 
such that there exists
some ortho-set $A$ such that \mbox{$y = \bar{A}$}.
\end{definition}
In classical structures \mbox{$\bar{A} = A$}.

In the following lemma, and throughout this paper we shall assume that 
the structure \mbox{$\langle \Omega , p \rangle$} satisfies 
all the assumptions previously made. 
In Lemma~\ref{le:01}, therefore, $p$ is assumed to satisfy 
Symmetry, Non-negativity and Boundedness.

\begin{lemma}\label{le:01}
Let $A$ be an ortho-set.
For any \mbox{$x \in \Omega$}, \mbox{$p(x, A) \in [0 , 1]$}.
In particular, \mbox{$p(x, y) \in [0 , 1]$} for any \mbox{$y \in \Omega$}.
\end{lemma}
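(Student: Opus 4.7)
The plan is to observe that the statement is essentially an immediate consequence of combining the three properties already imposed on $\langle \Omega, p \rangle$, with only the observation that singleton sets are ortho-sets needed for the ``in particular'' clause.

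First I would address the lower bound: by Non-negativity, each summand $p(x, y)$ in the definition $p(x, A) = \sum_{y \in A} p(x, y)$ is nonnegative, so the (possibly infinite) sum is nonnegative. Next I would invoke Boundedness, which has been explicitly interpreted in the paragraph following Property~\ref{prop:boundedness} as asserting that $p(x, A)$ is finite and at most $1$. Combining these two observations gives $p(x, A) \in [0, 1]$.

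For the ``in particular'' clause, I would observe that for any $y \in \Omega$, the singleton $\{y\}$ is an ortho-set (as noted immediately after Definition~\ref{def:ortho-set}, vacuously, since there are no two distinct elements to check). Applying the first part of the lemma with $A = \{y\}$ yields $p(x, y) = p(x, \{y\}) \in [0, 1]$.

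There is no real obstacle here; the statement essentially repackages Non-negativity and Boundedness in a single form that will be convenient for later reference, and reminds the reader that $p(x,y)$ itself is a special case via singleton ortho-sets. The only ``subtlety'' worth mentioning explicitly is that Boundedness should be read as including finiteness, which the text has already flagged.
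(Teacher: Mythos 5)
Your proof is correct and follows essentially the same route as the paper's: Non-negativity for the lower bound, Boundedness (read as including finiteness) for the upper bound, and the singleton ortho-set observation for the ``in particular'' clause. No issues.
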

\begin{proof}
By Non-negativity, we have \mbox{$p(x, A) \geq$} $0$.
By Boundedness, \mbox{$p(x, A) \leq 1$}.
The singleton \mbox{$\{ y \}$}
is an ortho-set and therefore \mbox{$p(x, y) =$}
\mbox{$p(x, \{ y \}) \in [0 , 1]$}.  
\end{proof}

Any state orthogonal to each of the states of an ortho-set $A$ is orthogonal
to every state in the subspace generated by $A$.
\begin{lemma} \label{le:O-S}
Suppose \mbox{$x \in \Omega$} is a state and \mbox{$A \subseteq \Omega$} 
is an ortho-set such that \mbox{$x \perp A$}. Then,
\mbox{$x \perp \bar{A}$}.
\end{lemma}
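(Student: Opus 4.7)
The plan is to unpack what $x \perp \bar{A}$ means and show that for an arbitrary $y \in \bar{A}$ we have $p(x,y) = 0$. The key idea is to use $x$ itself to augment the given ortho-set $A$ and then apply Boundedness to the enlarged ortho-set together with the state $y$.

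First I would verify that $A \cup \{x\}$ is still an ortho-set. In the main case where $x \notin A$, this is immediate: any two distinct elements of $A$ are orthogonal because $A$ is an ortho-set, and $x$ is orthogonal to every element of $A$ by the hypothesis $x \perp A$, so any two distinct elements of $A \cup \{x\}$ are orthogonal. (The degenerate case $x \in A$ makes $A \cup \{x\} = A$ trivially an ortho-set.)

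Next I would apply Boundedness to the state $y$ and the ortho-set $A \cup \{x\}$, obtaining $p(y, A \cup \{x\}) \leq 1$. When $x \notin A$, the sum defining $p(y, A \cup \{x\})$ splits as $p(y, A) + p(y, x)$. By definition of $\bar{A}$, the hypothesis $y \in \bar{A}$ gives $p(y, A) = 1$, so the inequality collapses to $p(y, x) \leq 0$. Combined with Non-negativity, this forces $p(y, x) = 0$, and Symmetry then yields $p(x, y) = 0$, as desired.

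There is no real obstacle here once one notices the trick of adjoining $x$ to $A$; the whole proof is driven by the tight interaction between Boundedness (which forbids the total mass from exceeding $1$) and the saturation condition $p(y, A) = 1$ built into the definition of $\bar{A}$. The only delicate point is the edge case $x \in A$, where $x \perp A$ entails $p(x,x) = 0$; this case is marginal and can be dispatched by observing that any such $x$ is orthogonal to every state, so in particular to every state in $\bar{A}$. Since $y$ was arbitrary in $\bar{A}$, this proves $x \perp \bar{A}$.
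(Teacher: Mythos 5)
Your main argument is exactly the paper's proof: adjoin $x$ to $A$ to get an ortho-set, apply Boundedness to $y\in\bar{A}$ against $A\cup\{x\}$, and use $p(y,A)=1$ to squeeze $p(y,x)$ down to $0$. One caveat on the edge case you raise: your claim that $p(x,x)=0$ forces $x$ to be orthogonal to every state is not derivable from Symmetry, Non-negativity and Boundedness alone (one can build a three-element structure with $p(x,x)=0$, $A=\{x,a\}$, $p(x,a)=0$, and $p(x,y)=p(a,y)=1/2$ satisfying all three properties, in which the conclusion fails for $y\in\bar{A}$); the case only becomes vacuous once O-Projection is assumed, since then $p(x,x)=1$ and $x\perp A$ rules out $x\in A$. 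The paper silently assumes $x\notin A$, so this does not distinguish your proof from its.
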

\begin{proof}
Since $A$ is an ortho-set and we have \mbox{$p(x, A) =$} $0$, 
the set \mbox{$A \cup \{x\}$} is an ortho-set.
By Boundedness and Symmetry then we have: for any \mbox{$y \in \bar{A}$}
\mbox{$p(y, A) + p(y, x) \leq$} $1$.
But \mbox{$p(y, A) =$} $1$ and therefore \mbox{$p(y, x) =$} $0$.
\end{proof}

\section{O-Projection and consequences} \label{sec:o-projection_cons}
\subsection{O-projection} \label{sec:o-projection}
The next property we want to consider deals with orthogonal projections.
\begin{property}[O-Projection] \label{prop:Projection}
Suppose \mbox{$x \in \Omega$} is a state and \mbox{$A \subseteq \Omega$} 
is an ortho-set such that \mbox{$p(x, A) < 1$}. 
Then there exists a state \mbox{$y \in \Omega$} with the following properties:
\begin{enumerate}
\item \label{perpA} \mbox{$y \perp A$}, i.e., \mbox{$p(y, A) = 0$}, i.e.,
\mbox{$A \cup \{ y \}$} is an ortho-set, and
\item \label{pxAy} \mbox{$p(x, A) + p(x, y) = 1$}.
\end{enumerate}
\end{property}
O-Projection should remind the reader of the Gram-Schmidt process. 
Physically, the ortho-set $A$ represents certain values of a given observable
and therefore can be interpreted as a test: is the state $x$ in $A$ or not.
If \mbox{$p(x, A) <$} $1$ the answer to the question above may, with a certain 
``probability'' be ``no''. If the answer is indeed ``no'' the system is left
in a state $y$ that satisfies the three conditions above. 
The scalar product 
can be seen to satisfy those conditions, when $y$ is the projection of $x$ 
on the subspace $A^{\perp}$ orthogonal to $A$. In a classical system,
 \mbox{$p(x, A) <$} $1$ implies \mbox{$p(x, A) =$} $0$ and we can take
\mbox{$y =$} $x$.
The conditions of O-Projection seem to be logical requirements.

\begin{lemma} \label{le:reflex}
For any states \mbox{$x , y \in \Omega$}, if \mbox{$x \sim y$} then
\mbox{$p(x, y) =$} $1$. In particular, \mbox{$p(x, x) =$} $1$. 
\end{lemma}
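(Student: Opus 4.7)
The plan is to prove the ``in particular'' claim first, namely that $p(x,x) = 1$ for every state $x$, and then to derive the general statement from it in one line using the definition of $\sim$ together with Symmetry.

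For the core fact $p(x,x) = 1$, I would exploit that the singleton $\{x\}$ is always an ortho-set, so $p(x, \{x\}) = p(x,x)$, and this quantity lies in $[0,1]$ by Lemma~\ref{le:01}. Suppose for contradiction that $p(x,x) < 1$. Then O-Projection, applied with the state $x$ and the ortho-set $A = \{x\}$, yields some $y \in \Omega$ with (a) $y \perp x$, i.e., $p(y,x) = 0$, and (b) $p(x,x) + p(x,y) = 1$. By Symmetry, (a) gives $p(x,y) = 0$, so (b) forces $p(x,x) = 1$, contradicting the assumption $p(x,x) < 1$. Therefore $p(x,x) \geq 1$, and combined with Boundedness we conclude $p(x,x) = 1$.

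For the main statement, assume $x \sim y$. Instantiating the definition of $\sim$ at the witness $z = x$ gives $p(y,x) = p(x,x)$. Applying the just-proved equality $p(x,x) = 1$ and then Symmetry yields $p(x,y) = p(y,x) = 1$, as required.

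There is no real obstacle here; the only subtlety is noticing that the hypothesis of O-Projection is exactly what one wants to negate, so the contradiction sets up cleanly against a singleton ortho-set. No additional properties beyond Symmetry, Non-negativity, Boundedness, and O-Projection are needed.
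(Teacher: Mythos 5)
Your proof is correct, and it rests on the same engine as the paper's: apply O-Projection to a singleton ortho-set and observe that the orthogonality of the produced witness kills the second summand, forcing the similarity to equal $1$. The difference is one of decomposition. The paper proves the general statement directly: assuming $x \sim y$ and $p(x,y) < 1$, O-Projection on the ortho-set $\{y\}$ yields $z$ with $p(y,z) = 0$ and $p(x,y) + p(x,z) = 1$, and the full strength of the equivalence is then used to transfer $p(y,z)=0$ to $p(x,z)=0$; the ``in particular'' clause then follows from reflexivity of $\sim$. You invert this order: you first establish $p(x,x)=1$ by running the same O-Projection argument in the special case $y = x$ (where Symmetry alone, rather than the equivalence, gives $p(x,z)=0$ from $z \perp x$), and then the general claim is a one-line instantiation of the definition of $\sim$ at the witness $z = x$ followed by Symmetry. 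Your version isolates the reflexivity fact $p(x,x)=1$ as the real content of the lemma and makes the $x \sim y$ case a triviality, which is arguably a cleaner logical organization; the paper's version avoids the detour through the special case but must invoke the equivalence hypothesis inside the O-Projection argument itself. Both use exactly the properties you list and nothing more.
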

\begin{proof}
\begin{sloppypar}
Since $\{ y \}$ is an ortho-set, if it were the case that
\mbox{$p(x, y) <$} $1$, there would exist, by O-Projection, 
some state $z$ such that
\mbox{$p(y, z) =$} $0$ and \mbox{$p(x, y) + p(x, z) =$} $1$.
But \mbox{$p(x, z) =$} \mbox{$p(y, z) =$} $0$ and we conclude that
\mbox{$p(x, y) =$} $1$.
\end{sloppypar}
\end{proof}

\subsection{Bases: existence and size} \label{sec:bases}
\begin{lemma} \label{le:basis_o}
Let $A$ be some ortho-set and assume that
\mbox{$B \subseteq \bar{A}$} is such that, for every \mbox{$x \in \bar{A}$},
\mbox{$p(x, B) = 1$}, then $B$ is a basis for $\bar{A}$.
\end{lemma}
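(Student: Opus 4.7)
The plan is to verify the two defining features of a basis for $\bar{A}$: that $B$ is an ortho-set, and that $\bar B = \bar A$. The ortho-set condition is actually implicit in the hypothesis, since $p(x, B)$ has been defined only when $B$ is an ortho-set; so what really needs proof is the equality of subspaces.

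One inclusion, $\bar A \subseteq \bar B$, is immediate: by hypothesis every $x \in \bar A$ satisfies $p(x, B) = 1$, which is exactly the definition of $x \in \bar B$.

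The real content is the reverse inclusion $\bar B \subseteq \bar A$. I would take $x \in \bar B$, so $p(x, B) = 1$, and argue by contradiction, supposing $p(x, A) < 1$. Then O-Projection supplies a state $y$ with $y \perp A$ and $p(x, A) + p(x, y) = 1$. Lemma~\ref{le:O-S} upgrades $y \perp A$ to $y \perp \bar A$, and since $B \subseteq \bar A$ this gives $y \perp B$. By Lemma~\ref{le:reflex}, $p(y, y) = 1$, so $y \notin B$, and hence $B \cup \{y\}$ is itself an ortho-set. Boundedness applied to this enlarged ortho-set yields
\[
1 + p(x, y) \; = \; p(x, B) + p(x, y) \; = \; p(x, B \cup \{y\}) \; \leq \; 1,
\]
forcing $p(x, y) = 0$ by Non-negativity. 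But then $p(x, A) = 1 - p(x, y) = 1$, contradicting the assumption $p(x, A) < 1$.

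The main obstacle is precisely this reverse inclusion: nothing in the hypothesis directly relates the ``mass'' $p(x, \cdot)$ distributed over $B$ to the ``mass'' over $A$. The trick is to use O-Projection to convert any hypothetical deficit $1 - p(x,A)$ into an actual orthogonal direction $y$, then use Lemma~\ref{le:O-S} to push orthogonality from $A$ across the whole subspace $\bar A$ (and therefore onto $B$), at which point Boundedness on $B \cup \{y\}$ closes the argument.
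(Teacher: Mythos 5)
Your proof and the paper's are almost disjoint: each establishes the half of the statement that the other takes for granted. The paper's entire proof is devoted to showing that $B$ is an ortho-set --- precisely the point you dismiss as ``implicit in the hypothesis.'' The paper's reading is that $B$ is an arbitrary subset of $\bar{A}$ and the hypothesis concerns the raw sum $\sum_{b \in B} p(x,b)$; the content of the lemma is then that this hypothesis \emph{forces} orthogonality. The argument is short: for distinct $x, y \in B$, since $x \in B \subseteq \bar{A}$ we have $1 = p(x,B) \geq p(x,x) + p(x,y) = 1 + p(x,y)$ by Non-negativity and Lemma~\ref{le:reflex}, hence $p(x,y) = 0$. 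You should include this step rather than appeal to the domain of definition of $p(x,\cdot)$, since under your reading the paper's own proof of this lemma would be vacuous. Conversely, the inclusion $\bar{B} \subseteq \bar{A}$, which you correctly identify as requiring an argument and prove via O-Projection, Lemma~\ref{le:O-S} and Boundedness, is exactly what the paper's proof omits --- presumably viewing it as a consequence of $B \subseteq \bar{A}$ via what later becomes Theorem~\ref{the:subspaces}, even though that theorem depends on Factorization, which is not yet available at this point in the development. Your contradiction argument (a deficit $1 - p(x,A) > 0$ produces, via O-Projection, a state $y$ orthogonal to $A$ and hence to all of $\bar{A} \supseteq B$, after which Boundedness on the ortho-set $B \cup \{y\}$ kills $p(x,y)$) is correct and uses only the properties available here. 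In short, the two proofs are complementary, and a complete proof needs both your reverse-inclusion argument and the paper's ortho-set argument.
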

\begin{proof}
We only need to show that $B$ is an ortho-set.
Let \mbox{$x, y \in B$}, \mbox{$x \neq y$}.
We have \mbox{$p(x, B) = 1$}. But
\mbox{$1 = p(x, B) \geq p(x, x) + p(x, y)$} by Non-negativity. 
But, by Lemma~\ref{le:reflex}, \mbox{$p(x, x) = 1$} and we have
\mbox{$p(x, y) = 0$}, by Non-negativity.
\end{proof}

\begin{theorem} \label{the:basis}
Let $A$ be an ortho-set.
Then there is a basis $B$ such that \mbox{$A \subseteq B$}.
\end{theorem}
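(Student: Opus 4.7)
The plan is to extend $A$ to a maximal ortho-set via Zorn's lemma and then verify that maximality forces it to be a basis.

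First I would consider the collection $\mathcal{F}$ of all ortho-sets $B \subseteq \Omega$ with $A \subseteq B$, ordered by inclusion. This collection is nonempty since $A \in \mathcal{F}$. Given a chain $\{B_i\}_{i \in I}$ in $\mathcal{F}$, I would check that $\bigcup_i B_i$ is again an ortho-set: any two distinct elements $x, y$ of the union lie in a common $B_i$ (since the family is a chain), and $B_i$ being an ortho-set gives $x \perp y$. It clearly contains $A$, so it is an upper bound in $\mathcal{F}$. Zorn's lemma then yields a maximal element $B \in \mathcal{F}$.

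Next I would show $B$ is a basis, i.e., $p(x, B) = 1$ for every $x \in \Omega$. Suppose for contradiction some $x \in \Omega$ satisfies $p(x, B) < 1$. Since $B$ is an ortho-set, O-Projection (Property~\ref{prop:Projection}) produces a state $y \in \Omega$ with $y \perp B$ and $p(x, B) + p(x, y) = 1$. The key small observation is that $y \notin B$: if $y$ were in $B$, then $y \perp B$ would give $p(y, y) = 0$, contradicting Lemma~\ref{le:reflex} which gives $p(y, y) = 1$. Hence $B \cup \{y\}$ is a strictly larger ortho-set containing $A$, contradicting the maximality of $B$. Therefore $p(x, B) = 1$ for all $x \in \Omega$, so $B$ is a basis containing $A$.

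I do not expect any real obstacle here; the only subtlety is justifying $y \notin B$, which uses Lemma~\ref{le:reflex}, and noting that the use of Zorn's lemma requires verifying that unions of chains of ortho-sets remain ortho-sets, which is immediate from the chain property.
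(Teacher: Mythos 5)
Your proof is correct. It takes the standard alternative route to the paper's: where the paper builds a transfinite chain of ortho-sets $B_\alpha$ by ordinal recursion (adding one state at a time via O-Projection and stopping when the chain stabilizes), you invoke Zorn's lemma on the poset of ortho-sets containing $A$ and then show that a maximal element must be a basis. The two arguments are equivalent in strength -- both rest on the axiom of choice and both use O-Projection for the single substantive step, namely that an ortho-set $B$ with \mbox{$p(x,B) < 1$} for some $x$ can be properly enlarged. Your version is arguably tidier on two points the paper glosses over: you explicitly verify that the state $y$ produced by O-Projection is not already in $B$ (via \mbox{$p(y,y)=1$} from Lemma~\ref{le:reflex}, so $y \perp B$ forces $y \notin B$), which is exactly what makes the extension strict; and Zorn's lemma packages the termination argument that the paper leaves implicit in the assertion that the ordinal chain must eventually satisfy \mbox{$B_{\beta+1} = B_\beta$}. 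What the paper's explicit recursion buys in exchange is a constructive-looking flavor that it reuses verbatim later (e.g., in the proof of Theorem~\ref{the:intersection}), but nothing essential is lost in your formulation.
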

\begin{proof}
By ordinal induction, we define an ortho-set set 
\mbox{$B_{\alpha} \subseteq \Omega$} for every ordinal $\alpha$. 
We let \mbox{$B_{0} = A$}. For a limit ordinal $\alpha$
we set \mbox{$B_{\alpha} =$} \mbox{$\bigcup_{\beta < \alpha} B_{\beta}$}. 
For any successor ordinal \mbox{$\alpha + 1$}, if $B_{\alpha}$ is a basis we
set \mbox{$B_{\alpha + 1} =$} \mbox{$B_{\alpha}$}, 
and if $B_{\alpha}$ is not a basis,
we consider some state \mbox{$x \in X$}
such that \mbox{$p(x, B_{\alpha}) < 1$} and we
set \mbox{$B_{\alpha + 1} =$} \mbox{$B_{\alpha} \cup \{ y \}$}, where
\mbox{$y \in \Omega$} is one of the states the existence of which is 
guaranteed by O-Projection.
Clearly we have a chain of ortho-sets and 
there is some ordinal $\beta$ for which \mbox{$B_{\beta + 1} =$} 
\mbox{$B_{\beta}$}.
The set $B_{\alpha}$ is a basis.
\end{proof}

It is a striking property of Hilbert spaces that any two bases have the 
same cardinality. The same holds in SP-structures.
\begin{theorem} \label{the:matroid}
Let $A$, $B$ be orthosets such that \mbox{$B \subseteq \bar{A}$} 
and assume $A$ is finite. 
Then $B$ is finite and \mbox{$\mid B \mid \leq \mid A \mid$}.
An SP-structure that admits a finite basis, will be called 
{\em finite-dimensional} and its dimension is the (common) size of its bases.
\end{theorem}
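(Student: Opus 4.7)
The plan is to prove this by a double-counting argument that exploits Symmetry and Boundedness, together with the observation that $A$ is actually a basis of $\bar{A}$. First I would note that for $a \in A$, Lemma~\ref{le:reflex} gives $p(a,a) = 1$ while the ortho-set condition gives $p(a,a') = 0$ for $a' \neq a$ in $A$; hence $p(a,A) = 1$, so $A \subseteq \bar{A}$, and by Lemma~\ref{le:basis_o} the set $A$ is a basis for $\bar{A}$.

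Next I would fix an arbitrary finite subset $B' \subseteq B$ and evaluate the double sum $\sum_{b \in B'} p(b, A) = \sum_{b \in B'} \sum_{a \in A} p(b,a)$ in two ways. On the one hand, since each $b \in B' \subseteq \bar{A}$ satisfies $p(b,A) = 1$, this sum equals $|B'|$. On the other hand, because both $B'$ and $A$ are finite, I can interchange the order of summation and apply Symmetry to obtain $\sum_{a \in A} \sum_{b \in B'} p(a,b) = \sum_{a \in A} p(a, B')$. Since $B'$ is a finite ortho-set (being a subset of the ortho-set $B$), Boundedness yields $p(a, B') \leq 1$ for each $a \in A$, and the whole sum is bounded above by $|A|$. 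Combining the two evaluations gives $|B'| \leq |A|$.

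Since every finite subset of $B$ has cardinality at most $|A|$, the set $B$ itself must be finite with $|B| \leq |A|$, which is precisely what is claimed. I do not anticipate a real obstacle here; the only delicate point is the finiteness needed to justify swapping the order of summation without worrying about convergence, and this is automatic once one has restricted attention to a finite $B' \subseteq B$ and then taken the supremum over such $B'$.
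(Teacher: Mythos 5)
Your proof is correct and follows essentially the same double-counting argument as the paper: evaluate $\sum_{b}\sum_{a} p(a,b)$ in two ways, using $p(b,A)=1$ for $b \in \bar{A}$ on one side and Boundedness of $p(a,B)$ (with $B$ an ortho-set) on the other. Your restriction to finite subsets $B' \subseteq B$ before interchanging the sums is a slightly more careful treatment of a possibly infinite $B$ than the paper's direct swap (which is anyway justified by non-negativity of the terms), but the underlying idea is identical.
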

\begin{proof}
We have \mbox{$\sum_{b \in B} \sum_{a \in A} p(a, b) =$}
\mbox{$\sum_{b \in B} 1 =$} \mbox{$\mid B \mid$}.
But \mbox{$\sum_{a \in A} \sum_{b \in B} p(a, b) \leq$}
\mbox{$\sum_{a \in A} 1 =$} \mbox{$\mid A \mid$}.
We conclude that \mbox{$\mid B \mid \leq \mid A \mid$}.
\end{proof}

We may now define a natural operation on subspaces: orthogonal complement.
\begin{theorem} \label{the:comp}
Let $X$ be any subspace. The set \mbox{$X^{\perp} \: = \:$}
\mbox{$\{ x \in \Omega \mid x \perp X \}$} is a subspace and 
\mbox{$X \: = \:$} \mbox{$(X^{\perp})^{\perp}$}.
\end{theorem}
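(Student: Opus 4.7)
The plan is to exhibit an explicit basis for $X^{\perp}$ by extending a basis of $X$ to a basis of $\Omega$, and then read both claims off this decomposition.

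First I would invoke the definition of subspace to pick an ortho-set $A$ with $X = \bar{A}$, and then apply Theorem~\ref{the:basis} to extend $A$ to a basis $B$ of $\Omega$. Set $C = B \setminus A$. Since $C \subseteq B$ and $B$ is an ortho-set, $C$ is an ortho-set, so $\bar{C}$ is a subspace. The heart of the argument is to prove $X^{\perp} = \bar{C}$.

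For $\bar{C} \subseteq X^{\perp}$: each $a \in A$ is orthogonal to every element of $C$ (again because $B$ is an ortho-set), so by Lemma~\ref{le:O-S}, $a \perp \bar{C}$. Thus every $y \in \bar{C}$ satisfies $p(y, a) = 0$ for all $a \in A$, hence $y \perp A$, and one more application of Lemma~\ref{le:O-S} gives $y \perp \bar{A} = X$, i.e. $y \in X^{\perp}$. For the reverse inclusion, take $y \in X^{\perp}$, so in particular $y \perp A$, i.e. $p(y, A) = 0$. Since $B$ is a basis of $\Omega$ we have $y \in \bar{B}$, that is $p(y, B) = 1$. The ortho-set sum splits as $p(y, B) = p(y, A) + p(y, C)$, so $p(y, C) = 1$ and $y \in \bar{C}$. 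This establishes that $X^{\perp}$ is a subspace.

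For the double-complement identity, the inclusion $X \subseteq (X^{\perp})^{\perp}$ is immediate from Symmetry: any $x \in X$ satisfies $p(x, y) = 0$ for every $y \in X^{\perp}$ by the very definition of $X^{\perp}$. For the converse, take $z \in (X^{\perp})^{\perp}$. Since $X^{\perp} = \bar{C}$ and $C \subseteq \bar{C}$, we get $z \perp C$, hence $p(z, C) = 0$. But $z \in \Omega = \bar{B}$, so as before $1 = p(z, B) = p(z, A) + p(z, C) = p(z, A)$, which means $z \in \bar{A} = X$.

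The only delicate step is the decomposition $p(y, B) = p(y, A) + p(y, C)$; this is just the fact that the sum defining $p(y, B)$ is absolutely convergent (by Boundedness) and $B$ is the disjoint union $A \sqcup C$, so it reorders without issue. Everything else is bookkeeping built on Lemma~\ref{le:O-S}, Lemma~\ref{le:reflex}, and the construction of $B$ via Theorem~\ref{the:basis}.
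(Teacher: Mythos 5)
Your proof is correct and follows essentially the same route as the paper's: extend a basis $A$ of $X$ to a basis of $\Omega$, identify $X^{\perp}$ as the subspace generated by the complementary ortho-set via Lemma~\ref{le:O-S} and the additivity of $p(\cdot,B)$ over the disjoint union, and then read off $(X^{\perp})^{\perp}=\bar{A}=X$. Your write-up is merely more explicit (e.g., in checking $\bar{C}\subseteq X^{\perp}$ rather than just $C\subseteq X^{\perp}$), which is a detail the paper's terser proof leaves to the reader.
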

\begin{proof}
Let $A$ be a basis for $X$. Complete $A$ to a basis \mbox{$A \cup B$},
with $B$ an ortho-set orthogonal to $A$. We shall show that $B$ is a basis for
$X^{\perp}$. First, \mbox{$B \subseteq X^{\perp}$} by Lemma~\ref{le:O-S}.
But, for any state $x$ of $X^{\perp}$, \mbox{$p(x , A) + p(x , B) =$} $1$ and
\mbox{$p(x , A) =$} $0$. One sees that $A$ is a basis for $(X^{\perp})^\perp$.
\end{proof}

\subsection{Projections on subspaces} \label{sec:proj}
The following defines projections on subspaces.
\begin{lemma} \label{le:proj}
If $x$ is a state and $A$ is an ortho-set, such that 
\mbox{$p(x, A) > 0$}
there is a state $y$, such that:
\begin{enumerate}
\item \mbox{$y \in \bar{A}$}, and
\item \mbox{$p(x, y) =$} \mbox{$p(x, A)$}.
\end{enumerate}
\end{lemma}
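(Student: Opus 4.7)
The plan is to split on whether $p(x,A) = 1$ or $p(x,A) < 1$, using a completion to a basis to force the candidate state into $\bar{A}$.

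First, I would dispose of the easy case $p(x,A) = 1$. Here $x$ lies in $\bar{A}$ by definition, and by Lemma~\ref{le:reflex} we have $p(x,x) = 1 = p(x,A)$, so $y \eqdef x$ works. Note the assumption $p(x,A) > 0$ is not yet used; it will be essential only in the main case.

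For the main case $p(x,A) < 1$, the plan is to extend $A$ to a basis of all of $\Omega$ and apply O-Projection to the complementary part. Concretely, by Theorem~\ref{the:basis} there is a basis $B \supseteq A$; set $C \eqdef B \setminus A$. Since $B$ is an ortho-set, every element of $C$ is orthogonal to $A$, so $C \perp A$, and because $B$ is a basis we have $p(x,A) + p(x,C) = p(x,B) = 1$. Thus $p(x,C) = 1 - p(x,A)$, which lies strictly between $0$ and $1$: strictly less than $1$ because $p(x,A) > 0$, and strictly greater than $0$ because $p(x,A) < 1$. Therefore O-Projection applies to the ortho-set $C$ and the state $x$, yielding $y \in \Omega$ with $y \perp C$ and $p(x,C) + p(x,y) = 1$, whence $p(x,y) = p(x,A)$ as required.

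The last thing to check, and the only nontrivial point, is that this $y$ actually lies in $\bar{A}$ (O-Projection only guarantees $y \perp C$, not membership in $\bar{A}$). Since $A \cup C = B$ is a basis, we have $p(y, A) + p(y, C) = p(y, B) = 1$; but $y \perp C$ gives $p(y,C) = 0$, so $p(y, A) = 1$, i.e., $y \in \bar{A}$. The main obstacle I anticipated was exactly this step: ensuring the witness produced by O-Projection is in the right subspace, and the trick of completing $A$ to a \emph{full} basis (rather than just to $A \cup \{z\}$ with a single orthogonal witness) is what makes the orthogonality of $y$ to $C$ strong enough to pin $y$ down inside $\bar{A}$.
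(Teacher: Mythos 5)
Your proof is correct and follows essentially the same route as the paper's: complete $A$ to a basis \mbox{$B = A \cup C$} with \mbox{$C \perp A$}, apply O-Projection to $C$, and deduce \mbox{$y \in \bar{A}$} from \mbox{$p(y,A)+p(y,C)=1$} together with \mbox{$p(y,C)=0$}. The only difference is your initial case split on \mbox{$p(x,A)=1$}, which is unnecessary: O-Projection applied to $C$ only requires \mbox{$p(x,C)<1$}, and that already follows from \mbox{$p(x,A)>0$} alone, so the paper handles both cases uniformly.
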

\begin{proof}
Let $B$ be a basis such that \mbox{$B =$} \mbox{$A \cup C$} with 
\mbox{$C \perp A$}. The existence of such a basis follows from 
Theorem~\ref{the:basis}. The set $C$ is an ortho-set and
\mbox{$p(x, A) + p(x, C) = 1$}.
Since \mbox{$p(x, A) > 0$}, we have \mbox{$p(x, C) < 1$}, and, by O-Projection
there is a state $y$ such that  
\mbox{$y \perp C$}, \mbox{$p(x, C) + p(x, y) = 1$}.
But \mbox{$p(y, A) + p(y, C) = 1$} and \mbox{$p(y, C) = 0$}.
Therefore \mbox{$p(y, A) = 1$}. 
Also \mbox{$p(x, y) =$} \mbox{$1 - p(x, C)$} \mbox{$p(x, A)$}.
\end{proof}

The reader may wonder about the case \mbox{$p(x, A) = 0$}. 
In this case, by Lemma~\ref{le:O-S}, every state $y$ such that 
\mbox{$p(y, A) = 1$} satisfies the condition required, i.e.,
\mbox{$p(x, y) = 0$}.

\section{Factorization and Consequences} \label{sec:factorization}
Our next defining property for SP-structure is a factorization property.

\begin{property}[Factorization] \label{prop:Factorization}
Let $A$ be an ortho-set and $x$ an arbitrary state.
If \mbox{$y, z \in \bar{A}$} and \mbox{$p(x, y) = p(x, A)$}, then
\mbox{$p(x, z) = p(x, y) \, p(y, z)$}.
\end{property}
Factorization implies that $p(x, A)$ is the maximum of all $p(x, y)$ for
\mbox{$y \in \bar{A}$} and that every such $p(x, y)$ can be factored out 
through the state taking this maximum.
Factorization has been described in 
Theorem~1 of~\cite{Qsuperp:IJTP}.
The meaning of Factorization, for Physics, is that, if one knows that
in state $y$ some observable $A$ has a specific value,
then the probability of a transition from $x$ to $y$ is the product
of the probability of measuring this specific value (in $x$) 
times the transition probability from the state
obtained after the measurement to $y$.
Factorization seems to be a logical requirement relating tests 
to two propositions one of which entails the other: if $A$ entails $B$, testing
for $A$ may be done by testing first for $B$ and then for $A$.

\begin{theorem} \label{the:max}
\begin{sloppypar}
For any state \mbox{$x \in \Omega$} and any ortho-set $A$, \mbox{$p(x, A) =$}
\mbox{$\max(\{ p(x, y) \mid y \in \bar{A} \}$}.
Therefore if $B$ is an ortho-set such that
\mbox{$\bar{B} =$} \mbox{$\bar{A}$}, one has 
\mbox{$p(x, A) =$} \mbox{$p(x, B)$}.
From now on, if $X$ is a subspace we shall allow ourselves the use
of the notation \mbox{$p(x, X)$}.
Also, if $X$ and $Y$ are subspaces such that \mbox{$X \subseteq Y$}, then,
for any \mbox{$x \in \Omega$}, one has
\mbox{$p(x , X) = p(x , Y) \, p(t(x , Y) , X)$}.
\end{sloppypar}
\end{theorem}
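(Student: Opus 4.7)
The plan is to prove the three parts of Theorem~\ref{the:max} in order, with the first part (the max characterization of $p(x,A)$) doing most of the work; the other two parts will then follow almost immediately. Throughout, the notation $t(x,Y)$ will be the state in $Y$ whose existence is guaranteed by Lemma~\ref{le:proj}, i.e., a state $y \in Y$ with $p(x,y) = p(x,Y)$ (when $p(x,Y)>0$; when $p(x,Y)=0$ any $y \in Y$ works, by Lemma~\ref{le:O-S}).

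For the first claim, I would split into two cases according to whether $p(x,A)=0$ or $p(x,A) > 0$. If $p(x,A) = 0$, then $x \perp A$, so by Lemma~\ref{le:O-S}, $p(x,y) = 0$ for every $y \in \bar{A}$, and the max is attained by any such $y$. If $p(x,A) > 0$, Lemma~\ref{le:proj} produces a state $y_0 \in \bar{A}$ with $p(x, y_0) = p(x,A)$, which establishes that this value is attained. It remains to show it really is a maximum: for any other $z \in \bar{A}$, Factorization applied with the ortho-set $A$ and the distinguished state $y_0$ (which satisfies the hypothesis $p(x,y_0) = p(x,A)$) gives
\[
p(x,z) \;=\; p(x, y_0)\, p(y_0, z) \;=\; p(x,A)\, p(y_0, z) \;\leq\; p(x,A),
\]
since $p(y_0, z) \in [0,1]$ by Lemma~\ref{le:01}. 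This is the one step where care is needed, since Factorization has to be invoked in the direction that produces the desired inequality.

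Given the first claim, the second claim is essentially free: if $\bar{A} = \bar{B}$ then
\[
p(x,A) \;=\; \max\{p(x,y) \mid y \in \bar{A}\} \;=\; \max\{p(x,y) \mid y \in \bar{B}\} \;=\; p(x,B),
\]
so $p(x,X)$ is well-defined for a subspace $X$, independently of any chosen basis.

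For the third claim, let $A$ be a basis for $X$, so $X = \bar{A}$, and set $y = t(x,Y) \in Y$. Suppose first $p(x,Y) > 0$. For any $z \in X \subseteq Y$, we have $y, z \in Y = \bar{B}$ for some basis $B$ of $Y$, with $p(x,y) = p(x,Y) = p(x,B)$ (using claim~1 and claim~2). Factorization applied to $B$ and $y$ yields $p(x,z) = p(x,y)\, p(y,z) = p(x,Y)\, p(y,z)$. Taking the maximum over $z \in X$ and invoking the first claim on both sides gives
\[
p(x,X) \;=\; p(x,Y)\, \max_{z \in X} p(y,z) \;=\; p(x,Y)\, p(y,X) \;=\; p(x,Y)\, p(t(x,Y), X).
\]
If instead $p(x,Y) = 0$, then $x \perp Y$ and hence $x \perp X$ since $X \subseteq Y$, so both sides are $0$. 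The main obstacle is really the first claim, whose upper bound required a somewhat counterintuitive application of Factorization starting from a specific maximizer; once that is in hand the rest of the theorem unwinds cleanly.
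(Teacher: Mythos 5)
Your proposal is correct and follows essentially the same route as the paper: obtain a maximizer $y_0\in\bar{A}$ from Lemma~\ref{le:proj}, apply Factorization to get $p(x,z)=p(x,y_0)\,p(y_0,z)\leq p(x,A)$ for all $z\in\bar{A}$, and derive the remaining claims from this max characterization. Your explicit treatment of the degenerate cases $p(x,A)=0$ and $p(x,Y)=0$ fills in details the paper leaves to the reader (the paper's proof ends with ``the remainder follows easily''), but introduces no new ideas.
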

In the last equation note that in the case $t(x , Y)$ is not defined, we have
\mbox{$p(x , Y) = 0$} and therefore we consider the product 
on the right hand side
of the last equation to be defined and equal to zero.
\begin{proof}
By Lemma~\ref{le:proj} there is some \mbox{$z \in \bar{A}$} such that
\mbox{$p(x, A) =$} \mbox{$p(x, z)$} and by Factorization we have, for every
\mbox{$y \in \bar{A}$}, \mbox{$p(x, y) =$}
\mbox{$p(x, z) \, p(z, y) \leq$}
\mbox{$p(x, z)$}.
The remainder follows easily.
\end{proof}

The Factorization property has many consequences that will be presented now.
The first one concerns the relation of equivalence between states.

\subsection{Similarity and Equivalence} \label{sec:simandequiv}
\begin{lemma} \label{le:rho_alpha_zero}
If \mbox{$a \perp b$}, \mbox{$x \perp a$} and
\mbox{$p(y, a) + p(y, b) = 1$}, then we have \mbox{$p(x, y) =$} 
\mbox{$p(x, b) \, p(y, b)$}.
\end{lemma}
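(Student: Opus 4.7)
The plan is to set $A = \{a, b\}$ and apply Factorization with $b$ in the role of the maximizer. First I would verify the prerequisites. Since $a \perp b$, the set $A = \{a, b\}$ is an ortho-set. Since $p(y, a) + p(y, b) = 1$, by definition $p(y, A) = 1$, so $y \in \bar{A}$. Likewise $b \in \bar{A}$, since $p(b, b) = 1$ by Lemma~\ref{le:reflex}, which forces $p(b, A) = 1$ (boundedness plus nonnegativity).

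Next I would compute $p(x, A)$ using $x \perp a$: we have $p(x, A) = p(x, a) + p(x, b) = 0 + p(x, b) = p(x, b)$. Thus $b$ is a state in $\bar{A}$ whose similarity to $x$ equals $p(x, A)$, i.e., $b$ attains the maximum in Theorem~\ref{the:max}.

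At this point Factorization does the work directly. Applying Property~\ref{prop:Factorization} to the ortho-set $A$, the state $x$, and the pair $b, y \in \bar{A}$, with $b$ playing the role of the element satisfying $p(x, b) = p(x, A)$, gives $p(x, y) = p(x, b)\, p(b, y)$. Finally, Symmetry converts $p(b, y)$ into $p(y, b)$, yielding the desired identity $p(x, y) = p(x, b)\, p(y, b)$.

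I do not expect any real obstacle here: the entire content is choosing the right ortho-set and noticing that the hypothesis $x \perp a$ is exactly what is needed to make $b$ the maximizer of $p(x, \cdot)$ on $\bar{A}$, so that Factorization applies. The only small point to be careful about is justifying $b \in \bar{A}$, but this follows immediately from Lemma~\ref{le:reflex} together with Boundedness.
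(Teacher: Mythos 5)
Your proposal is correct and follows essentially the same route as the paper: take the ortho-set $A=\{a,b\}$, observe $y,b\in\bar{A}$ and $p(x,A)=p(x,b)$ because $x\perp a$, and apply Factorization with $b$ as the state attaining $p(x,A)$, finishing with Symmetry. The only difference is that the paper first disposes of the case $p(x,b)=0$ separately via Lemma~\ref{le:O-S}, whereas you apply Property~\ref{prop:Factorization} uniformly; this is legitimate since Factorization as stated imposes no positivity hypothesis, so no gap results.
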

\begin{proof}
Let \mbox{$A = \{ a , b \}$}. The set $A$ is an ortho-set.
If \mbox{$p(x, b) = 0$}, \mbox{$p(x, A) = 0$} and, by Lemma~\ref{le:O-S},
\mbox{$p(x, y) = 0$} and the claim is proved.

Assume, then, that \mbox{$p(x, b) > 0$}. Note that \mbox{$p(b, A) = 1$} and
\mbox{$p(x, A) = p(x, b)$}. By Factorization, then
\mbox{$p(x, y) =$} \mbox{$p(x, b) \, p(b, y)$}.
\end{proof}

\begin{theorem} \label{the:equiv}
Any states \mbox{$x, y \in \Omega$} are equivalent, i.e., 
\mbox{$x \sim y$}, iff
\mbox{$p(x, y) = 1$}. 
\end{theorem}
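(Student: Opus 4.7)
The proof splits into the two implications. The forward direction, $x\sim y \Rightarrow p(x,y)=1$, has already been handled: it is exactly Lemma~\ref{le:reflex}, since setting $z:=y$ in the definition of $\sim$ yields $p(x,y)=p(y,y)=1$.

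For the converse, assume $p(x,y)=1$ and let $z\in\Omega$ be arbitrary. I will manufacture an ortho-set $A$ such that $y\in\bar A$, $p(x,A)=p(x,y)$, and $z\in\bar A$, so that Factorization gives $p(x,z)=p(x,y)\,p(y,z)=p(y,z)$ in one stroke. The natural candidate is a basis containing $y$: by Theorem~\ref{the:basis}, extend the ortho-set $\{y\}$ to a basis $A=\{y\}\cup C$, with $C\perp y$. Because $A$ is a basis, $\bar A=\Omega$, so automatically $z\in\bar A$ for every choice of $z$.

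The crucial intermediate step is to verify $p(x,A)=p(x,y)=1$, equivalently $p(x,C)=0$. Since $A$ is a basis, Boundedness gives $p(x,A)\le1$, i.e.\ $p(x,y)+p(x,C)\le1$; combined with $p(x,y)=1$ and Non-negativity, we conclude $p(x,C)=0$. (Alternatively, one invokes Lemma~\ref{le:O-S} after noting that $\{x\}\cup C$ is an ortho-set.) Now Factorization applied to $A=\{y\}\cup C$, with the distinguished state $y\in\bar A$ witnessing $p(x,y)=p(x,A)$, yields for every $z\in\bar A=\Omega$:
\[
 p(x,z) \;=\; p(x,y)\,p(y,z) \;=\; p(y,z),
\]
which is exactly the statement $x\sim y$.

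I don't anticipate a real obstacle here; the only thing to be careful about is that Factorization requires the hypothesis $p(x,y)=p(x,A)$, so the plan hinges on choosing $A$ to include $y$ and on showing $p(x,C)=0$. Both points are immediate from Boundedness once $\{y\}$ has been completed to a basis, so the proof should come out to just a few lines.
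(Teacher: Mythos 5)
Your proof is correct, but the converse direction follows a genuinely different route from the paper's. The paper fixes $z$, assumes (essentially without loss of generality) that $p(z,x)<1$, uses O-Projection to produce a single state $x'\perp x$ with $p(z,x)+p(z,x')=1$, derives $p(y,x')=0$ from Boundedness, and then invokes Lemma~\ref{le:rho_alpha_zero} --- a two-element factorization lemma on the ortho-set $\{x',x\}$ --- to get $p(y,z)=p(y,x)\,p(z,x)=p(z,x)$. You instead complete $\{y\}$ to a global basis $A=\{y\}\cup C$ via Theorem~\ref{the:basis}, show $p(x,C)=0$ by Boundedness and Non-negativity, and apply Factorization once with $\bar A=\Omega$ to handle every $z$ simultaneously. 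Your version is arguably cleaner: it needs no case split on whether $p(z,x)<1$ (the paper's ``without loss of generality'' is the least transparent step of its argument), and it uses Factorization directly rather than through an auxiliary lemma. The cost is that you invoke the full basis-extension theorem, whose proof is an ordinal induction over all of $\Omega$, where the paper's argument is local, requiring only one application of O-Projection to the pair $(z,x)$. One small blemish: your parenthetical alternative for establishing $p(x,C)=0$ via Lemma~\ref{le:O-S} ``after noting that $\{x\}\cup C$ is an ortho-set'' is circular, since knowing that $\{x\}\cup C$ is an ortho-set already presupposes $x\perp C$, which is the very thing being proved; drop the aside, as the Boundedness argument you give first is complete on its own.
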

\begin{proof}
If \mbox{$x \sim y$}, \mbox{$p(x, y) = p(x, x)$}, and we conclude the proof
with Lemma~\ref{le:reflex}.

Suppose, now, that \mbox{$p(x, y) = 1$} and that \mbox{$z \in \Omega$}.
We want to show that \mbox{$p(x, z) = p(y, z)$}. Without loss of generality,
we can assume \mbox{$p(z, x) < 1$}.
By O-Projection, there is some state 
\mbox{$x' \perp x$} such that \mbox{$p(z, x) + p(z, x') = 1$}.
By Boundedness we have \mbox{$p(y, x) + p(y, x') \leq 1$} and therefore
\mbox{$p(y, x') = 0$}. The assumptions of Lemma~\ref{le:rho_alpha_zero}
are satisfied for \mbox{$a = x'$}, \mbox{$b = x$}, \mbox{$x = y$} and
\mbox{$y = z$}. We conclude that \mbox{$p(y, z) =$}
\mbox{$p(y, x) \, p(z, x) =$} \mbox{$p(z, x)$}.
\end{proof}

Theorem~\ref{the:equiv} shows that, 
if \mbox{$p(x , y) =$} $1$, then $x$ and $y$
are equivalent, i.e., behave in exactly the same way as far as $p$ is concerned.
No harm can therefore be caused by identifying any two states $x$, $y$ such
that \mbox{$p(x , y) =$} $1$. 
\begin{definition} \label{def:standard}
An SP-structure \mbox{$\langle \Omega , p \rangle$} is said to be {\em standard}
iff for any \mbox{$x , y \in \Omega$}, \mbox{$p(x , y) =$} $1$ implies 
\mbox{$x = y$}. 
\end{definition}
\begin{theorem} \label{the:quotient}
Let \mbox{$\langle \Omega , p \rangle$} be an SP-structure. The quotient
structure \mbox{$\langle \Omega \, / \, \sim , \bar{p} \rangle$} defined
by \mbox{$\bar{p}(\bar{x} , \bar{y}) =$} \mbox{$p(x , y)$} is a standard
SP-structure and the transformation \mbox{$x \hookrightarrow \bar{x}$}
preserves $p$.
\end{theorem}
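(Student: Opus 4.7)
The plan is to verify in turn that $\bar{p}$ is well-defined on equivalence classes, that the properties Symmetry, Non-negativity, Boundedness, O-Projection and Factorization all transfer to the quotient, and finally that the quotient is standard.

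First I would check well-definedness. If $x \sim x'$ and $y \sim y'$, then by Definition~\ref{def:equivalence} and Symmetry, $p(x,y) = p(x',y) = p(y,x') = p(y',x') = p(x',y')$, so $\bar{p}(\bar{x},\bar{y})$ does not depend on the choice of representatives. Symmetry and Non-negativity of $\bar{p}$ are then immediate from those of $p$. The map $x \mapsto \bar{x}$ preserves $p$ tautologically, by the very definition of $\bar{p}$.

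Next I would establish a correspondence between ortho-sets (respectively subspaces) in the quotient and those in $\Omega$. If $\bar{A}$ is an ortho-set in $\Omega/\sim$, then picking one representative from each class yields a set $A \subseteq \Omega$ in which the elements are pairwise distinct and pairwise orthogonal, hence $A$ is an ortho-set; conversely, any ortho-set $A \subseteq \Omega$ in which no two elements are equivalent (so $A$ projects injectively) gives an ortho-set $\bar{A}$ in the quotient. Moreover $\bar{p}(\bar{x},\bar{A}) = \sum_{\bar{y} \in \bar{A}} \bar{p}(\bar{x},\bar{y}) = \sum_{y \in A} p(x,y) = p(x,A)$, and therefore $\overline{\bar{A}} = \{\bar{w} \mid w \in \bar{A}\}$ (the quotient of the subspace generated by $A$). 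Boundedness for $\bar{p}$ follows immediately: $\bar{p}(\bar{x},\bar{A}) = p(x,A) \leq 1$. For O-Projection, assume $\bar{p}(\bar{x},\bar{A}) < 1$; then $p(x,A) < 1$ in $\Omega$, so O-Projection in $\Omega$ yields some $y$ with $y \perp A$ and $p(x,A) + p(x,y) = 1$, and passing to the quotient gives $\bar{y} \perp \bar{A}$ and $\bar{p}(\bar{x},\bar{A}) + \bar{p}(\bar{x},\bar{y}) = 1$. For Factorization, given $\bar{y}, \bar{z} \in \overline{\bar{A}}$ with $\bar{p}(\bar{x},\bar{y}) = \bar{p}(\bar{x},\bar{A})$, choose representatives $y, z \in \bar{A}$ in $\Omega$ with $p(x,y) = p(x,A)$; applying Factorization in $\Omega$ gives $p(x,z) = p(x,y)\,p(y,z)$, which pushes down to the required identity.

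Finally, standardness of the quotient is exactly Theorem~\ref{the:equiv}: if $\bar{p}(\bar{x},\bar{y}) = 1$ then $p(x,y) = 1$, hence $x \sim y$ and $\bar{x} = \bar{y}$. The only mildly delicate point in the whole argument is the correspondence between ortho-sets and subspaces under the quotient map---specifically, ensuring that the sum $\sum_{\bar{y} \in \bar{A}} \bar{p}(\bar{x},\bar{y})$ really coincides with $p(x,A)$ for a set of representatives---but this follows cleanly once one observes that distinct classes are, by definition of $\sim$ together with Theorem~\ref{the:equiv}, already orthogonal whenever $\bar{p}(\bar{x},\bar{y}) = 0$.
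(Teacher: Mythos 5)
Your proposal is correct. Note that the paper itself offers no proof of Theorem~\ref{the:quotient} at all --- it is stated without a proof environment and treated as routine --- so there is no argument of the author's to compare against; your verification is the natural one and fills in exactly what is being taken for granted (well-definedness of $\bar{p}$ via the definition of $\sim$ plus Symmetry, transfer of each defining property through the bijective correspondence of ortho-sets, and standardness via Theorem~\ref{the:equiv}). One small simplification: your caveat about ortho-sets ``in which no two elements are equivalent'' is automatic, since by Lemma~\ref{le:reflex} equivalent states satisfy $p(x,y)=1$ while distinct elements of an ortho-set satisfy $p(x,y)=0$, so every ortho-set already projects injectively to the quotient; this also cleans up the slightly muddled final sentence of your argument.
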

In the sequel we shall only consider standard SP-structures, even if we forget
to mention the fact. In other words, we assume, from now on, that
\mbox{$p(x , y) =$} $1$ iff \mbox{$x = y)$}.

\subsection{Relativization} \label{sec:relativization}
We shall also strengthen O-Projection and Theorem~\ref{the:basis} by 
relativizing them to a subspace. The relativization of O-Projection 
shows that any subspace of an SP-structure is an SP-structure.
First, we need the following.
\begin{lemma} \label{le:rsub}
Let $x$, $x'$ be states such that \mbox{$p(x , x') <$} $1$. 
Let $y$ be a state orthogonal to $x'$ such that 
\mbox{$p(x , x') + p(x , y) =$} $1$ as guaranteed by O-Projection.
Then, for any ortho-set $A$ such that \mbox{$x \perp A$} and 
\mbox{$x' \perp A$}, we have \mbox{$y \perp A$}.
\end{lemma}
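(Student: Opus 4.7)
My plan is to prove $y \perp A$ by showing $p(y, a) = 0$ for every $a \in A$; the sum $p(y, A) = \sum_{a \in A} p(y, a)$ will then vanish because each term is nonnegative. The main engine will be Lemma~\ref{le:rho_alpha_zero}, which converts a transition-probability equation plus two orthogonalities into a multiplicative identity.

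Fix $a \in A$. I would apply Lemma~\ref{le:rho_alpha_zero} with the substitution (using the lemma's variable names on the left) $a := x'$, $b := y$, $x := a$, $y := x$. The three hypotheses fall directly out of the data: $x' \perp y$ is given, $a \perp x'$ holds because $x' \perp A$, and $p(x, x') + p(x, y) = 1$ is exactly the O-Projection equation. The conclusion of the lemma then reads $p(a, x) = p(a, y) \, p(x, y)$.

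To finish, note that $p(a, x) = 0$ by Symmetry and the assumption $x \perp A$, while $p(x, y) = 1 - p(x, x') > 0$ because $p(x, x') < 1$. Cancelling the (strictly positive) factor $p(x, y)$ forces $p(a, y) = 0$, i.e., $y \perp a$. Since $a \in A$ was arbitrary, $y \perp A$.

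The one step I expect to cost any real thought is discovering the correct permutation of the four variables in Lemma~\ref{le:rho_alpha_zero}; once the right dictionary is fixed, every hypothesis is visible at a glance, and the only nontrivial remark is that the strict inequality $p(x, x') < 1$ is exactly what is needed to invert $p(x, y)$ in the final cancellation.
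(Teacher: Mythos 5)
Your proof is correct and follows essentially the same route as the paper: fix $z\in A$, obtain the identity \mbox{$p(z,x)=p(z,y)\,p(x,y)$}, and cancel the strictly positive factor \mbox{$p(x,y)=1-p(x,x')$}. The only cosmetic difference is that you justify the identity by an explicit (and correct) instantiation of Lemma~\ref{le:rho_alpha_zero}, whereas the paper invokes ``factorization'' directly; your version is if anything the more carefully documented of the two.
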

\begin{proof}
Let $z$ be a state in $A$. Since \mbox{$z \perp x'$}, by factorization
we have \mbox{$p(x , z) \: = \:$}
\mbox{$p(x , y) \, p(y , z)$}.
But \mbox{$x \perp z$} and \mbox{$p(x , z) =$} $0$.
But \mbox{$p(x , y) >$} $0$ and therefore 
\mbox{$p(y , z) =$} $0$.
\end{proof}

\begin{theorem} \label{the:relative_OProj}
Suppose \mbox{$X \subseteq \Omega$} is a subspace, \mbox{$x \in X$} is a 
state and \mbox{$A \subseteq X$} is an ortho-set of $X$ such that 
\mbox{$p(x, A) < 1$}. 
Then any state \mbox{$y \in \Omega$} such that
\begin{enumerate}
\item \label{perpA2} \mbox{$y \perp A$}, i.e., \mbox{$p(y, A) = 0$}, i.e.,
\mbox{$A \cup \{ y \}$} is an ortho-set, and
\item \label{pxAy2} \mbox{$p(x, A) + p(x, y) = 1$}
\end{enumerate}
is a member of $X$.
\end{theorem}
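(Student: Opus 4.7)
The plan is to reduce the goal ``$y \in X$'' to ``$y \perp X^{\perp}$'' via Theorem~\ref{the:comp}, and then to verify this orthogonality one basis vector of $X^{\perp}$ at a time. I would fix a basis $D$ for the subspace $X^{\perp}$ (which exists by Theorem~\ref{the:comp}); by Lemma~\ref{le:O-S}, it suffices to prove $p(y , d) = 0$ for every $d \in D$, since that gives $y \perp \bar{D} = X^{\perp}$ and hence $y \in (X^{\perp})^{\perp} = X$.

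The key move is to enlarge $A$ by adjoining $y$ itself: set $\tilde{A} \eqdef A \cup \{ y \}$. This is an ortho-set because $y \perp A$ by hypothesis. Two observations then pin down the whole argument. First, $p(x , \tilde{A}) = p(x , A) + p(x , y) = 1$, so $x \in \overline{\tilde{A}}$. Second, for any $d \in D$, orthogonality between $X^{\perp}$ and $X$ gives $d \perp A$ (since $A \subseteq X$), so $p(d , \tilde{A}) = p(d , A) + p(d , y) = p(d , y)$; that is, $y$ itself attains the maximum of $p(d , \cdot\,)$ on $\overline{\tilde{A}}$.

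With these in hand, I would apply Factorization to the state $d$, the ortho-set $\tilde{A}$, and the pair $y , x \in \overline{\tilde{A}}$. Because $p(d , y) = p(d , \tilde{A})$, Factorization yields $p(d , x) = p(d , y) \, p(y , x)$. Now $p(d , x) = 0$ because $d \in X^{\perp}$ and $x \in X$, while $p(y , x) = p(x , y) = 1 - p(x , A) > 0$ by Symmetry together with the hypothesis $p(x , A) < 1$. Therefore $p(d , y) = 0$, as required.

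The obstacle I expect is spotting the trick of enlarging $A$ to $A \cup \{ y \}$. Without that observation one is tempted to appeal to O-Projection inside $X$ itself, but that would be circular, since the present theorem is precisely what one needs in order to know that a subspace $X$ inherits the SP-structure. Once the enlarged ortho-set is in play and one notices that $y$ realises the Factorization maximum for every $d \in D$, the remainder is a single line of arithmetic.
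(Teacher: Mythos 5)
Your proof is correct, and it reaches the same pivotal identity as the paper but by a genuinely different instantiation of Factorization. The paper's proof is two lines long because it delegates everything to Lemma~\ref{le:rsub}: it views $y$ as the state realising the maximum of \mbox{$p(x,\cdot)$} on the subspace orthogonal to $A$, factors \mbox{$p(x,d)=p(x,y)\,p(y,d)$} from the standpoint of $x$ for each $d$ orthogonal to both $x$ and $A$, and concludes \mbox{$y\perp X^{\perp}$}, hence \mbox{$y\in X$} by Theorem~\ref{the:comp}. You instead enlarge the ortho-set to \mbox{$A\cup\{y\}$}, note that $y$ realises the maximum of \mbox{$p(d,\cdot)$} on \mbox{$\overline{A\cup\{y\}}$} because \mbox{$d\perp A$}, and factor \mbox{$p(d,x)=p(d,y)\,p(y,x)$} from the standpoint of $d$; by Symmetry this is literally the same equation, and the conclusion \mbox{$p(d,y)=0$} follows identically from \mbox{$p(d,x)=0$} and \mbox{$p(x,y)=1-p(x,A)>0$}. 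What your version buys is self-containment: you do not need Lemma~\ref{le:rsub} at all, which is a mild advantage given that the paper applies that lemma with the ortho-set $A$ in the role of the single state $x'$ of its statement, a generalization the paper leaves implicit. Two small points you assert without comment but which are immediate: \mbox{$y\in\overline{A\cup\{y\}}$} because \mbox{$p(y,A\cup\{y\})=p(y,y)=1$} by Lemma~\ref{le:reflex}, and the passage from \mbox{$y\perp D$} to \mbox{$y\perp X^{\perp}$} is Lemma~\ref{le:O-S}.
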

\begin{proof}
We have \mbox{$A \perp X^{\perp}$} and \mbox{$x \perp X^{\perp}$} and
therefore, by Lemma~\ref{le:rsub}, \mbox{$y \perp X^{\perp}$} and,
by Theorem~\ref{the:comp}, \mbox{$y \in X$}.
\end{proof}

We may also relativize Theorem~\ref{the:basis}
\begin{theorem} \label{the:basis_rel}
Let $A$ be an ortho-set and $X$ be a subspace such that
\mbox{$A \subseteq X$}.
Then there is a basis $B$ for $X$ such that \mbox{$A \subseteq B$}.
\end{theorem}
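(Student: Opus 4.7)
My plan is to adapt the transfinite construction from the proof of Theorem~\ref{the:basis}, substituting the relativized O-Projection (Theorem~\ref{the:relative_OProj}) for the original so that every state we adjoin to the growing ortho-set automatically lies inside $X$. The goal at each stage is to maintain an ortho-set $B_\alpha \subseteq X$ with $A \subseteq B_\alpha$, and to stop when every $x \in X$ satisfies $p(x, B_\alpha) = 1$, at which point Lemma~\ref{le:basis_o} (applied with the subspace $X$ playing the role of $\bar{A}$) gives that $B_\alpha$ is a basis for $X$.

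Concretely, I would start with $B_0 = A$, set $B_\alpha = \bigcup_{\beta < \alpha} B_\beta$ at limit ordinals (which remains an ortho-set contained in $X$), and at a successor step either stop with $B_{\alpha+1} = B_\alpha$ if $p(x, B_\alpha) = 1$ for all $x \in X$, or pick some $x \in X$ with $p(x, B_\alpha) < 1$, invoke O-Projection to obtain a state $y$ satisfying $y \perp B_\alpha$ and $p(x, B_\alpha) + p(x, y) = 1$, and set $B_{\alpha+1} = B_\alpha \cup \{y\}$. Since the $B_\alpha$ form a non-decreasing chain of subsets of $\Omega$, cardinality forces stabilization at some ordinal $\beta$, and then $B_\beta$ is the required basis.

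The only step that needs genuine care is the successor step: I must verify that $y \in X$, so that $B_{\alpha+1}$ stays inside $X$. This is exactly what Theorem~\ref{the:relative_OProj} supplies — given $x \in X$, $B_\alpha \subseteq X$ (by the induction hypothesis), and $p(x, B_\alpha) < 1$, any state $y$ fulfilling the two O-Projection conclusions for the pair $(x, B_\alpha)$ automatically belongs to $X$. With that ingredient in place, nothing else in the argument of Theorem~\ref{the:basis} needs modification, so I expect no further obstacle.
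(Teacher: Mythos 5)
Your proposal is correct and follows essentially the same route as the paper, which simply states that the proof of Theorem~\ref{the:basis} carries over with Theorem~\ref{the:relative_OProj} supplying the fact that \mbox{$B_{\alpha} \subseteq X$} at every stage. You have merely spelled out the details (including the appeal to Lemma~\ref{le:basis_o} at the end), which the paper leaves implicit.
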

The proof follows that of Theorem~\ref{the:basis}, 
using Theorem~\ref{the:relative_OProj} to show that 
\mbox{$B_{\alpha} \subseteq X$}.

\subsection{Intersections and orthogonal sums of subspaces} 
\label{sec:intersec}
In Theorem~\ref{the:comp} we defined orthogonal complements for subspaces.
Once we have established the meaning of $p(x , X)$ for a subspace $X$, as has
been done in Theorem~\ref{the:max},
the following is an obvious corollary of Theorem~\ref{the:comp}.
\begin{corollary} \label{co:comp2}
For any \mbox{$x \in \Omega$} and any subspace $X$, one has
\mbox{$p(x, X) + p(x, X^{\perp}) =$} $1$.
\end{corollary}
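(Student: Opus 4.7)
The plan is to reduce the claim to the fact that $\Omega$ itself admits a basis that splits cleanly into a basis for $X$ and a basis for $X^{\perp}$, at which point Boundedness/Non-negativity together with Theorem~\ref{the:max} give the identity almost immediately.

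First I would pick a basis $A$ of the subspace $X$, whose existence is guaranteed by Definition~\ref{def:subspace}. Then I would invoke Theorem~\ref{the:basis} to extend $A$ to a basis $A \cup B$ of $\Omega$, where $B$ is an ortho-set with $B \perp A$. The proof of Theorem~\ref{the:comp} already shows that such a $B$ is in fact a basis for $X^{\perp}$, so I can appeal to that directly rather than redoing the argument.

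Next, because $A \cup B$ is a basis for $\Omega$ and every state of $\Omega$ lies in $\overline{A \cup B} = \Omega$, the defining equation of a basis gives $p(x, A \cup B) = 1$ for every $x \in \Omega$. Since $A$ and $B$ are disjoint, the definition of $p(x, \cdot)$ on an ortho-set as a sum yields $p(x, A \cup B) = p(x, A) + p(x, B)$. Finally, Theorem~\ref{the:max} tells us that $p(x, A) = p(x, X)$ and $p(x, B) = p(x, X^{\perp})$, independently of which bases were chosen, so combining these equalities delivers $p(x, X) + p(x, X^{\perp}) = 1$.

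There is essentially no obstacle here; the only subtlety worth flagging is the appeal to the proof of Theorem~\ref{the:comp} to know that the ``completion'' $B$ is actually a basis for $X^{\perp}$ rather than merely an orthogonal complement in a set-theoretic sense. Once that is in hand, the corollary is a one-line consequence of the additivity of $p(x, \cdot)$ over disjoint ortho-sets together with Theorem~\ref{the:max}.
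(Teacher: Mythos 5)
Your proof is correct and is exactly the argument the paper has in mind: the paper dismisses the corollary as ``obvious'' from Theorem~\ref{the:comp} and Theorem~\ref{the:max}, and your write-up simply makes that explicit (basis $A$ of $X$, completion $A\cup B$ to a basis of $\Omega$ with $B$ a basis of $X^{\perp}$, additivity of $p(x,\cdot)$, and basis-independence from Theorem~\ref{the:max}). No issues.
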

The proof is obvious.

We may now define other natural operations on subspaces.
First we define orthogonal sums.
\begin{theorem} \label{the:sum}
Let \mbox{$X \perp Y$} be {\em orthogonal subspaces}. 
The set \mbox{$X \oplus Y$}, defined to be 
\mbox{$\{ x \in \Omega \mid p(x, X) + p(x, Y) = 1 \}$}, is a subspace.
\end{theorem}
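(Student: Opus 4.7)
The plan is to exhibit an explicit ortho-set whose generated subspace is exactly $X \oplus Y$. Pick a basis $A$ for $X$ and a basis $B$ for $Y$; these exist by Theorem~\ref{the:basis_rel} applied to the empty ortho-set inside $X$ and inside $Y$. Since $X \perp Y$ we have $A \perp B$, and because $A \cup B \subseteq X \cup Y$ while $X \perp Y$ forces any common element $a$ to satisfy $p(a,a) = 0$, contradicting Lemma~\ref{le:reflex}, the sets $A$ and $B$ are disjoint. Hence $C \eqdef A \cup B$ is itself an ortho-set.

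Next I would verify that $\overline{C} = X \oplus Y$. By the definition of $p$ on an ortho-set and the disjointness of $A$ and $B$, one has $p(x, C) = p(x, A) + p(x, B)$ for every $x \in \Omega$. By Theorem~\ref{the:max} the values $p(x, A)$ and $p(x, B)$ depend only on the subspaces $X = \bar{A}$ and $Y = \bar{B}$, so $p(x, A) = p(x, X)$ and $p(x, B) = p(x, Y)$. Therefore
\[
\overline{C} \: = \: \{ x \in \Omega \mid p(x, C) = 1 \} \: = \: \{ x \in \Omega \mid p(x, X) + p(x, Y) = 1 \} \: = \: X \oplus Y,
\]
and $X \oplus Y$ is a subspace by Definition~\ref{def:subspace}.

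The only delicate step is the identity $p(x, A \cup B) = p(x, A) + p(x, B)$, which requires both that $A$ and $B$ be disjoint (handled above) and that the sum defining $p(x, C)$ be insensitive to grouping; but this is immediate from the definition of $p(x, \cdot)$ on an ortho-set as an unordered (possibly transfinite) sum of nonnegative reals. Every other ingredient is already in place, so no genuine obstacle remains.
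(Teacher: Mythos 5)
Your proof is correct and follows exactly the paper's own route: take bases $A$ of $X$ and $B$ of $Y$, observe that $A \cup B$ is an ortho-set, and use Theorem~\ref{the:max} to identify $\overline{A \cup B}$ with $\{x \mid p(x,X)+p(x,Y)=1\}$. You merely spell out the disjointness and additivity details that the paper leaves implicit.
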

\begin{proof}
Let $A$, $B$ be bases for $X$, $Y$ respectively.
The ortho-set \mbox{$A \cup B$} is a basis for \mbox{$X \oplus Y$} by
Theorem~\ref{the:max}.
\end{proof}

We may now define intersections.
\begin{theorem} \label{the:intersection}
If $X$ and $Y$ are subspaces their intersection \mbox{$X \cap Y$} is also
a subspace.
\end{theorem}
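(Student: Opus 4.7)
The plan is to exhibit $X \cap Y$ as $\bar{C}$ for a suitable ortho-set $C$. I take $C$ to be a maximal ortho-set contained in $X \cap Y$; such a $C$ exists by Zorn's lemma, since the union of any ascending chain of ortho-sets contained in $X \cap Y$ is again an ortho-set contained in $X \cap Y$.

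For the inclusion $\bar{C} \subseteq X \cap Y$, I extend $C$ to a basis $B$ of $X$ via Theorem~\ref{the:basis_rel} (legitimate because $C \subseteq X$). For any $w \in \bar{C}$, Boundedness gives $p(w, B) \leq 1$, while the decomposition $p(w, B) = p(w, C) + p(w, B \setminus C) \geq p(w, C) = 1$ forces equality throughout, so $w \in \bar{B} = X$. The identical argument with $Y$ in place of $X$ shows $w \in Y$, whence $\bar{C} \subseteq X \cap Y$.

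For the reverse inclusion I argue by contradiction. Assume some $z \in X \cap Y$ satisfies $p(z, C) < 1$. O-Projection supplies $y \in \Omega$ with $y \perp C$ and $p(z, C) + p(z, y) = 1$. Invoking the relativized O-Projection, Theorem~\ref{the:relative_OProj}, with the subspace $X$ forces $y \in X$, and with $Y$ forces $y \in Y$; hence $y \in X \cap Y$. Moreover $y \notin C$, for otherwise $y \perp C$ together with $y \in C$ would give $p(y, y) = 0$, contradicting Lemma~\ref{le:reflex}. Thus $C \cup \{y\}$ is an ortho-set in $X \cap Y$ strictly larger than $C$, contradicting maximality; this forces $p(z, C) = 1$ for every $z \in X \cap Y$, and so $X \cap Y \subseteq \bar{C}$.

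The main obstacle is keeping the state produced by O-Projection inside both $X$ and $Y$ at once; this is precisely what the relativized O-Projection was designed to handle, and applying it on each side in turn pins $y$ into the intersection. Once that tool is available, the rest is a standard maximality argument.
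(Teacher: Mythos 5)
Your proof is correct and follows essentially the same route as the paper: both construct a maximal ortho-set inside $X \cap Y$ (you via Zorn's lemma, the paper via transfinite recursion) and both rely on applying Theorem~\ref{the:relative_OProj} once with $X$ and once with $Y$ to keep the O-Projection state in the intersection. Your write-up is in fact slightly more complete, since you explicitly verify the inclusion $\bar{C} \subseteq X \cap Y$, which the paper's proof leaves implicit.
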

\begin{proof}
We shall build an ortho-set \mbox{$A_{\alpha} \subseteq X \cap Y$} for
every  ordinal $\alpha$, by ordinal induction. 
Let \mbox{$A_{0} =$} $\emptyset$. For a limit ordinal $\alpha$
we set \mbox{$A_{\alpha} =$} \mbox{$\bigcup_{\beta < \alpha} A_{\beta}$}. 
For any successor ordinal \mbox{$\alpha + 1$}, if $A_{\alpha}$ is a basis 
for \mbox{$X \cap Y$}, we set \mbox{$A_{\alpha + 1} =$} \mbox{$A_{\alpha}$}, 
and if $A_{\alpha}$ is not a basis for \mbox{$X \cap Y$}
we consider some state \mbox{$x \in X \cap Y$}
such that \mbox{$p(x, A_{\alpha}) < 1$} and we
set \mbox{$A_{\alpha + 1} =$} \mbox{$A_{\alpha} \cup \{ y \}$}, where
\mbox{$y \in \Omega$} is one of the states the existence of which is 
guaranteed by O-Projection. By Theorem~\ref{the:relative_OProj}, 
\mbox{$y \in X \cap Y$}.
Clearly we have a chain of ortho-sets and 
there is some ordinal $\beta$ for which \mbox{$A_{\beta + 1} =$} 
\mbox{$A_{\beta}$}.
The set $A_{\alpha}$ is a basis for \mbox{$X \cap Y$}.
\end{proof}

\subsection{Uniqueness of projections} \label{sec:uniqueness}
We may now strengthen O-Projection and Lemma~\ref{le:proj}.
We then prove a fundamental result on subspaces: 
the projections guaranteed by O-Projection and Lemma~\ref{le:proj} 
are unique and independent of the basis considered.

\begin{theorem} \label{the:proj_one}
In a standard SP-structure, 
if $x$ is a state and $A$ is an ortho-set, such that 
\mbox{$p(x, A) > 0$}, the state \mbox{$y \in \bar{A}$} such that 
\mbox{$p(x, y) =$} \mbox{$p(x, A)$} guaranteed by Lemma~\ref{le:proj} is
unique and depends only on $\bar{A}$, not on $A$.
This unique state will be denoted \mbox{$t(x, \bar{A})$}
or by \mbox{$t(x, A)$}.
\end{theorem}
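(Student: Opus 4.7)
The plan is to deduce uniqueness directly from Factorization: if two candidate projections both attain the maximum value $p(x,A)$, then Factorization forces them to be $\sim$-equivalent, and standardness then identifies them.

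More concretely, suppose $y, y' \in \bar{A}$ both satisfy $p(x, y) = p(x, y') = p(x, A) > 0$. I would apply Property~\ref{prop:Factorization} using $y$ as the state achieving the maximum: since $y \in \bar{A}$ and $p(x, y) = p(x, A)$, the hypothesis of Factorization is met, and taking $z = y' \in \bar{A}$ yields
\[
p(x, y') \: = \: p(x, y) \, p(y, y').
\]
Substituting $p(x, y') = p(x, y) > 0$ and dividing gives $p(y, y') = 1$. By Theorem~\ref{the:equiv}, $y \sim y'$, and in a standard SP-structure Definition~\ref{def:standard} then forces $y = y'$. This establishes uniqueness of the state in $\bar{A}$ achieving the maximum.

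For the claim that this state depends only on the subspace $\bar{A}$ and not on the particular ortho-set $A$ generating it, I would invoke Theorem~\ref{the:max}, which already tells us that $p(x, A)$ depends only on $\bar{A}$: if $B$ is any other ortho-set with $\bar{B} = \bar{A}$, then $p(x, B) = p(x, A)$. The uniqueness argument just given, applied inside the common subspace $\bar{A} = \bar{B}$, shows there is exactly one state $y$ in this subspace with $p(x, y)$ equal to this common value, so $y$ is an invariant of $\bar{A}$ alone. Writing $t(x, \bar{A})$ (equivalently $t(x, A)$) for this unique state is then well defined.

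The only step that requires a moment of care is making sure that $p(x, y) > 0$ (not merely nonnegative), so that we can divide to extract $p(y, y') = 1$; this is precisely where the hypothesis $p(x, A) > 0$ of the theorem is used. Otherwise the argument is a direct application of Factorization plus the standardness assumption, with no genuine obstacle.
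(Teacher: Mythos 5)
Your proof is correct and follows essentially the same route as the paper: apply Factorization to the two candidate maximizers to obtain $p(y,y')=1$, conclude $y\sim y'$ via Theorem~\ref{the:equiv} (hence $y=y'$ by standardness), and use Theorem~\ref{the:max} for independence from the choice of ortho-set. The only cosmetic difference is that the paper applies Factorization symmetrically in both directions rather than dividing by $p(x,y)>0$, but the substance is identical.
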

Note that \mbox{$t(x, A)$} is defined only if \mbox{$p(x, A) > 0$} and that:
\mbox{$p(t(x, A), A) = 1$}, \mbox{$p(x, t(x, A)) = p(x, A)$}
and for any \mbox{$y \in \Omega$} such that \mbox{$p(y, A) = 1$} one has
\mbox{$p(x, y) =$} \mbox{$p(x, t(x, A)) \, p(t(x, A), y)$}.
\begin{proof}
Suppose both $y_{i}$ \mbox{$i = 0 , 1$} satisfy the conditions.
By Factorization we have
\mbox{$p(x, y_{i}) =$} \mbox{$p(x, y_{i+1}) \, p(y_{i+1}, y_{i})$}
for \mbox{$i = 0 , 1$} where \mbox{$1 + 1 = 0$}.
We conclude that \mbox{$p(y_{0}, y_{1}) =$} $1$ and, 
by Theorem~\ref{the:equiv},
that \mbox{$y_{0} \sim y_{1}$}.
Suppose now that $B$ is an ortho-set such that \mbox{$\bar{B} =$}
\mbox{$\bar{A}$}. We have \mbox{$y_{0} \in \bar{B}$} and
\mbox{$p(x, y_{0}) =$} \mbox{$p(x, B)$} by Theorem~\ref{the:max}.
Therefore $y_{0}$ is the projection of $x$ on $B$.
\end{proof}

\subsection{Completion is a closure operation} \label{sec:closure}
The following shows that the completion of an ortho-set into the subspace
it generates has the character of a closure operation.

\begin{theorem} \label{the:subspaces}
Let \mbox{$A , B \subseteq \Omega$} be ortho-sets. The following properties are
equivalent:
\begin{enumerate}
\item \label{sub}
\mbox{$\bar{A} \subseteq \bar{B}$},
\item \label{A}
\mbox{$A \subseteq \bar{B}$},
\item \label{leq}
for any \mbox{$x \in \Omega$} \mbox{$p(x, A) \leq p(x, B)$}.
\end{enumerate}
\end{theorem}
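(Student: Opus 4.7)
The plan is to close a cycle of implications \mbox{$(1) \Rightarrow (2) \Rightarrow (3) \Rightarrow (1)$}. The first and third implications are routine book-keeping; the real content sits in \mbox{$(2) \Rightarrow (3)$}, which is where Factorization (together with Theorem~\ref{the:max}) has to be brought to bear.

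For \mbox{$(1) \Rightarrow (2)$}, I would first note the useful inclusion \mbox{$A \subseteq \bar{A}$}: for any \mbox{$a \in A$}, since $A$ is an ortho-set and \mbox{$p(a, a) = 1$} by Lemma~\ref{le:reflex}, we get \mbox{$p(a, A) = p(a, a) + \sum_{a' \in A, a' \neq a} p(a, a') = 1$}, so \mbox{$a \in \bar{A}$}. Chaining \mbox{$A \subseteq \bar{A} \subseteq \bar{B}$} gives (2).

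For \mbox{$(2) \Rightarrow (3)$}, fix \mbox{$x \in \Omega$} and split on whether \mbox{$p(x, B)$} is zero. If \mbox{$p(x, B) = 0$}, Theorem~\ref{the:max} gives \mbox{$p(x, y) = 0$} for every \mbox{$y \in \bar{B}$}, and since \mbox{$A \subseteq \bar{B}$} by assumption, every term in the sum \mbox{$p(x, A) = \sum_{a \in A} p(x, a)$} vanishes, so \mbox{$p(x, A) = 0 = p(x, B)$}. Otherwise \mbox{$p(x, B) > 0$} and Theorem~\ref{the:proj_one} gives \mbox{$y = t(x, B) \in \bar{B}$} with \mbox{$p(x, y) = p(x, B)$}. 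Because each \mbox{$a \in A$} lies in $\bar{B}$, Factorization applied to the ortho-set $B$ and the state $y$ yields \mbox{$p(x, a) = p(x, y) \, p(y, a) = p(x, B) \, p(y, a)$}. Summing over $a \in A$ gives \mbox{$p(x, A) = p(x, B) \, p(y, A)$}, and since \mbox{$p(y, A) \leq 1$} by Boundedness, we conclude \mbox{$p(x, A) \leq p(x, B)$}.

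For \mbox{$(3) \Rightarrow (1)$}, pick any \mbox{$x \in \bar{A}$}. Then \mbox{$p(x, A) = 1$}, so by (3) \mbox{$p(x, B) \geq 1$}, and Lemma~\ref{le:01} forces \mbox{$p(x, B) = 1$}, i.e.\ \mbox{$x \in \bar{B}$}. The main obstacle is the middle step: the argument does not work with Boundedness alone (that would only bound $p(x, A)$ by $1$, not by $p(x, B)$), so the proof depends essentially on being able to factor $p(x, a)$ through the projection $t(x, B)$, and one must be careful to treat the degenerate case \mbox{$p(x, B) = 0$} separately since $t(x, B)$ is undefined there.
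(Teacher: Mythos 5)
Your proof is correct, and the cycle \mbox{$(1) \Rightarrow (2) \Rightarrow (3) \Rightarrow (1)$} matches the paper's overall structure, with the easy implications handled identically ($A \subseteq \bar{A}$ for the first, and the squeeze \mbox{$1 = p(x,A) \leq p(x,B) \leq 1$} via Lemma~\ref{le:01} for the third). Where you genuinely diverge is in the key step \mbox{$(2) \Rightarrow (3)$}. The paper extends the ortho-set $A$ to a basis \mbox{$A \cup C$} of the subspace $\bar{B}$ (this really needs the relativized extension result, Theorem~\ref{the:basis_rel}, rather than Theorem~\ref{the:basis} as cited, since one must keep the new vectors inside $\bar{B}$) and then argues \mbox{$p(x,A) \leq p(x,A) + p(x,C) = p(x,B)$} by Non-negativity and basis-independence. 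You instead factor each \mbox{$p(x,a)$}, for \mbox{$a \in A \subseteq \bar{B}$}, through the projection \mbox{$y = t(x,B)$}, sum to get \mbox{$p(x,A) = p(x,B)\,p(y,A)$}, and finish with Boundedness applied to the ortho-set $A$; the degenerate case \mbox{$p(x,B)=0$} is correctly disposed of via Lemma~\ref{le:O-S}/Theorem~\ref{the:max}. Your route buys independence from the transfinite basis-extension machinery (and from the relativization subtlety just mentioned) at the modest cost of a case split on whether $t(x,B)$ exists; as a bonus it yields the sharper identity \mbox{$p(x,A) = p(x,B)\,p(t(x,B),A)$}, which is in the spirit of the last clause of Theorem~\ref{the:max}. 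Both arguments are sound.
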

\begin{proof}
Item~\ref{sub} clearly implies item~\ref{A} since \mbox{$A \subseteq \bar{A}$}.
Item~\ref{leq} implies item~\ref{sub} by Boundedness: 
\mbox{$1 =$} \mbox{$p(x, A) \leq$}
\mbox{$p(x, B) \leq$} $1$.
Let us show that item~\ref{A} implies item~\ref{leq}.
Assume \mbox{$A \subseteq \bar{B}$} and \mbox{$x \in \Omega$}.
The ortho-set $A$ can be extended into a basis for the subspace $\bar{B}$
by Theorem~\ref{the:basis}. Therefore \mbox{$p(x, A) \leq$}
\mbox{$p(x, \bar{B}) =$} \mbox{$p(x,B)$} by Theorem~\ref{the:proj_one}.
\end{proof}

\subsection{An iterative description of O-Projection} \label{sec:iter}
We can also strengthen O-Projection.
\begin{theorem} \label{the:o-proj_one}
If $A$ is an ortho-set and $x$ a state such that \mbox{$p(x, A) < 1$}
then there is a unique state $y$ such that
\mbox{$y \perp A$} and \mbox{$p(x, A) + p(x, y) =$} $1$.
This state $y$ is \mbox{$t(x, \bar{A}^{\perp})$} and therefore depends
only on $\bar{A}$ and not on $A$.
\end{theorem}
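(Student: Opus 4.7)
The plan is to get existence for free from O-Projection and then identify any such $y$ with the projection $t(x,\bar{A}^{\perp})$ already characterized by Theorem~\ref{the:proj_one}. Note first that $\bar{A}^{\perp}$ is a bona fide subspace by Theorem~\ref{the:comp}, so the notation $t(x,\bar{A}^{\perp})$ makes sense as soon as $p(x,\bar{A}^{\perp})>0$. By Corollary~\ref{co:comp2}, $p(x,\bar{A})+p(x,\bar{A}^{\perp})=1$, and by Theorem~\ref{the:max} we have $p(x,\bar{A})=p(x,A)$; since $p(x,A)<1$ by hypothesis, we get $p(x,\bar{A}^{\perp})=1-p(x,A)>0$, so $t(x,\bar{A}^{\perp})$ is defined.

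Now suppose $y$ satisfies the two conditions of the statement. From $y\perp A$ and Lemma~\ref{le:O-S} we obtain $y\perp\bar{A}$, i.e., $y\in\bar{A}^{\perp}$. Combining $p(x,A)+p(x,y)=1$ with the identity $p(x,A)+p(x,\bar{A}^{\perp})=1$ derived above, we conclude $p(x,y)=p(x,\bar{A}^{\perp})$. Thus $y$ is a state of the subspace $\bar{A}^{\perp}$ achieving the maximum value $p(x,\bar{A}^{\perp})$ of $p(x,\cdot)$ on $\bar{A}^{\perp}$, which is exactly the characterization of $t(x,\bar{A}^{\perp})$ given in Theorem~\ref{the:proj_one}. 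By the uniqueness clause of that theorem (applicable because we are in a standard SP-structure), $y=t(x,\bar{A}^{\perp})$.

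Existence of at least one such $y$ is of course Property~\ref{prop:Projection} (O-Projection) itself, so we get both existence and uniqueness. The final assertion that $y$ depends only on $\bar{A}$, not on the particular ortho-set $A$ chosen to generate $\bar{A}$, is then immediate: the identified value $t(x,\bar{A}^{\perp})$ is defined purely in terms of the subspace $\bar{A}^{\perp}$, which in turn depends only on $\bar{A}$.

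I do not anticipate a real obstacle here; the only subtlety is making sure the prerequisite $p(x,\bar{A}^{\perp})>0$ for invoking Theorem~\ref{the:proj_one} is in place, which is why the strict inequality $p(x,A)<1$ in the hypothesis is used. The rest is just gluing together Lemma~\ref{le:O-S}, Theorem~\ref{the:max}, Corollary~\ref{co:comp2}, and the uniqueness of projections onto subspaces established in Theorem~\ref{the:proj_one}.
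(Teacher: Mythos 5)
Your proof is correct and follows essentially the same route as the paper's: the paper also observes that any candidate $y$ lies in $\bar{A}^{\perp}$ with $p(x,y)=p(x,\bar{A}^{\perp})$ (via Corollary~\ref{co:comp2}) and then invokes the uniqueness of projections from Theorem~\ref{the:proj_one}. Your version merely spells out the intermediate justifications (Lemma~\ref{le:O-S}, Theorem~\ref{the:max}, and the check that $p(x,\bar{A}^{\perp})>0$) that the paper leaves implicit.
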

\begin{proof}
Suppose \mbox{$y_{i} \perp A$} and \mbox{$p(x, A) + p(x, y_{i}) =$} $1$
for \mbox{$i = 0 , 1$}.
We have \mbox{$y_{i} \in A^{\perp}$} and \mbox{$p(x, y_{i}) =$}
\mbox{$p(x, A^{\perp})$} by Lemma~\ref{co:comp2}.
We conclude by Theorem~\ref{the:proj_one}.
\end{proof}

Property~\ref{prop:Projection}, O-Projection claims the existence of the
projection of a state $x$ on the subspace $A^{\perp}$ orthogonal to any 
ortho-set (in fact any subspace) $A$. 
Could we have weakened our assumption and required only the existence of 
such a projection when the ortho-set $A$ is a single state?
The answer is negative: for infinite ortho-sets $A$, 
i.e., for infinite-dimensional subspaces the full force of O-Projection is
needed.
But we shall show now that, for finite-dimensional subspaces, the existence
of o-projections follows from the simple case of a one-dimensional space, with
the help of Property~\ref{prop:Factorization}, Factorization.

\begin{theorem} \label{the:cascade}
Let \mbox{$A = \{a\} \cup A'$} be an ortho-set.
Assume \mbox{$p(x , A) <$} $1$. 
Then, \mbox{$n(x , A) \: = \:$}
\mbox{$n(n(x , A') , \{a\})$}.
\end{theorem}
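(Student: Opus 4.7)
The plan is to verify that the two-step projection $n(n(x,A'),\{a\})$ satisfies the characterising properties of $n(x,A)$ from Theorem~\ref{the:o-proj_one}, and then invoke the uniqueness clause of that theorem.

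First I would check that both inner quantities are defined. Since $A=\{a\}\cup A'$ is an ortho-set, $p(x,A)=p(x,a)+p(x,A')$, so the hypothesis $p(x,A)<1$ forces $p(x,A')<1$, and $z\eqdef n(x,A')$ exists with $z\perp A'$ and $p(x,z)=1-p(x,A')$. Next, $p(z,a)<1$: otherwise Theorem~\ref{the:equiv} together with the standing standardness assumption would give $z=a$, whence $p(x,a)=p(x,z)=1-p(x,A')$, contradicting $p(x,A)<1$. So $w\eqdef n(z,\{a\})$ is well-defined, satisfying $w\perp a$ and $p(z,w)=1-p(z,a)$.

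The second step is to show $w\perp A$ and $p(x,A)+p(x,w)=1$. For the orthogonality, $w\perp a$ holds by construction, and I would apply Lemma~\ref{le:rsub} with the roles $x\mapsto z$, $x'\mapsto a$, $y\mapsto w$, and ortho-set $A'$: the hypotheses $p(z,a)<1$, $z\perp A'$, $a\perp A'$, $w\perp a$, $p(z,a)+p(z,w)=1$ are all in hand, yielding $w\perp A'$ and hence $w\perp A$. For the probability equation, observe that both $a$ and $w$ lie in $(A')^{\perp}$, and by Theorem~\ref{the:o-proj_one} one has $z=t(x,(A')^{\perp})$. Applying the factorisation clause of Theorem~\ref{the:max} along the inclusion $\{a\},\{w\}\subseteq(A')^{\perp}$, together with Corollary~\ref{co:comp2} giving $p(x,(A')^{\perp})=1-p(x,A')=p(x,z)$, I get $p(x,a)=p(x,z)\,p(z,a)$ and $p(x,w)=p(x,z)\,p(z,w)$. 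A short computation then yields
\[
p(x,w)=(1-p(x,A'))(1-p(z,a))=1-p(x,A')-p(x,z)p(z,a)=1-p(x,A')-p(x,a)=1-p(x,A).
\]
Hence $w$ satisfies the two defining conditions of Theorem~\ref{the:o-proj_one}, which identifies it with $n(x,A)$.

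The main obstacle, once the setup is in place, is really just the arithmetic: the identity $p(x,a)=p(x,z)\,p(z,a)$ is the crucial bridge and rests on recognising $a$ as a state of the subspace $(A')^{\perp}$ so that Factorization (packaged inside Theorem~\ref{the:max}) applies. The rest is essentially bookkeeping, and the case $p(z,a)=0$ needs no separate treatment because the formulas still give $p(x,w)=p(x,z)=1-p(x,A')=1-p(x,A)$ in that degenerate situation.
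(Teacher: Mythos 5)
Your proof is correct and follows essentially the same route as the paper's: construct the iterated projection, verify that it is orthogonal to $A$ and satisfies \mbox{$p(x,A)+p(x,w)=1$} by factoring through \mbox{$n(x,A')$}, and conclude by the uniqueness clause of Theorem~\ref{the:o-proj_one}. The only cosmetic differences are that you package the orthogonality step as an application of Lemma~\ref{le:rsub} (whose proof is the very Factorization argument the paper inlines) and that you derive \mbox{$p(z,a)<1$} from standardness and Theorem~\ref{the:equiv} rather than from the factorization inequality.
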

\begin{sloppypar}
\begin{proof}
Assume \mbox{$p(x , A) <$} $1$. By Non-negativity, 
\mbox{$p(x, A') <$} $1$ and, by O-Projection on $A'$ the state
$n(x , A')$ exists and we have \mbox{$n(x , A') \perp A'$},
\mbox{$p(x , A') + p(x , n(x , A') =$} $1$ and, by Factorization, 
for any state 
\mbox{$y \perp A'$} one has \mbox{$p(x , y) =$}
\mbox{$p(x , n(x , A')) \, p(n(x , A') , y)$}.
We conclude, first, that \mbox{$p(x , a) <$} \mbox{$p(x , n(x , A'))$} and,
by Theorem~\ref{the:equiv}, \mbox{$p(a , n(x , A')) <$} $1$.
Similarly, we see that \mbox{$p(a , n(x , A')) <$} $1$.
Therefore, $b = n(n(x , A') , a)$ is well-defined and we have
\mbox{$b \perp a$}, \mbox{$p(n(x , A') , a) + p(n(x , A') , b) =$} $1$ and
for any state 
\mbox{$y \perp a$} one has \mbox{$p(n(x , A') , y) =$}
\mbox{$p(n(x , A') , b) \, p(b , y)$}.
We notice, first, that \mbox{$p(n(x , A') , b) >$} $0$ since
\mbox{$p(n(x , A') , a) <$} $1$.
For any \mbox{$w \in A'$} we have
\mbox{$p(n(x , A') , w) =$}
\mbox{$p(n(x , A') , b) \, p(b , w)$}.
But \mbox{$n(x , A') \perp w$} and \mbox{$p(n(x , A') , b) >$} $0$
and therefore \mbox{$b \perp w$}.
We have shown that \mbox{$b \perp A$}.

We shall now prove that
\mbox{$p(x , A) + p(x , b) =$} $1$.
We have:
\[
p(x, A) + p(x , b) \: = \:
p(x , a) + p(x , A') + p(x , b) \: = \:
\]
\[
p(x , n(x , A')) \, p(n(x , A') , a) \, + \, p(x , A') \, + \, 
p(x , n(x , A')) \, p(n(x , A') , b) \: = \:
\]
\[
p(x , n(x , A')) \, (p(n(x , A') , a) + p(n(x , A') , b)) \: + \:
p(x , A') \: = \:
\]
\[
p(x , n(x , A')) \: + \: p(x , A') \: = \: 1
\]
\end{proof}
\end{sloppypar}
When the ortho-set $A$ of O-Projection is finite, then the state $y$ can be
described by a sequence of O-projections in which the ortho-set considered 
is a singleton.

\section{Notations} \label{sec:notations}
Before we can express our next requirement we need some notations.
We wish to consider the following general situation.
Suppose \mbox{$X \perp Y$} are orthogonal {\em subspaces} and
let \mbox{$Z = X \oplus Y$} be their orthogonal sum.
Assume $a$ and $b$ are states that are not orthogonal to $Z$, 
i.e., equivalently, none of $a$ or $b$ is orthogonal to both $X$ and $Y$.
The states $a$ and $b$ have projections on $Z$: $t(a, Z)$ 
and $t(b, Z)$.
\begin{theorem} \label{the:comm}
Let \mbox{$X \perp Y$} be subspaces and \mbox{$Z = X \oplus Y$}.
Assume \mbox{$p(a , Z) > 0$}.
If \mbox{$p(a , X) > 0$}, $a$ has a projection on $X$ and
\mbox{$t(a , X) = t(t(a , Z) , X)$}.
\end{theorem}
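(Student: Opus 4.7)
The plan is to verify that $t(t(a,Z),X)$ is well-defined, compute its similarity with $a$, and then invoke the uniqueness of projection (Theorem~\ref{the:proj_one}) to identify it with $t(a,X)$.

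First I would apply the factorization formula from Theorem~\ref{the:max} to the chain of subspaces $X \subseteq Z$: this gives
\[
p(a,X) \: = \: p(a,Z) \, p(t(a,Z),X).
\]
Since both $p(a,X) > 0$ and $p(a,Z) > 0$ by hypothesis, this forces $p(t(a,Z),X) > 0$, so the state $b \eqdef t(t(a,Z),X)$ is well-defined by Theorem~\ref{the:proj_one}, and $b \in X$.

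Next I would show $p(a,b) = p(a,X)$. Since $b \in X \subseteq Z$ and $t(a,Z)$ is the projection of $a$ on $Z$, the factorization clause of Theorem~\ref{the:max} (equivalently, Factorization through the maximizing state $t(a,Z) \in \bar{Z}$) yields
\[
p(a,b) \: = \: p(a,t(a,Z)) \, p(t(a,Z),b) \: = \: p(a,Z) \, p(t(a,Z),b).
\]
By the defining property of $b$, $p(t(a,Z),b) = p(t(a,Z),X)$. Combining this with the displayed identity from the first step gives $p(a,b) = p(a,Z) \, p(t(a,Z),X) = p(a,X)$.

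Finally, $b$ is a state of $X$ with $p(a,b) = p(a,X) > 0$, so by the uniqueness clause of Theorem~\ref{the:proj_one}, $b = t(a,X)$, which is the desired equality and also establishes that $a$ has a projection on $X$. The main subtlety is bookkeeping the positivity conditions so that every $t(\cdot,\cdot)$ that appears really exists; once that is secured, the argument is just two applications of the factorization formula plus uniqueness.
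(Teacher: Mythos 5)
Your proof is correct and takes essentially the same route as the paper's (very terse) proof: both arguments factorize through the intermediate projection $t(a,Z)$ and conclude by the uniqueness clause of Theorem~\ref{the:proj_one}, you merely verifying the defining property of $t(a,X)$ for the candidate $t(t(a,Z),X)$ rather than the reverse. Your explicit bookkeeping of the positivity conditions (in particular that $p(t(a,Z),X)>0$, so the right-hand side is well-defined) fills in a detail the paper leaves implicit.
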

\begin{proof}
By Theorem~\ref{the:proj_one} since 
\mbox{$p(a , t(a , X)) =$}
\mbox{$p(a , t(a , Z) \, p(t(a , Z) , t(a , X))$} 
\end{proof}

Suppose \mbox{$X \perp Y$}, \mbox{$Z = X \oplus Y$} and \mbox{$a , b \in Z$}.
We expect the quantity \mbox{$p(a , b)$} to be related to 
\mbox{$p(t(a , X) , t(b , X))$} and \mbox{$p(t(a , Y) , t(b , Y))$}.
We shall therefore define two related quantities.
\begin{definition} \label{def:alpha_rho}
Let \mbox{$X \perp Y$}, \mbox{$Z = X \oplus Y$},
\mbox{$a, b \in \Omega$}, \mbox{$p(a , Z) > 0$}, \mbox{$p(b , Z) > 0$},
we shall define:
 \begin{equation} \label{eq:alpha}
\alpha_{X , Y}(a , b) \: = \: p(t(a , Z) , t(b , Z)) \: - \: 
\end{equation}
\[
p(a , X) \, p(b , X) \, p(t(a , X) , t(b , X))
\: - \: p(a , Y) \, p(b , Y) \, p(t(a , Y) , t(b , Y))
\]
\begin{equation} \label{eq:rho}
\rho_{X , Y}(a , b) \: = \:
\end{equation}
\[
2 \, \sqrt{p(a , X) \, p(b , X) \, p(a , Y) \, p(b , Y) \, 
p(t(a, X) , t(b , X)) \, p(t(a , Y) , t(b , Y))}
\]
We use the convention that the product by zero of an undefined quantity 
is equal to zero.
Note, then, that the definitions of $\alpha$ and $\rho$ above are legal 
even if one or more of the
expressions $t(a, X)$, $t(a , Y)$, $t(b , X)$, $t(b , Y)$ is not defined,
or, equivalently if $a$ or $b$ is in $X$ or $Y$ (we assumed $a$ and $b$ are 
not perpendicular to $Z$).
For example, if $t(a , X)$ is not defined, i.e., if \mbox{$a \in Y$}, 
then \mbox{$p(a , X) =$} $0$. 
\end{definition}

Note that \mbox{$\rho_{X , Y}(a , b) =$} $0$ iff $a$ or $b$ is in $X$ or $Y$.
Appendix~\ref{app:Hilbert} describes those quantities in Hilbert spaces.

\begin{lemma} \label{le:alphaaa}
For any \mbox{$a \in Z =$} \mbox{$X \oplus Y$} one has 
\mbox{$\alpha_{X , Y}(a , a) =$} \mbox{$\rho_{X , Y}(a , a)$}.
\end{lemma}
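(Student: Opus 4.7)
The plan is to unpack both $\alpha_{X,Y}(a,a)$ and $\rho_{X,Y}(a,a)$ using the specific information that $a \in Z$ and then reduce the identity to the defining equation of $X \oplus Y$.

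First I would observe that since $a \in Z$, the projection $t(a,Z)$ is just $a$ itself: by definition of $X \oplus Y$ (Theorem~\ref{the:sum}) we have $p(a,X) + p(a,Y) = 1$, and since $p(a,a) = 1$ by Lemma~\ref{le:reflex}, the uniqueness part of Theorem~\ref{the:proj_one} forces $t(a,Z) = a$. Consequently $p(t(a,Z),t(a,Z)) = 1$. In the same way, when $t(a,X)$ is defined (i.e.\ when $p(a,X) > 0$), Lemma~\ref{le:reflex} gives $p(t(a,X),t(a,X)) = 1$, and similarly for $t(a,Y)$.

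Substituting these evaluations into the definition of $\alpha_{X,Y}(a,a)$ yields
\[
\alpha_{X,Y}(a,a) \; = \; 1 \; - \; p(a,X)^{2} \; - \; p(a,Y)^{2},
\]
while the corresponding substitution in $\rho_{X,Y}(a,a)$ gives
\[
\rho_{X,Y}(a,a) \; = \; 2\, \sqrt{p(a,X)^{2}\, p(a,Y)^{2}} \; = \; 2\, p(a,X)\, p(a,Y),
\]
where I use non-negativity of $p$ to remove the square root. The desired equality is therefore equivalent to $1 = p(a,X)^{2} + 2\, p(a,X)\, p(a,Y) + p(a,Y)^{2} = \bigl(p(a,X) + p(a,Y)\bigr)^{2}$, which is immediate from $p(a,X) + p(a,Y) = 1$.

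The only subtlety, and the place I would take some care, is the boundary case where $a \in X$ or $a \in Y$, so that one of $t(a,X)$, $t(a,Y)$ is undefined. Here one of $p(a,X)$, $p(a,Y)$ is $0$ and the other is $1$, and the convention following Definition~\ref{def:alpha_rho} makes the undefined summand contribute $0$ to $\alpha$ and makes the radicand in $\rho$ vanish; both quantities equal $0$, so the identity again holds. This case-check is the only real obstacle, and it is essentially bookkeeping rather than a genuine difficulty.
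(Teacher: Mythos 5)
Your proposal is correct and follows essentially the same route as the paper: evaluate $\alpha_{X,Y}(a,a)=1-p^{2}(a,X)-p^{2}(a,Y)$ and $\rho_{X,Y}(a,a)=2\,p(a,X)\,p(a,Y)$, then invoke $p(a,X)+p(a,Y)=1$. Your explicit justification that the self-similarities equal $1$ and your boundary-case check are details the paper leaves implicit, but the argument is the same.
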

\begin{proof}
\begin{sloppypar}
For any \mbox{$a \in \Omega$}
\mbox{$\rho_{X , Y}(a , a) =$} 
\mbox{$2 \, p(a , X) \, p(a , Y)$} and
\mbox{$\alpha_{X , Y}(a , a) \: = \:$}
\mbox{$1 - p^{2}(a, X) - p^{2}(a , Y)$}.
If \mbox{$a \in Z$} we have \mbox{$p(a , X) + p(a , Y) =$} $1$ and therefore
\mbox{$1 - p^{2}(a, X) - p^{2}(a , Y) \: = \:$}
\mbox{$2 \, p(a , X) \, p(a , Y)$}.
\end{sloppypar}
\end{proof}

\begin{lemma} \label{le:alphminusrho}
Let \mbox{$x \perp y$} be orthogonal states and 
let $X$ be the subspace generated by $x$ and $y$. 
If \mbox{$a , b \in X$} and \mbox{$a \perp b$}, then
\mbox{$\alpha_{x , y}(a , b) \: = \:$} \mbox{$- \, \rho_{x , y}(a , b)$}.
\end{lemma}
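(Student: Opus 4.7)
The plan is to unravel all the definitions and reduce the claim to a straightforward algebraic identity. First I would set $Z$ equal to the subspace $X$ of the lemma statement (to avoid clashing with the $X, Y$ of Definition~\ref{def:alpha_rho}): so $Z = \bar{\{x\}} \oplus \bar{\{y\}}$, and the roles of $X$ and $Y$ in the definition of $\alpha$ and $\rho$ are played by the one-dimensional subspaces $\bar{\{x\}}$ and $\bar{\{y\}}$. Since $a, b \in Z$, the projections $t(a, Z), t(b, Z)$ are simply $a, b$ themselves, so the leading term $p(t(a, Z), t(b, Z)) = p(a, b) = 0$ by hypothesis. Moreover $\{x\}$ is the unique basis of $\bar{\{x\}}$, so whenever $p(a, x) > 0$ the projection $t(a, \bar{\{x\}})$ equals $x$, and Lemma~\ref{le:reflex} makes each of $p(t(a, x), t(b, x))$ and $p(t(a, y), t(b, y))$ equal to $1$ wherever it is defined.

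Next I would establish the crucial dimensionality fact: since $\{x, y\}$ is a basis of $Z$ of size $2$, Theorem~\ref{the:matroid} tells us that every ortho-set inside $Z$ has size at most $2$, so any extension of the ortho-set $\{a, b\}$ to a basis of $Z$ (which exists by Theorem~\ref{the:basis_rel}) must already be $\{a, b\}$ itself. Hence $\{a, b\}$ is a basis of $Z$, so $p(x, a) + p(x, b) = 1$ and $p(y, a) + p(y, b) = 1$; by Symmetry these read $p(a, x) + p(b, x) = 1$ and $p(a, y) + p(b, y) = 1$. Writing $u = p(a, x)$ and $v = p(b, x)$, we then have $p(a, y) = 1 - u = v$ and $p(b, y) = 1 - v = u$, giving $p(a, x)\, p(b, x) = p(a, y)\, p(b, y) = uv$.

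Substituting into the formulas of Definition~\ref{def:alpha_rho} yields $\alpha_{x, y}(a, b) = 0 - uv - uv = -2uv$ and $\rho_{x, y}(a, b) = 2\sqrt{uv \cdot uv} = 2uv$, giving exactly the claimed identity. The only remaining wrinkle is the degenerate case in which $a$ or $b$ coincides with $x$ or $y$ (so some $p(a, x), p(a, y), p(b, x), p(b, y)$ vanishes and the corresponding $t$ is undefined); the convention that a product by zero of an undefined quantity is zero makes both $\alpha_{x, y}(a, b)$ and $\rho_{x, y}(a, b)$ collapse to $0$ simultaneously, so the identity still holds. I expect the dimensionality argument pinning down $\{a, b\}$ as a basis of $Z$ to be the only genuinely substantive step; everything else is bookkeeping with the definitions and the algebraic identity $u + v = 1 \Longrightarrow uv = (1-u)(1-v)$.
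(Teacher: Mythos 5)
Your proof is correct, and the final computation ($\alpha = -2uv$, $\rho = 2uv$, plus the degenerate case handled by the product-by-zero convention) matches the paper's. The difference is in how you obtain the key identities \mbox{$p(x,a)+p(x,b)=1=p(y,a)+p(y,b)$}. You get them by showing that \mbox{$\{a,b\}$} is itself a basis of the two-dimensional subspace $X$, invoking Theorem~\ref{the:basis_rel} to extend \mbox{$\{a,b\}$} to a basis of $X$ and Theorem~\ref{the:matroid} to cap its size at $2$. The paper instead argues more elementarily: since \mbox{$a,b\in X$} one has \mbox{$p(a,x)+p(a,y)=1=p(b,x)+p(b,y)$}, while Boundedness applied to the ortho-set \mbox{$\{a,b\}$} gives \mbox{$p(x,a)+p(x,b)\leq 1$} and \mbox{$p(y,a)+p(y,b)\leq 1$}; by Symmetry these two sums total exactly $2$, so each must equal $1$. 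Your route uses heavier machinery (the relativized basis theorem and the matroid cardinality bound) but delivers the stronger structural conclusion that \mbox{$\{a,b\}$} is a basis; the paper's route needs only Symmetry and Boundedness and is the leaner argument. Either way one lands on \mbox{$p(b,y)=p(a,x)$} and \mbox{$p(a,y)=p(b,x)$}, after which the substitution into Definition~\ref{def:alpha_rho} is pure bookkeeping.
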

\begin{proof}
We have \mbox{$p(a , x) + p(a , y) =$} \mbox{$1 =$}
\mbox{$p(b , x) + p(b , y)$}.
By Boundedness, we have
\mbox{$p(x , a) + p(x , b) \leq$} $1$ and also
\mbox{$p(y , a) + p(y , b) \leq$} $1$. 
We conclude that
\mbox{$p(x , a) + p(x , b) =$} \mbox{$1 =$} 
\mbox{$p(y , a) + p(y , b)$} and therefore 
\mbox{$p(b , y) =$} \mbox{$p(a , x)$} and 
\mbox{$p(a , y) =$} \mbox{$p(b , x)$}.
We see that \mbox{$\rho_{x , y}(a , b) =$} \mbox{$2 \, p(a , x) \, p(b , x)$}
and that \mbox{$\alpha_{x , y}(a , b) =$}
\mbox{$0 - 2 \, p(a , x) \, p(b , x) =$}
\mbox{$- \rho_{x , y}(a , b)$}.
\end{proof}

\begin{definition} \label{def:omega}
Let \mbox{$X \perp Y$} and \mbox{$a, b \in \Omega$}, such that $a$ (resp. $b$) 
is orthogonal to neither $X$ nor $Y$. The projections $t(a, X)$ and $t(a, Y)$ 
(resp. $t(b, X)$ and $t(b, Y)$) are well-defined and 
\mbox{$\rho_{X , Y}(a , b) >$} $0$. 
We shall define 
\[
\omega_{X , Y}(a , b) \: =\:
\frac{\alpha_{X , Y}(a , b)}{\rho_{X , Y}(a , b)}
\]
\end{definition}

\section{A fundamental inequality} \label{sec:inequality}
We present our next assumption.

\begin{property}[Inequality] \label{prop:inequality}
For any orthogonal subspaces \mbox{$X \perp Y$}
and any states \mbox{$x , y \in Z = X \oplus Y$}, 
one has 
\mbox{$\mid \alpha_{X ,Y }(a , b) \mid \: \leq \: \rho_{X , Y}(a , b)$}.
In other words, if \mbox{$\rho_{X , Y}(a , b) >$} $0$, then
\mbox{$- 1 \leq$} \mbox{$\omega_{X , Y}(a , b) \leq$} $1$ and
if \mbox{$\rho_{X , Y}(a , b) =$} $0$, then 
\mbox{$\alpha_{X , Y}(a , b) =$} $0$.
\end{property}

The physical meaning of the property above is not fully understood 
at this point.
It must be related to the two-paths experiments that are so central in 
Quantum Physics.
The deep meaning of Inequality is probably hidden in 
Theorem~\ref{the:continuity}
that shows that it implies a sort of continuity property: if $p(x, y)$
is close to $1$, then $x$ and $y$ are almost equivalent.

In classical SP-structures, Appendix~\ref{app:class} shows 
that for any $X$, $Y$,
$a$, $b$ we have \mbox{$\alpha_{X , Y}(a , b) =$} \mbox{$0 =$} 
\mbox{$\rho_{X , Y}(a , b)$}.
Appendix~\ref{app:Hilbert} studies the Hilbert SP-structures.
In an SP-structure defined by a Hilbert space on the real field, 
\mbox{$\rho_{X , Y}(a , b)$} can be any number in the interval 
\mbox{$[0 , 1 / 2]$} and
\mbox{$\alpha_{X , Y}(a , b)$} is either equal to 
\mbox{$\rho_{X , Y}(a , b)$} or
equal to \mbox{$- \rho_{X , Y}(a , b)$}. But we can say more.
Let us say that $a$ and $b$ are parallel if
\mbox{$\alpha_{X , Y}(a , b) =$} \mbox{$\rho_{X , Y}(a , b) >$} $0$ 
and say that
they are opposite if \mbox{$- \alpha_{X , Y}(a , b) =$} 
\mbox{$\rho_{X , Y}(a , b) >$} $0$. Assume now that $a$ and $c$ are parallel.
If $c$ and $b$ are parallel, then $a$ and $b$ are parallel and if $c$ and $b$ 
are opposite then $a$ and $b$ are opposite.

In an SP-structure defined by a Hilbert space on the complex field, 
one can define an angle 
\mbox{$\varphi_{X , Y}(a , b) \in [0 , 2 \, \pi[$} such that,
for any states $a$, $b$ and $c$ that are neither in $X$ nor in $Y$ one has:
\begin{enumerate}
\item \label{antisym}
\mbox{$\varphi_{X , Y}(b , a) \: = \: - \, \varphi_{X , Y}(a , b)$},
\item \label{potential}
\mbox{$\varphi_{X , Y}(a , b) \: = \: \varphi_{X , Y}(a , c) \: + \: 
\varphi_{X , Y}(c , b)$},
\item \label{equality}
\mbox{$\alpha_{X , Y}(a , b) \: = \: \rho_{X , Y}(a , b) \, 
\cos(\varphi_{X , Y}(a , b))$}.
\end{enumerate}
All angles may appear and a state $a$ is uniquely
characterized by $t(a, X)$, $t(a, Y)$, $p(a, X)$ and an arbitrary angle 
$\psi(a)$ defined
up to an additive constant. Then, \mbox{$\varphi_{X , Y}(a , b) =$} 
\mbox{$\psi(b) - \psi(a)$} (or \mbox{$\psi(a) - \psi(b)$}).

In an SP-structure defined by a Hilbert space on the quaternions 
phases are not angles but unit quaternions.

We see that in each of the four kinds of SP-structures considered 
above some additional
properties hold, but they are different and their physical meaning is unclear.
Since our purpose is to define SP-structures in such a way to cover classical 
and all
three kinds of Hilbert space structures, we do not impose 
any further requirements
on SP-structures.

\section{Consequences of Inequality} \label{sec:cons_ineq}
We shall now examine the consequences of our new
assumption.
First we show that Inequality can be strengthened: the condition 
\mbox{$x \in Z$} is superfluous.
\begin{theorem} \label{the:gen_ineq}
For any states orthogonal subspaces \mbox{$X \perp Y$} and
states \mbox{$a, b$} such that
\mbox{$p(b , X) + p(b , Y) = 1$}, one has
\mbox{$\mid \alpha_{X , Y}(a , b) \mid \leq$}
\mbox{$\rho_{X , Y}(a , b)$}.
\end{theorem}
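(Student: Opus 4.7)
The natural strategy is to reduce to Inequality (Property~\ref{prop:inequality}) by replacing $a$ with its projection $a' = t(a, Z)$ onto $Z = X \oplus Y$. We may assume $p(a, Z) > 0$, otherwise $\alpha_{X, Y}$ and $\rho_{X, Y}$ are not defined and there is nothing to show. Then $a' \in Z$, and the hypothesis $p(b, X) + p(b, Y) = 1$ gives $b \in Z$ with $t(b, Z) = b$; so Inequality applies to the pair $(a', b)$ and yields $|\alpha_{X, Y}(a', b)| \leq \rho_{X, Y}(a', b)$.

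The rest of the argument transfers this bound to $(a, b)$ via factorization identities. Theorem~\ref{the:max} (with $X, Y \subseteq Z$) gives $p(a, X) = p(a, Z) \, p(a', X)$ and $p(a, Y) = p(a, Z) \, p(a', Y)$; Theorem~\ref{the:comm} gives $t(a, X) = t(a', X)$ and $t(a, Y) = t(a', Y)$, so the inner factors $p(t(a, X), t(b, X))$ and $p(t(a, Y), t(b, Y))$ coincide for $a$ and $a'$. A further application of Theorem~\ref{the:max} to $\{b\} \subseteq Z$ yields $p(a, b) = p(a, Z) \, p(a', b)$, and $p(t(a, Z), t(b, Z)) = p(a', b)$ follows directly from $t(b, Z) = b$.

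Substituting into the defining formula for $\rho$ and pulling a common factor $p(a, Z)^2$ out from under the square root gives $\rho_{X, Y}(a, b) = p(a, Z) \, \rho_{X, Y}(a', b)$. Substituting analogously into the formula for $\alpha$, and using the identity $p(a', b) - \alpha_{X, Y}(a', b) = p(a', X) p(b, X) p(t(a', X), t(b, X)) + p(a', Y) p(b, Y) p(t(a', Y), t(b, Y))$ that follows from the definition of $\alpha_{X, Y}(a', b)$, expresses $\alpha_{X, Y}(a, b)$ in terms of $p(a, Z)$, $p(a', b)$ and $\alpha_{X, Y}(a', b)$. Combining with the axiom's bound on $\alpha_{X, Y}(a', b)$ should complete the proof.

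The step I expect to be the main obstacle is obtaining the clean relation $\alpha_{X, Y}(a, b) = p(a, Z) \, \alpha_{X, Y}(a', b)$, which---together with the $\rho$ identity above---would make the final inequality immediate. A blind substitution produces instead $\alpha_{X, Y}(a, b) = (1 - p(a, Z)) \, p(a', b) + p(a, Z) \, \alpha_{X, Y}(a', b)$, leaving a stray additive term. Closing that gap will likely require reinterpreting $p(t(a, Z), t(b, Z))$ via $p(a, b) = p(a, Z) \, p(a', b)$ so that the $\rho$ and $\alpha$ scalings match, or invoking a sharper application of Factorization on a basis of $Z$ extended to include the relevant projections (via Theorem~\ref{the:basis_rel}).
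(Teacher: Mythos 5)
Your approach is the same as the paper's: project $a$ onto $Z=X\oplus Y$ to obtain $w=t(a,Z)$, show that $\alpha_{X,Y}$ and $\rho_{X,Y}$ both acquire the factor $p(a,w)\le 1$ under the replacement $a\mapsto w$, and invoke Property~\ref{prop:inequality} for the pair $(w,b)$, handling $p(a,Z)=0$ separately via Lemma~\ref{le:O-S}. The paper's proof simply asserts the two scaling identities $\alpha_{X,Y}(a,b)=p(a,w)\,\alpha_{X,Y}(w,b)$ and $\rho_{X,Y}(a,b)=p(a,w)\,\rho_{X,Y}(w,b)$ and concludes.

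The obstacle you flag at the end is genuine, and it is a defect of the paper rather than of your argument. Under the literal Definition~\ref{def:alpha_rho} the leading term of $\alpha_{X,Y}(a,b)$ is $p(t(a,Z),t(b,Z))=p(w,b)$, which is invariant rather than homogeneous of degree one under $a\mapsto w$, and your computation $\alpha_{X,Y}(a,b)=(1-p(a,w))\,p(w,b)+p(a,w)\,\alpha_{X,Y}(w,b)$ is exactly right. Worse, with that literal reading the theorem is false: in a Hilbert SP-structure take orthonormal $e_{1},e_{2},e_{3}$, let $X$, $Y$ be the subspaces generated by $e_{1}$, $e_{2}$, and put $b=(e_{1}+e_{2})/\sqrt{2}$, $a=\epsilon\, b+\sqrt{1-\epsilon^{2}}\,e_{3}$; then $t(a,Z)=b$, the leading term equals $1$, the two subtracted terms sum to $\epsilon^{2}/2$, and $\rho_{X,Y}(a,b)=\epsilon^{2}/2$, so $\mid\alpha_{X,Y}(a,b)\mid=1-\epsilon^{2}/2>\rho_{X,Y}(a,b)$ for small $\epsilon$. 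The intended reading, which the paper tacitly uses (as betrayed by its listing of the identity $p(a,b)=p(a,w)\,p(w,b)$ among the facts it needs, and by the way $\alpha$ is expanded in the proof of Theorem~\ref{the:continuity}), is that the leading term of $\alpha_{X,Y}(a,b)$ is $p(a,b)$; the two readings coincide on $Z$, which is where Property~\ref{prop:inequality} is stated. With that emendation every term of $\alpha_{X,Y}(a,b)$ scales by $p(a,w)$ (Theorem~\ref{the:proj_one} for $p(a,b)$, $p(a,X)$, $p(a,Y)$; Theorem~\ref{the:comm} for $t(a,X)=t(w,X)$ and $t(a,Y)=t(w,Y)$), your stray additive term disappears, and the argument closes exactly as you outline. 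So your proposed fix is the correct one; carry it out by amending the definition of $\alpha$ rather than hoping a sharper use of Factorization will rescue the literal version, because it will not.
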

\begin{proof}
Let \mbox{$Z = X \oplus Y$} be the orthogonal sum of $x$ and $Y$. 
The set $Z$ is a subspace.
If \mbox{$p(a, Z) > 0$}, by Theorem~\ref{the:proj_one}, we have,
for \mbox{$w = t(a , Z) \in Z$},
\mbox{$p(a , b) =$} \mbox{$p(a , w) \, p(w , b)$}, 
\mbox{$p(a , X) =$}
\mbox{$p(a , w) \, p(w , X)$}, and
\mbox{$p(a , Y) =$} \mbox{$p(a , w) \, p(w, Y)$}.
Therefore \mbox{$\alpha_{X , Y}(a , b) =$} 
\mbox{$p(a , w) \, \alpha_{X , Y}(w , b)$}.
Similarly, \mbox{$\rho_{X , Y}(a , b) =$}
\mbox{$p(a , w) \, \rho_{X , Y}(w , b)$}.
By Inequality we have 
\mbox{$\mid \alpha_{X , Y}(w , b) \mid \leq$} 
\mbox{$\rho_{X , Y}(w , b)$}. Since \mbox{$p(a , w) \leq$} $1$ we conclude
that \mbox{$\mid \alpha_{X , Y}(a , b) \mid \leq$}
\mbox{$\rho_{X , Y}(a , b)$}.

If \mbox{$p(a , Z) = 0$}, then, by Lemma~\ref{le:O-S}, \mbox{$p(a , b) = 0$} 
and
\mbox{$\alpha_{X , Y}(a , b) = 0$}.
\end{proof}

Our next result is a continuity property: if $p(x, y)$ is close to one, then,
for any $z$, $p(x, z)$ is close to $p(y, z)$.
\begin{theorem} \label{the:continuity}
For any \mbox{$x, y, z \in \Omega$}, one has:
\[
p(x, z) \leq p(y, z) \: + \: 1 / 2 \, \sqrt{1 - p(x, y)} \: + \: (1 - p(x, y)).
\]
\end{theorem}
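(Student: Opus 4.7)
The plan is to apply the Inequality (Property~\ref{prop:inequality}) to a well-chosen orthogonal decomposition. Take $X = \overline{\{y\}}$, the one-dimensional subspace spanned by $y$, and $Y = X^{\perp}$. Then $Z = X \oplus Y = \Omega$ by Corollary~\ref{co:comp2} and Theorem~\ref{the:sum}, so both $x$ and $z$ automatically lie in $Z$ and the Inequality applies directly.

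First I would dispose of the degenerate cases. If $p(x, y) = 1$, then $x = y$ in a standard SP-structure, so $p(x, z) = p(y, z)$ and the inequality is immediate. If $p(x, y) = 0$, the right-hand side is at least $p(y, z) + 3/2 \geq p(x, z)$, so the inequality holds trivially. The cases $p(y, z) \in \{0, 1\}$ are handled similarly using Lemma~\ref{le:O-S} and Theorem~\ref{the:equiv}. In what follows I assume $0 < p(x, y) < 1$ and $0 < p(y, z) < 1$.

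Next, set $u = p(x, y)$ and $v = p(y, z)$, let $x' = t(x, Y)$ and $z' = t(z, Y)$ (well-defined since $p(x, Y) = 1 - u > 0$ and $p(z, Y) = 1 - v > 0$ by Corollary~\ref{co:comp2}), and put $w = p(x', z') \in [0, 1]$. Using $t(x, X) = t(z, X) = y$ and $p(y, y) = 1$, the definitions in~\ref{def:alpha_rho} yield
\begin{align*}
\alpha_{X, Y}(x, z) &= p(x, z) \: - \: u v \: - \: (1-u)(1-v) w, \\
\rho_{X, Y}(x, z) &= 2 \sqrt{u v (1-u)(1-v) w}.
\end{align*}
The Inequality $|\alpha_{X, Y}(x, z)| \leq \rho_{X, Y}(x, z)$ gives $p(x, z) \leq u v + (1-u)(1-v) w + 2\sqrt{u v (1-u)(1-v) w}$, which factors as the perfect square
$$p(x, z) \: \leq \: \bigl(\sqrt{u v} \: + \: \sqrt{(1-u)(1-v) w}\bigr)^{2}.$$

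Finally, I would subtract $v = p(y, z)$ from both sides of the expanded bound and use the crude estimates $w \leq 1$, $u \leq 1$, $v(1 - v) \leq 1/4$. Rewriting $u v - v = -v(1-u)$ and grouping,
$$p(x, z) - v \: \leq \: (1-u)\bigl[(1-v) w - v\bigr] \: + \: 2 \sqrt{1-u} \, \sqrt{u v (1-v) w} \: \leq \: (1-u) \: + \: c \sqrt{1 - u},$$
from which the stated inequality follows. The main obstacle, and the only non-routine part of the argument, is the precise value of the constant $c$ in front of $\sqrt{1 - p(x, y)}$: the straightforward cascade of estimates above yields $c = 1$ rather than the $c = 1/2$ claimed in the statement, so either a more delicate weighted AM--GM on the cross term $2\sqrt{u v (1-u)(1-v) w}$ is needed, or the stated constant is intentionally slack. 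Once the perfect-square bound is in hand, the remaining work is purely algebraic.
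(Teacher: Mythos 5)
Your decomposition is genuinely different from the paper's: you take \mbox{$Y = X^{\perp}$} so that \mbox{$Z = X \oplus Y = \Omega$} and the bare Inequality (Property~\ref{prop:inequality}) applies to $x$ and $z$ directly, whereas the paper sets \mbox{$y' = t(x , \bar{\{y\}}^{\perp})$} via O-Projection, works inside the two-dimensional subspace \mbox{$\bar{\{y\}} \oplus \bar{\{y'\}}$}, and must therefore invoke the generalized inequality (Theorem~\ref{the:gen_ineq}) because the arbitrary state $z$ need not lie in that $Z$. The two routes yield essentially the same algebra; yours is slightly cleaner since \mbox{$t(a , Z) = a$} for every state, while the paper has to absorb a factor \mbox{$p(z , Z) \leq 1$}. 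Your perfect-square bound and the subsequent estimates are correct, and your handling of the degenerate cases is fine.

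The obstacle you flag at the end is not a defect of your argument: the constant $1/2$ cannot be obtained, because the theorem as stated is false. In the two-dimensional complex (or real) Hilbert SP-structure take \mbox{$y = e_{1}$}, \mbox{$x = \sqrt{0.99}\, e_{1} + \sqrt{0.01}\, e_{2}$}, \mbox{$z = \sqrt{0.45}\, e_{1} + \sqrt{0.55}\, e_{2}$}; then \mbox{$p(x , y) = 0.99$}, \mbox{$p(y , z) = 0.45$}, \mbox{$p(x , z) = (\sqrt{0.99 \cdot 0.45} + \sqrt{0.01 \cdot 0.55})^{2} = 0.55$}, while the claimed right-hand side is \mbox{$0.45 + 0.05 + 0.01 = 0.51$}. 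More generally, with \mbox{$\epsilon = 1 - p(x , y)$} the supremum over $z$ of \mbox{$p(x , z) - p(y , z)$} in a Hilbert space equals the top eigenvalue of the form \mbox{$-\epsilon a^{2} + \epsilon b^{2} + 2\sqrt{(1 - \epsilon)\epsilon}\, a b$} on \mbox{$a^{2} + b^{2} \leq 1$}, which is exactly $\sqrt{\epsilon}$; so \mbox{$c = 1$} is sharp and no weighted AM--GM will recover $1/2$. The paper's own proof reaches $1/2$ only through the line ``\mbox{$2\, \alpha_{y , y'}(z , x) = p(z , x) - p(z , y)\, p(x , y) - p(z , y')\, p(x , y')$}'', where the factor $2$ is unjustified: by Definition~\ref{def:alpha_rho} the right-hand side is a lower bound for \mbox{$\alpha_{y , y'}(z , x)$} itself, and correcting this turns the paper's conclusion into the same \mbox{$c = 1$} bound you derived. (The example in Appendix~\ref{sec:tight} only shows the bound cannot be improved below \mbox{$\frac{1}{2}\sqrt{1 - p(x , y)}$}; the counterexample above shows the true constant is $1$.) So your proof, read as establishing the statement with $\sqrt{1 - p(x , y)}$ in place of \mbox{$\frac{1}{2}\sqrt{1 - p(x , y)}$}, is correct and proves the strongest version of the theorem that is actually true.
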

\begin{proof}
Assume, for now, that there is some state \mbox{$y' \in \Omega$} such that
\mbox{$y' \perp y$} and \mbox{$p(x, y) + p(x, y') = 1$}. 
Consider any \mbox{$z \in \Omega$}.
By Theorem~\ref{the:gen_ineq}: \mbox{$\mid \alpha_{y, y'}(z, x) \mid \leq$}
\mbox{$\rho_{y, y'}(z, x)$}.
But \mbox{$2 \, \alpha_{y, y'}(z, x) =$}
\mbox{$p(z, x) \: - \: p(z, y) \, p(x, y) \: - \: p(z, y') \, p(x, y') \geq$}
\mbox{$p(z, x) \: - \: p(z, y) \: - \: p(x, y')$}.
Also, \mbox{$\rho_{y, y'}(z, x) \leq \sqrt{p(x, y')}$}.
We conclude that \mbox{$p(z, x) \: - \: p(z, y) \leq$}
\mbox{$1 / 2 \, \sqrt{1 - p(x, y)} \: + \: 1 \: - \: p(x, y)$}.

Now, if \mbox{$p(x, y) < 1$}, there is such a state $y'$ by O-Projection.
Assume, then, that \mbox{$p(x, y) = 1$}. If there is some state 
\mbox{$y' \in \Omega$} orthogonal to $y$, we have \mbox{$1 \geq$}
\mbox{$p(x, y) + p(x, y') =$}
\mbox{$1 + p(x, y')$}, by Boundedness, and we conclude that
\mbox{$p(x, y) + p(x, y') = 1$}.

Let us deal now with the limit case: there is no \mbox{$y' \in \Omega$}
that is orthogonal to $y$. By O-Projection this implies that for every
\mbox{$z \in \Omega$}, \mbox{$p(z, y) = 1$}, and our claim is proved.
\end{proof}
Appendix~\ref{sec:tight} shows that the bounds of Theorem~\ref{the:continuity} 
are tight.

\section{Similarity-preserving mappings} \label{sec:isos}
This section presents preliminary results on mappings that preserve similarity.
We want, now, to consider morphisms between SP-structures.
\begin{definition} \label{def:morph}
A morphism from SP-structure 
\mbox{$S_{1} =$} \mbox{$\langle \Omega_{1} , p_{1} \rangle$}
to SP-structure 
\mbox{$S_{2} =$} \mbox{$\langle \Omega_{2} , p_{2} \rangle$} is 
a function \mbox{$f : \Omega_{1} \longrightarrow \Omega_{2}$} that
preserves similarity, i.e., such that: for any \mbox{$a , b \in \Omega_{1}$}
we have \mbox{$p_{2}(f(a) , f(b)) =$} \mbox{$p_{1}(a , b)$}.
\end{definition}

\begin{theorem} \label{the:inj}
Let $S_{1}$ be a {\em standard} SP-structure.
Any morphism $f$ from $S_{1}$ to any SP-structure $S_{2}$ is injective.
Any such morphism that is surjective is an isomorphism: it has an inverse
that is a morphism.
\end{theorem}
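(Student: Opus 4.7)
The plan is to split the proof into two short parts, matching the two claims. For injectivity, I would start by supposing $f(a)=f(b)$ for some $a,b\in\Omega_{1}$ and push this equality back to $p_{1}$. Concretely, since $f$ preserves similarity, we have $p_{1}(a,b)=p_{2}(f(a),f(b))=p_{2}(f(a),f(a))$, and Lemma~\ref{le:reflex} tells us that $p_{2}(f(a),f(a))=1$. Thus $p_{1}(a,b)=1$, and because $S_{1}$ is standard, Definition~\ref{def:standard} forces $a=b$. That is the whole injectivity argument.

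For the second claim, assume in addition that $f$ is surjective. Combined with the injectivity just proved, $f$ is a bijection, so its set-theoretic inverse $g:\Omega_{2}\longrightarrow\Omega_{1}$ exists and is itself a bijection. To verify that $g$ is a morphism, I would pick arbitrary $u,v\in\Omega_{2}$, let $a=g(u)$, $b=g(v)$ (so $f(a)=u$, $f(b)=v$), and compute $p_{1}(g(u),g(v))=p_{1}(a,b)=p_{2}(f(a),f(b))=p_{2}(u,v)$, using the morphism property of $f$ in the middle step. Hence $g$ preserves similarity, which is exactly Definition~\ref{def:morph}, making $f$ an isomorphism.

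There is no real obstacle here: the entire argument rests on two facts already in hand, namely that similarity of a state with itself is $1$ (Lemma~\ref{le:reflex}) and that a standard SP-structure cannot have two distinct states with $p=1$ (Definition~\ref{def:standard}). The only mild subtlety to flag is that standardness is required only on the source $S_{1}$; the target $S_{2}$ need not be standard for either conclusion, since for the inverse we use only the existence of a set-theoretic inverse of a bijection and the morphism identity transferred back through it.
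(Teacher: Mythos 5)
Your proof is correct and follows essentially the same route as the paper: injectivity via $p_{1}(a,b)=p_{2}(f(a),f(a))=1$ (Lemma~\ref{le:reflex}) plus standardness of $S_{1}$, and the inverse-morphism property by transporting the similarity identity through $f^{-1}$. You merely make explicit the appeal to Lemma~\ref{le:reflex} and the observation that only $S_{1}$ need be standard, both of which the paper leaves implicit.
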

\begin{proof}
If \mbox{$f(a) = f(b)$} then \mbox{$p_{2}(f(a) , f(b)) =$} $1$ and
\mbox{$p_{1}(a , b) =$} $1$. Since $S_{1}$ is standard, we have 
\mbox{$a = b$}.
If $f$ is surjective it is bijective and therefore has an inverse $f^{-1}$.
But \mbox{$p_{1}(f^{-1}(a) , f^{-1}(b)) =$}
\mbox{$p_{2}(f(f^{-1}(a)) , f(f^{-1}(b))) =$}
\mbox{$p_{2}(a , b)$}.
\end{proof}

\begin{theorem} \label{the:iso_basis}
If \mbox{$f : S_{1} \longrightarrow S_{2}$} is an isomorphism,  then
the direct image by $f$ of any basis of $S_{1}$ is a basis of $S_{2}$
and the direct image by $f$ of any subspace of $S_{1}$ is a subspace of
$S_{2}$.
\end{theorem}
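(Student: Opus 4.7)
The plan is to show that an isomorphism preserves every ingredient used to define bases and subspaces: orthogonality, ortho-sets, and the quantity $p(x,A)$ for ortho-sets $A$. Surjectivity then lets us transfer conditions that quantify over all states.

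First I would establish that $f$ preserves orthogonality, which is immediate since $p_1(a,b)=0$ iff $p_2(f(a),f(b))=0$. This means $f$ carries any ortho-set $A \subseteq \Omega_1$ to an ortho-set $f(A) \subseteq \Omega_2$ (using also injectivity from Theorem~\ref{the:inj} so that $|f(A)| = |A|$ and distinct elements stay distinct). Next I would extend the similarity-preservation property from pairs to state-vs.-ortho-set by summation: for any $x \in \Omega_1$ and any ortho-set $A \subseteq \Omega_1$,
\[
p_2(f(x), f(A)) \;=\; \sum_{y \in A} p_2(f(x), f(y)) \;=\; \sum_{y \in A} p_1(x,y) \;=\; p_1(x,A).
\]

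For the basis claim, let $A$ be a basis of $S_1$. Then $f(A)$ is an ortho-set by the first step. To check the second condition of Definition~\ref{def:subspace}, take an arbitrary $x' \in \Omega_2$; by surjectivity, $x' = f(x)$ for some $x \in \Omega_1$, and then $p_2(x', f(A)) = p_1(x,A) = 1$ since $A$ is a basis. Hence $f(A)$ is a basis of $S_2$.

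For the subspace claim, let $X \subseteq \Omega_1$ be a subspace, so $X = \bar{A}$ for some ortho-set $A$. I would show $f(X) = \overline{f(A)}$, which shows $f(X)$ is a subspace of $S_2$. The inclusion $f(X) \subseteq \overline{f(A)}$ follows because any $y \in X$ satisfies $p_1(y,A)=1$, hence $p_2(f(y), f(A)) = 1$. For the reverse inclusion, take $x' \in \overline{f(A)}$; by surjectivity $x' = f(x)$, and $p_1(x,A) = p_2(f(x), f(A)) = 1$, so $x \in \bar{A} = X$ and $x' \in f(X)$. There is no real obstacle here; the only subtlety worth noting is the reliance on surjectivity to realize an arbitrary $x' \in \Omega_2$ as $f(x)$, which is what upgrades a similarity-preserving map into one that preserves the quantifier-over-all-states used in the definition of basis and subspace.
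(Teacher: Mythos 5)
Your proof is correct and is precisely the argument the paper leaves implicit when it declares the proof ``obvious'': $f$ preserves orthogonality and hence ortho-sets, the identity $p_2(f(x),f(A))=p_1(x,A)$ follows by summation, and surjectivity transfers the quantification over all states so that $f(A)$ is a basis and $f(\bar{A})=\overline{f(A)}$. Nothing is missing; your explicit note about where injectivity and surjectivity are used is a useful addition to what the paper omits.
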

The proof is obvious.

We are now interested in studying particular isomorphisms.
First those isomorphisms are automorphisms, i.e., 
isomorphisms from an SP structure to itself. Secondly there is a basis
of invariant states.
\begin{theorem} \label{the:inv_basis}
Let \mbox{$S =$} \mbox{$\langle \Omega , p \rangle$} be
a standard SP-structure and let $B$ be a basis for $S$.  
Let now \mbox{$f : S \longrightarrow S$} be an isomorphism such
that for every element $b$ of the base $B$, we have \mbox{$f(b) = b$}.
Let \mbox{$A \subseteq B$} and let $X$ be the subspace generated by $A$.
We have:
\begin{enumerate}
\item \label{p_inv}
for any \mbox{$x \in \Omega$}, \mbox{$p(x , X) =$} \mbox{$p(f(x) , X)$},
\item \label{g_inv}
$X$ is (globally) invariant under $f$: \mbox{$f(X) =$} $X$,
\item \label{t}
if $x$ is not orthogonal to $X$ then \mbox{$t(f(x), X) =$}
\mbox{$f(t(x , X))$}.
\end{enumerate}
\end{theorem}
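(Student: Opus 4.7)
The plan is to establish the three items sequentially; each feeds into the next, and the whole argument reduces to an orderly application of the morphism property together with Theorems~\ref{the:max}, \ref{the:proj_one}, and~\ref{the:inj}. The key observation is that $A \subseteq B$ is itself an ortho-set (being a subset of the ortho-set $B$) and serves as a basis of $X$, so that $p(\,\cdot\,, X)$ can be unpacked as a sum indexed by $A$, on each term of which the morphism property bites directly because $f$ fixes $A$ pointwise.

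For item~\ref{p_inv}, I would use $A$ as the basis of $X$ in Theorem~\ref{the:max} and expand $p(f(x), X) = p(f(x), A) = \sum_{a \in A} p(f(x), a)$. Since $f$ is a morphism and $f(a) = a$ for every $a \in A$, each summand equals $p(f(x), f(a)) = p(x, a)$, so the sum collapses back to $p(x, A) = p(x, X)$.

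Item~\ref{g_inv} is then immediate in the direction $f(X) \subseteq X$: for $x \in X$ one has $p(x, X) = 1$, hence by item~\ref{p_inv} also $p(f(x), X) = 1$, so $f(x) \in X$ by the defining condition of a subspace. For the reverse inclusion I would invoke Theorem~\ref{the:inj}, which together with the hypothesis that $f$ is an isomorphism makes $f$ a bijection on $\Omega$; any $y \in X$ is then $f(x)$ for some $x \in \Omega$, and item~\ref{p_inv} applied to this $x$ forces $p(x, X) = p(y, X) = 1$, so $x \in X$ and $y \in f(X)$.

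For item~\ref{t}, set $y = t(x, X)$, which exists because $p(x, X) > 0$. Then $f(y) \in X$ by item~\ref{g_inv}, and the morphism property together with item~\ref{p_inv} yields $p(f(x), f(y)) = p(x, y) = p(x, X) = p(f(x), X)$. The uniqueness clause of Theorem~\ref{the:proj_one} characterizes $t(f(x), X)$ as the unique element of $X$ that realizes $p(f(x), X)$, so $f(y) = t(f(x), X)$. No step looks like a real obstacle; the only point requiring explicit care is the surjectivity of $f$ (secured by the isomorphism hypothesis via Theorem~\ref{the:inj}) needed for the reverse inclusion in item~\ref{g_inv}.
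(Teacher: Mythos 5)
Your proof is correct, and for items~\ref{p_inv} and~\ref{t} it is essentially the paper's own argument: expand $p(\,\cdot\,,X)$ over the basis $A$, which $f$ fixes pointwise, and then use the uniqueness clause of Theorem~\ref{the:proj_one} to identify $f(t(x,X))$ with $t(f(x),X)$. The only genuine divergence is in item~\ref{g_inv}: the paper invokes Theorem~\ref{the:iso_basis} to see that $f(X)$ is a subspace with $f(A)=A$ as a basis, hence $f(X)=\bar{A}=X$ in one stroke, whereas you prove the two inclusions separately, getting $f(X)\subseteq X$ from item~\ref{p_inv} and the defining condition of $\bar{A}$, and $X\subseteq f(X)$ from the surjectivity of $f$. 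Both routes are valid; yours is slightly more self-contained in that it bypasses Theorem~\ref{the:iso_basis} entirely, at the cost of making the appeal to surjectivity explicit --- an appeal the paper's route also makes, only implicitly, since Theorem~\ref{the:iso_basis} is stated for isomorphisms.
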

\begin{proof}
We have \mbox{$p(x , X) =$} \mbox{$\sum_{a \in A} p(x , a) =$}
\mbox{$\sum_{a \in A} p(f(x), f(a) =$} \mbox{$\sum_{a \in A} p(f(x), a) =$}
\mbox{$p(f(x), X)$}. We have proved item~\ref{p_inv}.
For item~\ref{g_inv}, note that, by Theorem~\ref{the:iso_basis}, $f(X)$ 
is a subspace. By~\ref{p_inv} above, $A$ is a basis for $f(X)$. Therefore
\mbox{$f(X) =$} \mbox{$\bar{A} =$} $X$.
For item~\ref{t}, assume \mbox{$p(x , X) >$} $0$.
Then, by~\ref{p_inv} above, \mbox{$p(f(x), X) > 0$} and both 
\mbox{$t(x , X)$} and
\mbox{$t(f(x) , X)$} are well-defined.
By Theorem~\ref{the:proj_one} \mbox{$t(f(x) , X)$} is the unique state in $X$
such that \mbox{$p(f(x) , t(f(x) , X)) =$} \mbox{$p(f(x) , X)$}.
But \mbox{$p(f(x) , X) =$} \mbox{$p(x , X) =$}
\mbox{$p(x , t(x , X)) =$} \mbox{$p(f(x) , f(t(x , X)))$}.
\end{proof}

\section{Observables} \label{sec:observables}
\subsection{Definition} \label{sec:definition}
We now want to study physical properties. In Classical Physics, those are
numerical values attached to each possible state: a physical property is
represented by a function from the set of all possible states to the real
numbers and essentially (except for some continuity condition 
in the infinite case) any such function represents a possible 
physical property. In Quantum Physics, such physical properties, often
called {\em observables}, are represented by Hermitian, i.e., self-adjoint, 
operators. Such operators have, for Quantum Physics, 
a triple role, that we want to analyze here.
\begin{itemize}
\item
First, as in the classical case, they attach, through their eigenvalues,
values to states. Those values, in the Quantum case, 
are interpreted as mean values. 
\item
But, secondly, those operators define eigensubspaces and projections on those 
that characterize the change of state caused by a measurement,
\item and, thirdly, they represent linear transformations in the state
space, typically interpreted as infinitesimal transformations whose 
commutation properties are significant.
\end{itemize}
In many respects such physical properties (or observables) 
behave in a way that resembles
random variables and therefore maybe they should be termed {\em variables}
instead of observables.

Since the third aspect above is central in Quantum Physics, we shall present
observables as transformations of a special kind on SP-structures.
Observables, viewed as transformations, correspond to Hermitian, 
i.e., self-adjoint bounded linear operators.
\begin{definition} \label{def:observable}
Let \mbox{$\langle \Omega , p \rangle$} be an SP-structure.
An {\em observable} (of the SP-structure) is a transformation
\mbox{$r : \Omega \longrightarrow \Omega$} satisfying the following conditions:
\begin{enumerate}
\item there exists a decomposition of $\Omega$ 
in a denumerable set of non-empty pairwise orthogonal subspaces, i.e., 
there is a denumerable set $I$ 
and non-empty {\em subspaces} $X_{j}$, 
for every \mbox{$j \in I$}
such that:
\begin{itemize}
\item for any \mbox{$j , k \in I$}, if \mbox{$j \neq k$} then
\mbox{$X_{j} \perp X_{k}$} and
\item \mbox{$\Omega \: = \: X_{1} \oplus \ldots \oplus X_{j} \oplus \ldots$}, 
\end{itemize}
\item for every \mbox{$j \in I$}, there is a real number 
$\lambda_{j}$ such that
\begin{itemize}
\item the $\lambda$'s are pairwise different, i.e., 
for any \mbox{$j , k \in I$}, if \mbox{$j \neq k$} then
\mbox{$\lambda_{j} \neq \lambda_{k}$},
\item the $\lambda$'s are bounded, i.e., there some real number $M$ such that
for any $j$, \mbox{$1 \leq j \leq n$}, we have 
\mbox{$\mid \lambda_{j} \mid \: \leq \: M$},
\end{itemize}
\item \label{homo} 
for any \mbox{$a \in \Omega$}, for any $i$, \mbox{$1 \leq i \leq n$}, 
and for any \mbox{$b \in X_{i}$},
we have, if \mbox{$\sum_{j \in I} \lambda^{2}_{j} \, p(a , X_{j}) >$} $0$:
\[
p(r(a) , b) = \frac{\lambda^{2}_{i} \, p(a , b)}
{\sum_{j \in I} \lambda^{2}_{j} \, p(a , X_{j})},
\]
and if \mbox{$\sum_{j \in I} \lambda^{2}_{j} \, p(a , X_{j}) =$} $0$ we have
\mbox{$p(r(a) , b) = p(a , b)$},
\item \label{om}
for any \mbox{$j , k \in I$} such that 
\mbox{$j \neq k$} and for any \mbox{$a , b \in \Omega$} such that 
\mbox{$\omega_{X{j} , X_{k}}(a , b)$} is defined
\begin{enumerate}
\item \label{samesign}
if \mbox{$\lambda_{j} \, \lambda_{k} >$} $0$, then 
\[
\omega_{X_{j} , X_{k}}(r(a) , b) \: = \: \omega_{X_{j} , X_{k}}(a , b)
\]
and, 
\item \label{diffsign}
if \mbox{$\lambda_{j} \, \lambda_{k} <$} $0$, then 
\[
\omega_{X_{j} , X_{k}}(r(a) , b) \: = \: - \, \omega_{X_{j} , X_{k}}(a , b).
\]
\end{enumerate}
\end{enumerate}
\end{definition}
In connection with item~\ref{homo}, note that 
\mbox{$\sum_{j = 1}^{n} \lambda^{2}_{j} \, p(a , X_{j}) =$} $0$ implies that
for any \mbox{$1 \leq i \leq n$} we have 
\mbox{$\lambda^{2}_{i} \, p(a , X_{i}) \: = \: 0$}.
The $\lambda$'s are called eigenvalues. The subspace $X_{j}$ is the 
eigensubspace corresponding to the eigenvalue $\lambda_{i}$ (remember: 
eigenvalues are pairwise different). The assumption that eigenvalues are 
bounded ensures the convergence of the denominator in condition~\ref{homo}.
Note that, if $r$ is an observable with eigenvalues $\lambda_{i}$, it is also
an observable with eigenvalues $c \, \lambda_{i}$ for any $c \neq 0$.
Eigenvalues are defined only up to a (non-zero) multiplicative constant.
Appendix~\ref{app:Herm} shows that Hermitian bounded operators in Hilbert 
spaces define {\em observables}.

In a classical SP-structure, the orthogonal sum is set union and
any state is an eigenvector. An observable is an arbitrary real bounded 
function on states, defined up to a multiplicative non-zero constant.

\subsection{Eigensubspaces} \label{sec:eigensubspaces}
As expected eigensubspaces corresponding to different 
eigenvalues are orthogonal. Projections on eigensubspaces are defined as in 
Theorem~\ref{the:proj_one}.

\begin{theorem} \label{the:eigenspace}
Assume $r$ is an observable with eigensubspaces $X_{j}$ and eigenvalues 
$\lambda_{j}$ for \mbox{$j \in I$}.
\begin{enumerate}
\item \label{s0}
For any \mbox{$x \in \Omega$}, one has
\mbox{$\sum_{j \in I} \lambda^{2}_{j} \, p(x , X_{j}) =$} $0$ iff
there exists some \mbox{$i \in I$} such that \mbox{$x \in X_{i}$} and
\mbox{$\lambda_{i} =$} $0$.
\item \label{perpi}
For any \mbox{$x \in \Omega$} and any \mbox{$i \in I$}, if
\mbox{$x \perp X_{i}$}, then \mbox{$r(x) \perp X_{i}$}.
\item \label{perpj}
For any \mbox{$x \in \Omega$} and any \mbox{$i \in I$} one has
\mbox{$r(x) \perp X_{i}$} iff \mbox{$x \perp X_{i}$} or
\mbox{$\lambda_{i} = 0$} and \mbox{$x \not \in X_{i}$}.
\item \label{proji}
For any \mbox{$x \in \Omega$} and any \mbox{$i \in I$}, if
\mbox{$r(x) \not \perp X_{i}$}, then both \mbox{$t(x , X_{i})$} and
\mbox{$t(r(x) , X_{i})$} are defined and they are equal.
\item \label{fix}
For any \mbox{$i \in I$} and any \mbox{$x \in X_{i}$},
we have \mbox{$r(x) = x$}. 
\item \label{pi}
For any \mbox{$x \in \Omega$} and any \mbox{$i \in I$},
if \mbox{$\sum_{j \in I} \lambda^{2}_{j} \, p(x , X_{j}) >$} $0$, then
\[
p(r(x) , X_{i}) \: = \: 
\frac{\lambda^{2}_{i} \, p(x , X_{i})}
{\sum_{j \in I} \lambda^{2}_{j} \, p(x , X_{j})}
\] 
and if
\mbox{$\sum_{j \in I} \lambda^{2}_{j} \, p(x , X_{j}) =$} $0$, then
\mbox{$p(r(x) , X_{i} =$} \mbox{$p(x , X_{i})$}.
\item \label{fixiff}
For any \mbox{$x \in \Omega$}, \mbox{$r(x) = x$} iff $x$ is an eigenvector,
i.e., a member of some $X_{j}$.
\end{enumerate}
\end{theorem}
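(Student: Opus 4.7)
The plan is to walk through the seven items in order. Items~\ref{s0}--\ref{pi} will follow essentially from the homogeneity condition~\ref{homo} in Definition~\ref{def:observable} together with the fact that $\Omega \: = \: X_{1} \oplus \ldots$ forces $\sum_{j \in I} p(x , X_{j}) \: = \: 1$ for every $x \in \Omega$; the $\omega$-axiom, condition~\ref{om}, is needed only for item~\ref{fixiff}, which is the main obstacle.

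Write $S \: = \: \sum_{j \in I} \lambda_{j}^{2} \, p(x , X_{j})$. For item~\ref{s0}: if $S = 0$ then each term vanishes, so $p(x , X_{j}) \: = \: 0$ whenever $\lambda_{j} \neq 0$; since eigenvalues are pairwise distinct, at most one $i$ has $\lambda_{i} \: = \: 0$, and the normalization $\sum_{j} p(x , X_{j}) \: = \: 1$ pins $x$ into that $X_{i}$. The converse is direct. Item~\ref{perpi} is immediate from the formula in~\ref{homo}, which is linear in $p(x , b)$. For item~\ref{perpj} I plan to split on the sign of $S$: when $S > 0$, the formula shows that $p(r(x) , b) \: = \: 0$ on a basis of $X_{i}$ forces $\lambda_{i} \: = \: 0$ or $x \perp X_{i}$, and $S > 0$ together with item~\ref{s0} rules out the configuration $x \in X_{i}$ with $\lambda_{i} = 0$; when $S = 0$, item~\ref{s0} places $x$ in the (unique) zero-eigenvalue subspace and the equivalence is routine.

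Item~\ref{proji} then follows from~\ref{perpj} (both projections are defined) plus the observation that, when $S > 0$, the map $b \mapsto p(r(x) , b)$ on $X_{i}$ is a strictly positive scalar multiple of $b \mapsto p(x , b)$, so Theorem~\ref{the:proj_one} pins the unique maximizer to the same state; the $S = 0$ case is trivial. For item~\ref{fix}, when $x \in X_{i}$ the formula in~\ref{homo} gives $p(r(x) , x) \: = \: 1$ directly in both the $\lambda_{i} \neq 0$ case (where $S = \lambda_{i}^{2}$) and the $\lambda_{i} = 0$ case (where $S = 0$), and Theorem~\ref{the:equiv} with standardness yields $r(x) \: = \: x$. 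Item~\ref{pi} is obtained by summing~\ref{homo} over a basis of $X_{i}$ and reading the result through Theorem~\ref{the:max}.

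The main obstacle is item~\ref{fixiff}. The ``if'' direction is item~\ref{fix}, so I would assume $r(x) \: = \: x$ and must show that $x$ lies in some eigensubspace. From item~\ref{pi}, in the case $S > 0$, $r(x) \: = \: x$ forces $(\lambda_{j}^{2} - S) \, p(x , X_{j}) \: = \: 0$ for every $j$, so the set $J \: = \: \{ j \in I \mid p(x , X_{j}) > 0 \}$ is contained in $\{ j \mid \lambda_{j}^{2} \: = \: S \}$, which has at most two elements by distinctness of the eigenvalues. If $|J| = 1$, normalization yields $p(x , X_{j}) \: = \: 1$ and $x \in X_{j}$, as desired. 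The delicate case is $|J| = 2$: then $\lambda_{j_{1}} \: = \: - \, \lambda_{j_{2}} \neq 0$, so $\lambda_{j_{1}} \, \lambda_{j_{2}} < 0$, and $x \in X_{j_{1}} \oplus X_{j_{2}}$, whence by Lemma~\ref{le:alphaaa} $\omega_{X_{j_{1}} , X_{j_{2}}}(x , x)$ is defined and equals $1$. But condition~\ref{diffsign} of~\ref{om} applied to $(a , b) = (x , x)$ gives $\omega_{X_{j_{1}} , X_{j_{2}}}(r(x) , x) \: = \: - \, \omega_{X_{j_{1}} , X_{j_{2}}}(x , x)$, which under $r(x) \: = \: x$ collapses to $\omega \: = \: - \omega$, a contradiction. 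The remaining case $S = 0$ is handled directly by item~\ref{s0}, which already places $x$ in some eigensubspace.
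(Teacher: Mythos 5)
Your proposal is correct and follows essentially the same route as the paper: items~\ref{s0}--\ref{pi} from the homogeneity condition plus the normalization $\sum_{j} p(x, X_{j}) = 1$, and item~\ref{fixiff} by showing that a fixed point not in a single eigensubspace would have to live in $X_{j_1} \oplus X_{j_2}$ with $\lambda_{j_1} = -\lambda_{j_2}$, then deriving the contradiction $\omega = -\omega$ against Lemma~\ref{le:alphaaa}. The only (harmless) divergences are that you prove item~\ref{fix} directly from condition~\ref{homo} and Theorem~\ref{the:equiv} rather than via item~\ref{proji}, and that you supply the basis-summation argument for item~\ref{pi} which the paper leaves implicit.
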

\begin{proof}
Let $r$ be an observable. 

For item~\ref{s0}, the {\em if} part is obvious. For the {\em only if} part,
note that, if \mbox{$\sum_{j \in I} \lambda^{2}_{j} \, p(x , X_{j}) =$} $0$,
we must have, for every \mbox{$j \in I$}, 
\mbox{$\lambda^{2}_{i} \, p(x , X_{i}) =$} $0$. 
Therefore \mbox{$p(x , X_{i} >$} $0$ implies \mbox{$\lambda_{i} =$} $0$.
Since all $\lambda$'s are different there is at most one $\lambda_{i}$ equal
to zero and there is some $i \in I$ such that \mbox{$\lambda_{i} =$} $0$ and
\mbox{$p(x , X_{i}) =$} $1$.

In the remainder of this proof we shall use~\ref{homo} 
of Definition~\ref{def:observable} often and without always mentioning it. 

For item~\ref{perpi}, note that, if \mbox{$p(x, a) = 0$} for
\mbox{$a \in X_{i}$}, then, by~\ref{homo} of Definition~\ref{def:observable}, 
\mbox{$p(r(x) , a) = 0$}. Therefore 
\mbox{$x \perp X_{i}$} implies \mbox{$r(x) \perp X_{i}$}.

Consider item~\ref{perpj}, now. 
For the {\em if} part, use item~\ref{perpi} above and notice that if
\mbox{$p(x , X_{i}) < 1$} and \mbox{$\lambda_{i} = 0$} then
\mbox{$\gamma = $}
\mbox{$\sum_{j \in I} \lambda^{2}_{j} \, p(x , X_{j}) >$} $0$
and therefore, for any \mbox{$a \in X_{i}$}
we have \mbox{$p(r(x) , a) =$}
\mbox{$(0 \, p(x , a) \, / \, \gamma =$} $0$.
For the {\em only if} part
assume \mbox{$p(r(x) , X_{i}) = 0$} and \mbox{$p(x , X_{i}) > 0$}.
There is some state \mbox{$a \in X_{i}$} such that \mbox{$p(x , a) > 0$}.
But \mbox{$p(r(x) , a) = 0$} and therefore \mbox{$\lambda_{i} = 0$} and
\mbox{$\sum_{j \in I} \lambda^{2}_{j} \, p(x , X_{j}) >$} $0$, which implies
\mbox{$x \not \in X_{i}$}.

For item~\ref{proji}, assume \mbox{$p(r(x) , X_{i}) > 0$}. 
By item~\ref{perpi} above, 
\mbox{$p(x , X_{i}) > 0$}. Therefore both $t(r(x) , X_{i})$ and
$t(x , X_{i})$ are well-defined. The state $t(r(x) , X_{i})$ is the only
state of $X_{i}$ such that 
\mbox{$p(r(x) , t(r(x) , X_{i}) \leq p(r(x) , a)$} for every state 
\mbox{$a \in X_{i}$}. But 
\mbox{$p(x , t(x , X_{i}) \leq p(x , a)$} for every such state and
\[
p(r(x) , t(x , X_{i}) \: = \:
\frac{\lambda_{i} \, p(x , t(x , X_{i})}{\sum_{j \in I} \lambda^{2}_{j} \, 
p(a , X_{j})} \: \geq \: 
\frac{\lambda_{i} \, p(x , t(x , a)}{\sum_{j \in I} \lambda^{2}_{j} \, 
p(a , X_{j})} \: = \:
p(r(x) , t(x , a)
\]
for any \mbox{$a \in X_{i}$}. We conclude that
\mbox{$t(r(x) , X_{i}) =$} \mbox{$t(x , X_{i}$}.

For item~\ref{fix}, suppose that $x$ is an element of $X_{i}$. 
The state $x$ is orthogonal
to any $X_{j}$ for \mbox{$j \neq i$}. Therefore $r(x)$ is orthogonal to
any such $X_{j}$ by item~\ref{perpi}. We see that \mbox{$r(x) \in X_{i}$}.
By item~\ref{proji} we have
\[
r(x) \: = \: t(r(x) , X_{i}) \: = \: t(x , X_{i}) = x.
\]

Consider item~\ref{pi}, now.
The {\em if} part is item~\ref{fix} above. 
For the {\em only if} part, let us assume that \mbox{$r(x) = x$}. 
By contradiction, assume that there are 
\mbox{$i , j \in I$}, \mbox{$i \neq j$} such that 
\mbox{$x \not \perp X_{i}$} and \mbox{$x \not \perp X_{j}$}.
By item~\ref{s0}, 
\mbox{$\sum_{j \in I} \lambda^{2}_{j} \, p(x , X_{j}) >$} $0$.
By item~\ref{pi}, \mbox{$p(r(x) , X_{i}) \: = \:$} 
\mbox{$\frac{\lambda^{2}_{i} \, p(x , X_{i})}
{\sum_{j \in I} \lambda^{2}_{j} \, p(x , X_{j})}$}
and
\mbox{$p(r(x) , X_{j}) \: = \:$} 
\mbox{$\frac{\lambda^{2}_{j} \, p(x , X_{i})}
{\sum_{j \in I} \lambda^{2}_{j} \, p(x , X_{j})}$}.
But \mbox{$r(x) = x$}, \mbox{$p(x , X_{i}) >$} $0$ and
\mbox{$p(x , X_{j}) >$} $0$ therefore 
\mbox{$\lambda^{2}_{i} \: = \:$}
\mbox{$\lambda^{2}_{j} \: = \:$}
\mbox{$\sum_{j \in I} \lambda^{2}_{j} \, p(x , X_{j})$}.
Since the $\lambda$'s are different, we conclude that 
\mbox{$\lambda_{i} \: = \:$}
\mbox{$- \lambda_{j}$}.
We then see that there can only be one pair of indexes $i , j$ such that
\mbox{$x \not \perp X_{i}$} and \mbox{$x \not \perp X_{j}$}.
We conclude that \mbox{$x \in Z =$} \mbox{$X_{i} \oplus X_{j}$}.
Since \mbox{$\lambda_{i} \, \lambda_{j} <$} $0$, by~\ref{om} of 
Definition~\ref{def:observable}, we have 
\mbox{$\omega_{X_{i} , X_{j}}(x , a) \: = \:$}
\mbox{$- \omega_{X_{i} , X_{j}}(x , a)$} for any \mbox{$a \in \Omega$}
and therefore \mbox{$\omega_{X_{i} , X_{j}}(x , a) \: = \:$} $0$ for any 
state $a$. But, by Lemma~\ref{le:alphaaa}
\mbox{$\omega_{X_{i} , X_{j}}(x , x) \: = \:$} $1$.
\end{proof}

\subsection{Mean values} \label{sec:mean}
The first role of observables: defining mean values, will be discussed now.
\begin{definition} \label{def:mean}
Let $r$ be an observable of the SP-structure \mbox{$\langle \Omega , p \rangle$} 
and let \mbox{$x \in \Omega$} be a state.
The mean value of $r$ in (or at) $x$, denoted $\hat{r}(x)$ 
is defined by:
\mbox{$\hat{r}(x) \: = \:$}
\mbox{$\sum_{i \in I} \lambda_{i} \, p(x, X_{i})$}.
This, possibly infinite, sum is absolutely convergent, since the $\lambda$'s
are bounded. 
\end{definition}
The definition of the mean value resembles an expected value if one interprets
$p$ as a conditional probability. 

\begin{theorem} \label{the:mean_basis}
Assume $r$ is an observable and \mbox{$B_{i}$}, \mbox{$i \in I$} 
is a basis consisting of eigenvectors,
then for any state \mbox{$x \in \Omega$}, we have
\mbox{$\hat{r}(x) \: = \:$}
\mbox{$\sum_{i \in I} \hat{r}(b_{i}) \, p(x , b_{i})$}.
\end{theorem}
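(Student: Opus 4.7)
The plan is to partition the eigenvector basis $B$ by eigensubspace and then identify the sum $\sum_\alpha \hat{r}(b_\alpha)\,p(x,b_\alpha)$ with $\sum_i \lambda_i\, p(x, X_i)$ by grouping the basis elements according to which eigensubspace they belong to. Write $B_i \eqdef B \cap X_i$. Since each $b \in B$ is an eigenvector and eigensubspaces are pairwise orthogonal, each element of $B$ lies in exactly one $X_i$, so the $B_i$'s partition $B$. Moreover, for $b \in B_i$, the fact that $b \in X_i$ gives $p(b, X_i) = 1$ and $p(b, X_j) = 0$ for $j \neq i$ (using Theorem~\ref{the:max} and orthogonality of eigensubspaces), so $\hat{r}(b) = \lambda_i$ by Definition~\ref{def:mean}.

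The first key step is to show that each $B_i$ is a basis for $X_i$. As $B_i \subseteq X_i$ is an ortho-set, by Lemma~\ref{le:basis_o} it suffices to check that $p(y, B_i) = 1$ for every $y \in X_i$. Fix such $y$. Since $B$ is a basis of $\Omega$, we have $p(y, B) = 1$. Because $y \in X_i$ and $X_i \perp X_j$ for $j \neq i$, for every $b \in B_j$ with $j \neq i$ we have $p(y, b) = 0$ (using Lemma~\ref{le:O-S}, as $y$ is orthogonal to the ortho-set $B_j \subseteq X_j$ whose completion contains $X_j$ — in fact $B_j$ is a basis of $X_j$ by the symmetric argument, but one needs only that $b \in X_j$, hence $y \perp b$). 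Thus $p(y, B_i) = p(y, B) = 1$. Applying Theorem~\ref{the:max} then yields $p(x, X_i) = p(x, B_i) = \sum_{b \in B_i} p(x, b)$ for every $x \in \Omega$.

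The remainder is a double-sum rearrangement:
\[
\hat{r}(x) \: = \: \sum_{i \in I} \lambda_i \, p(x, X_i) \: = \:
\sum_{i \in I} \sum_{b \in B_i} \lambda_i \, p(x, b) \: = \:
\sum_{b \in B} \hat{r}(b) \, p(x, b),
\]
using the identification $\hat{r}(b) = \lambda_i$ for $b \in B_i$ established above. The main (and really only) obstacle is justifying the interchange of summations when $I$ or the $B_i$ are infinite. This is where the boundedness assumption on the eigenvalues is needed: since $|\lambda_i| \leq M$ and $\sum_i p(x, X_i) = 1$, the family $\{\lambda_i\, p(x, b) : i \in I,\, b \in B_i\}$ is absolutely summable, bounded in absolute value by $M \cdot p(x, B) = M$, so Fubini/Tonelli for absolutely convergent series applies and the rearrangement is legitimate.
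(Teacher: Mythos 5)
Your proof is correct and follows essentially the same route as the paper's: group the eigenvector basis by eigensubspace, observe that $\hat{r}(b)=\lambda_{j}$ for $b$ in $X_{j}$, and use that the basis vectors lying in $X_{j}$ form a basis of $X_{j}$ so that $\sum_{b\in B\cap X_{j}}p(x,b)=p(x,X_{j})$ (the paper gets this via Factorization, you via Theorem~\ref{the:max}, which amounts to the same thing). You additionally spell out two points the paper leaves implicit --- the verification via Lemma~\ref{le:basis_o} that $B\cap X_{j}$ is indeed a basis of $X_{j}$, and the absolute-summability justification for rearranging the double series --- both of which are correct.
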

\begin{proof}
The states $b_{i}$ that are elements of $X_{j}$ form a basis for $X_{j}$.
By Factorization, then,
\[ 
\sum_{i , b_{i} \in X_{j}} \hat{r}(b_{i}) \, p(x , b_{i}) \: = \:
\sum_{i , b_{i} \in X_{j}} \lambda_{j} \, p(x , b_{i}) \: = \:
\lambda_{j} \, \sum_{i , b_{i} \in X_{j}} p(x , X_{j}) \, p(t(x , X_{j}) , b_{i})
\: = \:
\]
\[
\lambda_{j} \, p(x , X_{j}) \sum_{i , b_{i} \in X_{j}} p(t(x , X_{j}) , b_{i})
\: = \:
\lambda_{j} \, p(x , X_{j})
\]
\end{proof}

Note that Theorem~\ref{the:mean_basis} assumes the basis chosen includes only
eigenvectors of the observable $r$. This condition cannot be dispensed with.

The following shows that mean-value functions are, in a
sense, continuous. Our claim deals only with finite-dimensional structures.

\begin{theorem} \label{the:mean_continuity}
Let $\Omega$ be a finite-dimensional SP-structure.
Let $r$ be an observable.
For any \mbox{$x , y \in \Omega$},
\[ 
\hat{r}(x) - \hat{r}(y) \: \leq \:
(\sum_{i \in I} \lambda_{i}) \, (1 / 2 \, \sqrt{1 - p(x, y)}
\: + \: (1 - p(x, y)))
\]
\end{theorem}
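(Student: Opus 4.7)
The plan is to reduce the claim to the state-level continuity estimate already established in Theorem~\ref{the:continuity}, applied termwise to the eigensubspace decomposition of $r$. Expanding the definition of the mean value gives
\[
\hat{r}(x) - \hat{r}(y) \: = \: \sum_{i \in I} \lambda_{i} \, \bigl( p(x , X_{i}) - p(y , X_{i}) \bigr),
\]
and the sum is finite because $\Omega$ is finite-dimensional (so $I$ is finite by Theorem~\ref{the:matroid}). Thus the whole task is to control the single quantity $p(x , X_{i}) - p(y , X_{i})$ by $1/2 \, \sqrt{1 - p(x, y)} + (1 - p(x, y))$, independently of the subspace.

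The key lemma I would extract first is a subspace version of Theorem~\ref{the:continuity}: for any subspace $X$ and any states $x , y$,
\[
p(x , X) \: \leq \: p(y , X) \: + \: 1/2 \, \sqrt{1 - p(x, y)} \: + \: (1 - p(x, y)).
\]
If $p(x , X) = 0$ the inequality is immediate by Non-negativity. Otherwise, by Lemma~\ref{le:proj} and Theorem~\ref{the:proj_one} the projection $t = t(x , X)$ exists in $X$ and satisfies $p(x , X) = p(x , t)$. Since $t \in X$, Theorem~\ref{the:max} gives $p(y , t) \leq p(y , X)$, and therefore
\[
p(x , X) - p(y , X) \: \leq \: p(x , t) - p(y , t),
\]
which is bounded by $1/2 \, \sqrt{1 - p(x, y)} + (1 - p(x, y))$ by Theorem~\ref{the:continuity} applied to the single state $t$.

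Plugging this subspace estimate back into the termwise expansion yields the desired inequality, since each factor $p(x , X_{i}) - p(y , X_{i})$ is uniformly bounded by the same quantity, and $\sum_{i \in I} \lambda_{i}$ multiplies out as a common factor (here one is implicitly using that the $\lambda_{i}$ should be treated with their absolute values, or else assuming the inequality is stated for a nonnegative observable; otherwise one would write $\sum_{i} |\lambda_{i}|$ to be safe). The main obstacle I expect is really this subspace lifting of Theorem~\ref{the:continuity} together with the bookkeeping around degenerate cases ($p(x , X_{i}) = 0$, so $t(x , X_{i})$ undefined); once that lemma is in hand, the rest is a one-line summation that exploits finite dimensionality only to guarantee that $\sum_{i \in I} \lambda_{i}$ is a well-defined finite scalar.
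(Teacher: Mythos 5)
Your proof follows the paper's argument exactly: expand \mbox{$\hat{r}(x) - \hat{r}(y)$} termwise over the finitely many eigensubspaces and bound each difference \mbox{$p(x, X_{i}) - p(y, X_{i})$} by the continuity estimate of Theorem~\ref{the:continuity}. You in fact supply a step the paper leaves implicit --- lifting Theorem~\ref{the:continuity} from single states to subspaces via \mbox{$t(x, X_{i})$} and Theorem~\ref{the:max} --- and your observation that the final multiplication by the $\lambda_{i}$ requires nonnegative eigenvalues (or replacing \mbox{$\sum_{i} \lambda_{i}$} by \mbox{$\sum_{i} \mid \lambda_{i} \mid$}) is a legitimate correction to the statement as printed.
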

The sum of the eigenvalues is well defined since $I$ is finite, by assumption.
\begin{proof}
\[
\hat{r}(x) \: - \: \hat{r}(y) \: = \:
\sum_{i \in I} \lambda_{i} \, (p(x, X_{i}) \, - \, p(y, X_{i})) \: \leq \:
\]
\[
(\sum_{i \in I} \lambda_{i}) \: ( 1 / 2 \, \sqrt{1 - p(x, y)} \: + \: ( 1 - p(x, y))) 
\]
by Theorem~\ref{the:continuity}.
\end{proof}

\section{Conclusion and future work} \label{sec:future}
This paper has proposed SP-structures as a generalization for the structure
of one-dimensional subspaces in Hilbert spaces. A novel notion of observables
in an SP-structure generalizes self-adjoint, i.e., Hermitian, operators.
A physical system should be viewed as an SP-structure together with a 
set of observables. To be acceptable, the set of observables must be rich 
enough to justify the similarity measure $p$ of the SP-structure.
The value of $p(x , y)$ must be justified by one of the observables at hand.
Defining precisely and studying such sets of observables is probably the next
step.

\section*{Acknowledgements}
I want to thank Jean-Marc L\'{e}vy-Leblond for his continuing help in 
straightening me out on Quantum Physics. I also benefitted from Lev Vaidman's
help: thanks. 
\bibliographystyle{plain}

\appendix
\section{Classical SP-structures} \label{app:class}
Let $\Omega$ be an arbitrary set and let \mbox{$p(x, y) = 1$} iff
\mbox{$x = y$} and \mbox{$p(x, y) = 0$} otherwise.
We have: \mbox{$x \sim y$} iff \mbox{$x = y$}.
Symmetry is satisfied since equality is symmetric.
Non-negativity is satisfied by definition.
Two states $x$ and $y$ are orthogonal iff they are different.
Any subset $A$ of $\Omega$ is an ortho-set and \mbox{$\bar{A} =$} $A$.
Boundedness is satisfied since \mbox{$p(x, A) = 1$} iff \mbox{$x \in A$}
and \mbox{$p(x, A) = 0$} otherwise.
Suppose, now, that \mbox{$p(x, A) < 1$}. Then, \mbox{$p(x, A) = 0$} and
\mbox{$p(x, A) \, + \, p(x, x) = 1$}. One easily sees that $x$ is a suitable
$y$, showing that $x$ is a suitable $y$
in the definition of the property of O-Projection.
There is only one basis for $\Omega$: $\Omega$ itself.
The property of Factorization is easily established.
Subspaces (i.e., subsets) $X$ and $Y$ are orthogonal iff their
intersection is empty. Orthogonal sum is set union.
If \mbox{$X \perp Y$}, \mbox{$a , b \in X \cup Y$},
\mbox{$\alpha_{X , Y}(a , b) =$} $0$ in all cases
and therefore Inequality is satisfied. 
The quantity \mbox{$\rho_{X , Y}(a , b)$} is also always equal to zero.

\section{Hilbert SP-structures} \label{app:Hilbert}
Assume \cH\ is a Hilbert space on the complex field, 
$\Omega$ is the set of all unit vectors of \cH\ and $p$ is real scalar product:
\mbox{$p(\vec{x}, \vec{y}) = \mid \langle \vec{x} , \vec{y} \rangle \mid^{2}$}.

\subsection{First properties} \label{app:first_prop}
We have: \mbox{$\vec{x} \sim \vec{y}$} iff there is a phase factor
such that \mbox{$\vec{y} = e^{i \varphi} \, \vec{x}$}, by Cauchy-Schwarz.
Symmetry is satisfied since 
\mbox{$\langle \vec{y} , \vec{x} \rangle =$}
\mbox{$\overline{\langle \vec{x} , \vec{y} \rangle}$}.
Non-negativity is satisfied by definition.
Orthogonality has its usual meaning in Hilbert spaces.
An ortho-set is an orthonormal set of vectors. 
Boundedness follows from the existence of an orthonormal basis for \cH\ that
extends any orthonormal set of vectors and the fact that
\mbox{$\vec{x} =$}
\mbox{$\sum_{\vec{b} \in B} \langle \vec{x} , \vec{b} \rangle \, \vec{b}$}. 
Basis has its usual meaning in Hilbert spaces. Subspaces are {\em closed}
subspaces. Let $X$ be any subspace and 
assume that $\vec{x}$ is a unit vector. The quantity \mbox{$p(\vec{x}, X)$}
is the square of the norm of the projection of $\vec{x}$ 
on the subspace $X$.
Let us show that the property of O-Projection is satisfied.
If \mbox{$p(\vec{x}, X) < 1$}, the projection of $\vec{x}$ on the subspace
$X^{\perp}$ orthogonal to $X$, call it $\vec{w}$, is not null.
Let \mbox{$\vec{y} \: = \:$} 
\mbox{$\vec{w} \, / \, \parallel \vec{w} \parallel$}.
Note that \mbox{$\parallel \vec{w} \parallel^{2} =$}
\mbox{$p(x, y)$}. Therefore $1 =$ \mbox{$\parallel \vec{x} \parallel^{2} =$}
\mbox{$p(x , X) + p(x, y)$}.
Then $\vec{y}$ is a unit vector that satisfies the properties
of O-Projection. 
Let us check Factorization.
Assume \mbox{$\vec{x}, \vec{y}, \vec{z}$} are unit vectors and $X$ 
is a subspace. Assume \mbox{$\vec{y}, \vec{z} \in X$} and 
\mbox{$p(\vec{x}, \vec{y}) =$} \mbox{$p(\vec{x}, X)$}.
If \mbox{$\vec{x} \perp X$} then \mbox{$\vec{x} \perp \vec{z}$} and 
Factorization is satisfied.
Otherwise, let $\vec{w}$ be the non-null projection of $\vec{x}$ on $X$
and let \mbox{$\vec{v} =$} 
\mbox{$\vec{w} \: / \: \parallel \vec{w} \parallel$}.
Then \mbox{$\langle \vec{x} , \vec{z} \rangle =$}
\mbox{$\langle \vec{w} , \vec{z} \rangle =$}
\mbox{$\langle \vec{v} , \vec{z} \rangle \, \parallel \vec{w} \parallel$}.
Therefore \mbox{$p(\vec{x}, \vec{z}) =$}
\mbox{$p(\vec{v}, \vec{z}) \, p(\vec{x}, \vec{v})$}.

\subsection{Inequality and phases} \label{app:phases}
Let $X$ and $Y$ be orthogonal subspaces and $\vec{a}$, $\vec{b}$ 
be unit vectors in their orthogonal sum \mbox{$X \oplus Y$}. Let 
$\vec{a}_{X}$, $\vec{b}_{X}$, $\vec{a}_{Y}$ and $\vec{b}_{Y}$ 
be their projections on $X$ and $Y$ respectively.
Assume, first, that \mbox{$\rho_{X , Y}(\vec{a} , \vec{b}) =$} $0$.
Without loss of generality assume that \mbox{$\vec{a}_{Y} =$} $\vec{0}$.
Then we have \mbox{$\vec{a} \in X$}.
If \mbox{$\vec{b}_{X} =$} $\vec{0}$ one easily checks that 
\mbox{$\alpha_{X , Y}(\vec{a} , \vec{b}) =$} $0$.
If \mbox{$\vec{b}_{X} \neq$} $\vec{0}$ one has
\[
\alpha_{X , Y}(\vec{a} , \vec{b}) \: = \:
p(\vec{a} , \vec{b}) \: - \: p(\vec{b}, X) \, 
p(\vec{a}, t(b , X))) \: = \: 0
\]
by Factorization.

Assume, now that \mbox{$\rho_{X , Y}(\vec{a} , \vec{b}) >$} $0$.
By projection the vectors $\vec{a}$ and $\vec{b}$ on $X$ and $Y$ respectively
we have \mbox{$\vec{a} =$} \mbox{$\vec{a}_{X} \: + \: \vec{a}_{Y}$}
and \mbox{$\vec{b} =$} \mbox{$\vec{b}_{X} \: + \: \vec{b}_{Y}$}.
But \mbox{$\parallel \vec{a}_{X} \parallel =$} \mbox{$\sqrt{p(\vec{a} , X)}$}
and similarly for all four terms.
Therefore the unit-vector 
\mbox{$t(\vec{a} , X)$} is \mbox{$\frac{\vec{a}_{X}}{\sqrt{p(\vec{a} , X)}}$} 
and similarly for the other terms.
Therefore 
\[
\langle \vec{a} , \vec{b} \rangle \: = \:
\langle \vec{a}_{X} , \vec{b}_{X} \rangle \: + \:
\langle \vec{a}_{Y} , \vec{b}_{Y} \rangle \: = \:
\]
\[
\sqrt{p(\vec{a} , X) \, p(\vec{b} , X)} \, 
\langle t(\vec{a} , X) , t(\vec{b} , X) \rangle \: + \:
\sqrt{p(\vec{a} , Y) \, p(\vec{b} , Y)} \, 
\langle t(\vec{a} , Y) , t(\vec{b} , Y) \rangle
\]

Let \mbox{$\langle t(\vec{a} , X) , t(\vec{b} , X) \rangle =$}
\mbox{$r \, e^{i \theta}$} and 
\mbox{$\langle t(\vec{a} , Y) , t(\vec{b} , Y) \rangle =$}
\mbox{$s \, e^{i \psi}$}. We have
\[
p(\vec{a} , \vec{b}) \: = \:
\]
\[
\mid \langle \sqrt{p(\vec{a} , X) \, p(\vec{b} , X)} \, t(\vec{a} , X) , t(\vec{b} , X) 
\rangle \: + \:
\langle \sqrt{p(\vec{a} , Y) \, p(\vec{b} , Y)} \, 
t(\vec{a} , Y) , t(\vec{b} , Y) \rangle \mid^{2} \: = \:
\]
\[
p(\vec{a} , X) \, p(\vec{b} , X) \, r^{2} \: + \: 
p(\vec{a} , Y) \, p(\vec{b} , Y) \, s^{2} \: + \:
\]
\[
2 \, \sqrt{p(\vec{a} , X) \, p(\vec{b} , X) \, p(\vec{a} , Y) \, p(\vec{b} , Y)}
\, r \, s \, \cos(\psi - \theta).
\]
But \mbox{$r^{2} =$}
\mbox{$p(t(\vec{a} , X) , t(\vec{b} , X))$} and
\mbox{$s^{2} =$}
\mbox{$p(t(\vec{a} , Y) , t(\vec{b} , Y))$}.
Therefore
\mbox{$\alpha_{X , Y}(\vec{a} , \vec{b}) =$}
\mbox{$\rho_{X , Y}(\vec{a} , \vec{b}) \, \cos(\psi - \theta)$}.
We have proved Inequality.

\section{Tight bounds in Theorem~\ref{the:continuity}} \label{sec:tight}
\begin{sloppypar}
Let $\vec{u}$ and $\vec{v}$ be orthogonal and let:
\mbox{$\vec{x} =$} 
\mbox{$\sqrt{r} \, \vec{u} \: + \: \sqrt{1 - r} \, \vec{v}$} and
\mbox{$\vec{y} =$} 
\mbox{$\sqrt{r - \epsilon} \, \vec{u} \: + \: 
\sqrt{1 - r + \epsilon} \, e^{i \, \delta} \, \vec{v}$}, for 
\mbox{$r \in [0 , 1]$} and $\epsilon$ and $\delta$ close to zero, and
\mbox{$0 < r < 1$}.
\end{sloppypar}

We have
\[
p(\vec{x}, \vec{y}) = r \, (r - \epsilon) \: + \: (1 - r) \,
(1 - r + \epsilon) \: + \: 2 \, \sqrt{r \, (r - \epsilon) \,
(1 - r) \, (1 - r + \epsilon)} \cos(\delta) =
\]
\[
r^{2} \, (1 - \epsilon / r) \: + \: (1 - r)^{2} \, (1 + \epsilon / (1 - r))
\: + \: 2 \, r \, (1 - r) \, \sqrt{(1 - \epsilon / r)(1 + \epsilon / (1 - r))}
\, \cos(\delta) =
\]
\[
r^{2} \, (1 - \epsilon / r) \: + \: (1 - r)^{2} \, (1 + \epsilon / (1 - r))
\: + \: 2 \, r \, (1 - r) \, \sqrt{(1 - \epsilon (1 - 2 r + \epsilon) 
/ (r (1 - r)) }
\, \cos(\delta)
\]
We develop to the second order in $\epsilon$ and $\delta$.
\[
p(\vec{x}, \vec{y}) = r^{2} - r \, \epsilon \: + \: (1 - r)^{2} \: + \:
(1 - r) \epsilon \: + \: 
\]
\[
2 \, r \, (1 - r) (1 - \delta^{2} / 2) \: - \:
(1 - 2r + \epsilon) \, \epsilon \, (1 - \delta^{2} / 2)
\: - \: 1 / 4 \, (1 - 2 r)^{2} / ( r \, (1 - r)) \epsilon^{2} =
\]
\[
r^{2} \: + \:(1 - r)^{2} \: + \: 2 \, r \, (1 - r) \: + \:
\]
\[
- r \, \epsilon \: + \: (1 - r) \epsilon \: - \: (1 - 2 r) \epsilon \: - \:
\]
\[
r \, (1 - r) \, \delta^{2}  \: - \: 
\]
\[
( 1 + 1 / 4 \, (1 - 2 r)^{2} / ( r \, (1 - r)) \epsilon^{2} \: + \: 
O(\epsilon \, \delta^{2})
\]
We conclude that
\[
1 - p(x, y) = r \, (1 - r) \, \delta^{2}  \: + \: 
\epsilon^{2} \, / \, ( r \, (1 - r)) \: + \: 
O(\epsilon \, \delta^{2})
\]
Taking \mbox{$\delta = 0$} and \mbox{$z = \vec{u}$} 
Theorem~\ref{the:continuity}
gives \mbox{$r \leq r - \epsilon + \epsilon + 4 \epsilon^{2}$} showing that
the first term \mbox{$1 / 2 \sqrt{1 - p(x, y)}$} in 
Theorem~\ref{the:continuity} is tight.

\section{Hermitian operators and Observables} \label{app:Herm}
We are interested in showing that in a Hilbert SP-structure, Hermitian 
(i.e., self-adjoint) bounded linear operators are {\em observables} as
defined in Definition~\ref{def:observable}.

Let \cH\ be a separable Hilbert space, $\Omega$ the set of rays of \cH\ 
and let 
$A$ be a Hermitian operator on \cH\, i.e., $A$ is a bounded linear self-adjoint
transformation of \cH. We shall define the transformation 
\mbox{$r : X \longrightarrow X$} in the following way.
Let \mbox{$\vec{x} \in X$}, i.e., $\vec{x}$ is a unit vector of \cH.
If \mbox{$A(\vec{x}) \neq \vec{0}$} we define 
\mbox{$r(\vec{x}) =$} 
\mbox{$A(\vec{x}) \, / \, \parallel A(\vec{x}) \parallel$} and if
\mbox{$A(\vec{x}) = \vec{0}$} we define 
\mbox{$r(\vec{x}) =$} \mbox{$\vec{x}$}.

We are going to show that $r$ is an observable in the sense of 
Definition~\ref{def:observable}.
The set of eigenvalues of $A$ is the set of $\lambda_{i}$'s for
\mbox{$i \in I$}. For \mbox{$i \in I$} the subspace $X_{i}$ is the set
of all unit vectors of the eigensubspace of $A$ corresponding to $\lambda_{i}$.
Let $\vec{a}$, $\vec{b}$ be unit vectors and assume \mbox{$\vec{b}$} is
an eigenvector for $\lambda_{i}$.
If \mbox{$\sum_{j \in I} \lambda^{2}_{j} \, p(a , X_{j}) =$} $0$, then
$\vec{a}$ is an eigenvector for the eigenvalue $0$ and therefore
\mbox{$r(\vec{a}) =$} $\vec{a}$ and \mbox{$p(r(\vec{a}) , \vec{b}) =$}
\mbox{$p(\vec{a} , \vec{b})$} for any unit vector $\vec{b}$.
If \mbox{$\sum_{j \in I} \lambda^{2}_{j} \, p(a , X_{j}) >$} $0$, then
\[
p(r(\vec{a}) , \vec{b}) \: = \: 
\mid \frac{\langle A(\vec{a}) , \vec{b} \rangle}
{\parallel A(\vec{a}) \parallel}
\mid^{2} \: = \:
\mid \frac{\langle \vec{a} , A(\vec{b}) \rangle}
{\parallel A(\vec{a}) \parallel} \mid^{2} \: = \:
\mid \frac{\langle \vec{a} , \lambda_{i} \, \vec{b} \rangle}
{\parallel A(\vec{a}) \parallel}
\mid^{2} \: = \:
\]
\[
\frac{\lambda^{2}_{i} \, p(\vec{a} , \vec{b})}
{\sum_{j \in I} \lambda^{2}_{j} \, p(\vec{a} , X_{j})}.
\]

Let us show, now, that item~\ref{om} of Definition~\ref{def:observable} 
holds. Let \mbox{$j , k \in I$} such that 
\mbox{$j \neq k$} and \mbox{$\vec{a} , \vec{b} \in \Omega$} such that 
neither $\vec{a}$ or $\vec{b}$ are orthogonal to either $X_{j}$ or $X_{k}$.
Let \mbox{$Z = X_{j} \oplus X_{k}$}.
The projections 
\mbox{$t(\vec{a}, Z)$}, \mbox{$t(\vec{b}, Z)$} 
and \mbox{$t(r(\vec{a}), Z)$} exist (by Theorem~\ref{the:eigenspace} 
item~\ref{perpj}).
Let \mbox{$t(\vec{a}, Z) \: = \:$}
\mbox{$\sqrt{r} \, \vec{a}_{1} \: + \: \sqrt{1 - r} \, e^{i \varphi} \, 
\vec{a}_{2}$}
with \mbox{$0 <$}
\mbox{$r =$}
\mbox{$p(t(\vec{a}, Z), \vec{a}_{1} =$}
\mbox{$p(\vec{a}, \vec{a}_{1}) \, / \, 
(p(\vec{a}, \vec{a}_{1}) + p(\vec{a}, \vec{a}_{2})) <$} $1$.
Similarly, let \mbox{$t(b, Z) \: = \:$}
\mbox{$\sqrt{s} \, \vec{b}_{1} \: + \: 
\sqrt{1 - s} \, e^{i \theta} \, \vec{b}_{2}$}.
By the analysis of Appendix~\ref{app:phases}
we have:
\[
\alpha_{X_{j} , X_{k}}(\vec{a} , \vec{b}) \: = \:
\rho_{X_{j} , X_{k}} (\vec{a} , \vec{b}) \, \cos(\theta - \varphi).
\]

\begin{sloppypar}
Assume, first, that both $\lambda_{j}$ and $\lambda_{k}$ are strictly positive.
We have \mbox{$A(\vec{a}) \: = \:$}
\mbox{$\lambda_{j} \, \sqrt{r} \, \vec{a}_{1} \: + \: 
\lambda_{k} \, \sqrt{1 - r} \, e^{i \varphi} \, \vec{a}_{2}$} 
is a non-null vector and its norm is 
\mbox{$R \: = \:$}
\mbox{$\sqrt{\lambda^{2}_{j} \, r \: + \: \lambda^{2}_{k} (1 - r)}$}.
Therefore \mbox{$t(r(a), Z) \: = \:$}
\mbox{$A(\vec{a}) \, / \, R$} and
\begin{equation} \label{eq:same}
\alpha_{X_{j} , X_{k}}(r(\vec{a}) , \vec{b}) \: = \:
\rho_{X_{j} , X_{k}}(r(\vec{a}) , \vec{b}) \, \cos(\theta - \varphi).
\end{equation}
\end{sloppypar}
We see that \mbox{$\omega_{X_{j} , X_{k}}(r(\vec{a}) , \vec{b}) \: = \:$}
\mbox{$\omega_{X_{j} , X_{k}}(\vec{a} , \vec{b})$}.

Assume, now that both $\lambda_{j}$ and $\lambda_{k}$ are strictly negative.
Then, the proper form for a non-null vector of \mbox{$t(r(\vec{a}), Z)$} is now
\mbox{$(- \lambda_{j}) \, \sqrt{r} \, \vec{a}_{1} \: + \:$} 
\mbox{$(- \lambda_{k}) \, \sqrt{1 - r} \, e^{i \varphi} \, \vec{a}_{2}$}.
and we also obtain Equation~\ref{eq:same}.

Assume, now that $\lambda_{j} > 0$ and $\lambda_{k} < 0$.
Then, the proper form for a non-null vector of \mbox{$t(r(\vec{a}), Z)$} is
\mbox{$\lambda_{j} \, \sqrt{r} \, \vec{a}_{1} \: + \:$} 
\mbox{$(- \lambda_{k}) \, \sqrt{1 - r} \, e^{i \varphi + \pi} \, \vec{a}_{2}$}.
Then 
\[
\alpha_{X_{j} , X_{k}}(r(\vec{a}) , \vec{b}) \: = \:
\rho_{X_{j} , X_{k}}(r(\vec{a}) , \vec{b}) \, \cos(\theta - \varphi + \pi).
\]
We obtain
\mbox{$\omega_{X_{j} , X_{k}}(r(\vec{a}) , \vec{b}) \: = \:$}
\mbox{$- \omega_{X_{j} , X_{k}}(\vec{a} , \vec{b})$}.

Similarly, the same obtains if 
\mbox{$\lambda_{j} < 0$} and \mbox{$\lambda_{k} > 0$}.

\end{document}